\documentclass[a4paper]{article}
\usepackage[utf8]{inputenc}
\usepackage{geometry}
\usepackage{tikz}

\usepackage[overload]{empheq}

\usepackage{hyperref}

\usepackage{dsfont}
\usepackage{amsmath}
\usepackage{amsthm}
\usepackage{mathtools}
\usepackage{amssymb}
\usepackage{amstext}
\usepackage{amsfonts}
\usepackage{nicefrac}
\usepackage{bbm}
\usepackage{tcolorbox}
\usepackage[thinc]{esdiff}

\usepackage{ stmaryrd }
\usepackage{enumerate}
\usepackage{graphicx}
\usepackage{color}
\usepackage{caption}
\usepackage{subcaption}
\usepackage{adjustbox}
\graphicspath{{figures/}}
\newcommand{\E}{\mathcal{E}}
\newcommand{\A}{\mathcal{A}}
\newcommand{\R}{\mathbb{R}}
\newcommand{\C}{\mathbb{C}}
\newcommand{\N}{\mathbb{N}}
\newcommand{\intr}{\int\limits_{\R^3}}
\newcommand{\di}{\, \mathrm{d}}
\newcommand{\I}[2]{I^{X_2}_{#1,#2}}
\newcommand{\In}{\I{\lambda}{R_{n}}}
\newcommand{\rhon}{\rho^{X_2}_{R_n}}
\newcommand{\rhogn}[1]{\rho_{\gamma^{(#1)}_{n}}}

\newcommand{\gamman}{\gamma^{X_2}_{R_n}}
\renewcommand{\H}{\mathcal{H}}

\newcommand{\gegen}[2]{\xrightarrow{#1 \to #2}}
\newcommand{\limit}[2]{\lim \limits_{#1 \to #2}}
\newcommand{\liminfn}{\liminf\limits_{n \to \infty}}
\newcommand{\limsupn}{\limsup\limits_{n \to \infty}}
\newcommand{\minimum}{\min \limits_{\alpha \in [0,\frac{\lambda}{2}]} \big( I^{X}_\alpha + I^{X}_{\lambda - \alpha} \big)}
\newcommand{\norm}[1]{\left\lVert#1\right\rVert}

\DeclarePairedDelimiterX{\abs}[1]{\lvert}{\rvert}{#1}
\DeclareMathOperator{\tr}{tr}
\DeclareMathOperator{\supp}{supp}
\DeclareMathOperator{\dist}{dist}

\DeclareMathOperator{\kernel}{Ker}
\DeclareMathOperator{\Lap}{\Delta}
\DeclareMathOperator*{\argmin}{arg\,min}

\newcommand{\sech}{\mathrm{sech}}

\usepackage[thinc]{esdiff}
\renewcommand{\epsilon}{\varepsilon}
\renewcommand{\phi}{\varphi}

\newtheorem{theorem}{Theorem}
\newtheorem*{theorem*}{Theorem}
\newtheorem{lemma}{Lemma}
\newtheorem{proposition}{Proposition}
\newtheorem*{remark}{Remark}
\newtheorem{ass}{Assumption}
\usepackage[symbols,nogroupskip,sort=standard]{glossaries-extra}

\glsxtrnewsymbol[description={set of admissible $N$-electron wavefunctions}]{AN}{$\mathcal{A}_N$}

\glsxtrnewsymbol[description={exponents in part 3 of Assumption \ref{assumption1} for  controlling the derivative of $e_{xc}$}]{beta}{$\beta_{\pm}$ }

\glsxtrnewsymbol[description={closed-shell one-electron reduced density operator}]{gamma}{$\gamma$ }

\glsxtrnewsymbol[description={bilinear form associated with the Hartree energy}]{D}{$D[\cdot,\cdot]$ }

\glsxtrnewsymbol[description={set of $N$-body density matrices}]{density_matrices}{$D_N$}

\glsxtrnewsymbol[description={energy functional for the $X$-atom }]{energy_functional}{$\E^X[\cdot]$ }
\glsxtrnewsymbol[description={energy functional for the$X_2$-molecule }]{energy_functional2}{$\E^{X_2}[\cdot]$ }

\glsxtrnewsymbol[description={TFDW energy of the system with mass $\alpha$ }]{ealpha}{$E_{\alpha}$}

\glsxtrnewsymbol[description={quantum mechanical ground state energy}]{ground_state_QM}{$E^{QM}_0$ }

\glsxtrnewsymbol[description={exchange-correlation energy}]{exchange}{$E_{ex}[\cdot ]$}
\glsxtrnewsymbol[description={LDA exchange-correlation function}]{exchange_LDA}{$e_{ex}$}

\glsxtrnewsymbol[description={Levy-Lieb energy functional}]{levy_lieb}{$F_{LL}[\cdot]$}

\glsxtrnewsymbol[description={Hilbert space given by $ \bigwedge_{i=1}^N L^2(\R^3)$ }]{hilbert2}{$\mathcal{H}_N$}

\glsxtrnewsymbol[description={subspace of $\mathfrak{S}_1$ with finite kinetic energy}]{hilbert_space}{$\mathcal{H}$}

\glsxtrnewsymbol[description={hamiltonian coming from the Euler-Lagrange equation }]{hamiltonian2}{$h_\gamma$ }

\glsxtrnewsymbol[description={$N$-electron hamiltonian with potential $V$}]{hamiltonian}{$H^V_N$}

\glsxtrnewsymbol[description={energy of the $X$-atom with $\lambda$ electrons surrounding it}]{Ilambda}{$I_{\lambda}^X$}

\glsxtrnewsymbol[description={energy of the $X_2$-molecules with $\lambda$ electrons surrounding it and distance $R$ between the nuclei}]{IlambdaX2}{$I_{\lambda,R}^{X_2}$ }
\glsxtrnewsymbol[description={energy of the problem at infinity, i.e.~without potential, with $\lambda$ electrons surrounding it}]{I_inf}{$I_{\lambda,}^{\infty}$ }

\glsxtrnewsymbol[description={Hartree energy of a density}]{J}{$J[\cdot]$ }

\glsxtrnewsymbol[description={$N$-body wavefunction}]{wavefunction}{$\Psi$}

\glsxtrnewsymbol[description={set of admissible density matrices $\gamma$ with mass $\tr[\gamma] =\lambda$}]{konvex_set}{$K_\lambda$}

\glsxtrnewsymbol[description={one-body reduced electron density}]{rho}{$\rho$}

\glsxtrnewsymbol[description={set of admissible one-body electron densities arising from the wavefunctions in $\A_N$}]{RN}{$\mathcal{R}_N$}

\glsxtrnewsymbol[description={set of admissible one-body electron densities arising from the density matrices in $D_N$}]{RDN}{$\mathcal{R}D_N$}

\glsxtrnewsymbol[description={space of position and spin; elements are denoted by $z = (x,y)$}]{position_and_spin}{$\R^3_\Sigma$}

\glsxtrnewsymbol[description={set of trace class operators on $L^2(\R^3)$}]{trace_class}{$\mathfrak{S}_1$}

\glsxtrnewsymbol[description={set of spin-states $\big \lbrace \lvert \uparrow \rangle, \lvert \downarrow \rangle \big \rbrace$}]{spin}{$\Sigma$}

\glsxtrnewsymbol[description={kinetic energy of the system, depending on the setting of $\rho, \gamma $ or $\Psi$}]{kinetic}{$T[\cdot]$}
\glsxtrnewsymbol[description={unitary translation operator }]{tau}{$\tau_R$ }

 \glsxtrnewsymbol[description={eigenvalues corresponding ti $h_\gamma$ }]{theta}{$\theta_l$ }

\glsxtrnewsymbol[description={Coulomb potential generated by the clamped nuclei}]{V-potential}{$V$}

\glsxtrnewsymbol[description={electron-electron interaction energy}]{vee}{$V_{ee}[\cdot]$}
\glsxtrnewsymbol[description={electron-nuclei interaction energy}]{vne}{$V_{ne}[\cdot]$}

\glsxtrnewsymbol[description={partition of unity in the dichotomy case }]{xi}{$\xi$ }
\glsxtrnewsymbol[description={partition of unity in the dichotomy case }]{zeta}{$\zeta$}

\usetikzlibrary{shapes,arrows}
\title{Dissociation limit in Kohn-Sham density functional theory}
\author{
S\"oren Behr \\
Department of Mathematics \\
 Technische Universit\"at M\"unchen \\
\small{behr@ma.tum.de}
\\[3mm]
Benedikt R. Graswald \\
Department of Mathematics \\
 Technische Universit\"at M\"unchen \\
\small{graswabe@ma.tum.de}
}


\begin{document}

\maketitle

\begin{abstract}
    We consider the dissociation limit for molecules of the type $X_2$ in the Kohn-Sham density functional theory setting, where $X$ can be any element with $N$ electrons. We prove that when the two atoms in the system are torn infinitely far apart, the energy of the system convergences to $\min \limits_{\alpha \in [0,N]} \big( I^{X}_{\alpha} + I^{X}_{2N-\alpha} \big)$, where $I^{X}_{\alpha}$ denotes the energy of the atom with $\alpha$ electrons surrounding it.
    Depending on the ``strength'' of the exchange this minimum might not be equal to the symmetric splitting $2I^{X}_{N}$.
    We show numerically that for the $H_2$-molecule with Dirac exchange this gives the expected result of twice the energy of a H-atom $2 I^{H}_1$.
\end{abstract}{}

\tableofcontents

\section{Introduction}

Density functional theory (DFT) was developed by Hohenberg, Kohn and Sham \cite{hohenbergkohn64, kohnsham65} in the 1960s and is to this day one of the most widely spread electronic structure models in quantum chemistry, biology and materials science because of its good compromise between accuracy and computational cost.
The idea behind DFT is to transform the high-dimensional Schr\"odinger equation into a low-dimensional and thus computationally manageable problem.

The trade-off in this approach is the introduction of the so-called exchange-correlation functional, which is in theory exact but in practice unknown.
Therefore a lot of effort \cite{perdew_zunger, perdew_wang, becke88,b3lyp,pbe} has gone into building good approximations to this functional.
In this paper we consider the simplest form of these models, the local density approximation (LDA) first proposed in \cite{kohnsham65}, some standard references are \cite{parr_young, engel2013density}. Even here the resulting mathematical properties are still far from being well understood.
Furthermore as observed in reference \cite{Medvedev49} starting in the early 2000s newer  approximations actually become worse in predicting the electron densities. This is due to only focusing on the energies and in the process sacrificing mathematical rigor in favor of the flexibility of fitting to empirical data. 
Thus in the present article we want to focus on fundamental properties that the exchange-correlation functional should fulfill.

Our main goal is to analyze the  dissociation limit of any symmetric diatomic molecule, i.e.~any molecule of the form $X_2$, in Kohn-Sham (KS) DFT. Simply put we ask the question, what  happens to the energy of the system, when the distance between the two atoms is artificially increased further and further until they are torn infinitely far apart?
Our main result takes the following form
\begin{theorem*}[Theorem \ref{thm:main} -- Informal Version] 
Let $I^{X_2}_{2N, R}$ and $I^{X}_\lambda$ be the energy of the $X_2$-molecule with distance $R$ between the atoms and the $X$-atom with $\lambda$ electrons, respectively, defined by \eqref{eq:ground_state_molecule}. Then we have
\begin{equation} \label{eq:splitting}
    \lim\limits_{R \to \infty} I_{2N, R}^{X_2} = \min \limits_{\alpha \in [0,N]} \big( I^{X}_\alpha + I^{X}_{2N - \alpha} \big). 
\end{equation}{}
\end{theorem*}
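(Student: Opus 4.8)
The plan is to prove the two inequalities $\limsup_{R\to\infty} I^{X_2}_{2N,R} \le \min_{\alpha}(I^X_\alpha + I^X_{2N-\alpha})$ and $\liminf_{R\to\infty} I^{X_2}_{2N,R} \ge \min_{\alpha}(I^X_\alpha + I^X_{2N-\alpha})$ separately. The upper bound is the constructive part: given $\alpha \in [0,N]$ and near-minimizers $\gamma_1 \in K_\alpha$, $\gamma_2 \in K_{2N-\alpha}$ for the atomic problems $I^X_\alpha$, $I^X_{2N-\alpha}$, we place $\gamma_1$ near one nucleus and a translate $\tau_R \gamma_2 \tau_R^*$ near the other and use $\gamma_1 \oplus \tau_R\gamma_2\tau_R^*$ (or rather the operator $\gamma_1 + \tau_R\gamma_2\tau_R^*$, after a small correction to keep $0\le\gamma\le 1$) as a trial state for $I^{X_2}_{2N,R}$. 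The kinetic and exchange-correlation terms are essentially local and add up; the Coulombic cross-terms — electron-electron repulsion between the two clouds and attraction of each cloud to the far nucleus — decay like $1/R$ by Newton's theorem together with the exponential/polynomial decay of the atomic densities. Letting $R\to\infty$ and optimizing over $\alpha$ gives the $\limsup$ bound. One has to be slightly careful that the exchange term $E_{ex}$ of a sum of well-separated densities converges to the sum of the individual exchange energies; this uses the LDA structure and the growth bounds on $e_{ex}$ from Assumption~\ref{assumption1}.

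The lower bound is the harder half and is where concentration-compactness enters. Take a sequence $R_n \to \infty$ and near-minimizers $\gamma_n = \gamman$ for $I^{X_2}_{2N,R_n}$, with densities $\rho_n = \rhon$. The idea is to run the concentration-compactness dichotomy on the sequence $\rho_n$, but relative to the two nuclear positions, which are themselves drifting apart. First I would show the sequence is bounded in the natural energy space $\H$ (uniform bound on kinetic energy, from coercivity of the functional $\E^{X_2}$), so no mass escapes to infinity in a way that lowers the energy below the claimed value. Then one decomposes $\rho_n$ into a piece supported near nucleus $1$, a piece supported near nucleus $2$, and a remainder; the geometric separation $R_n\to\infty$ forces the cross Coulomb interactions between these pieces to vanish, so that $\liminf I^{X_2}_{2N,R_n} \ge \liminf (I^X_{\alpha_n} + I^X_{2N-\alpha_n-o(1)} ) + o(1)$ where $\alpha_n = \int (\text{piece near nucleus }1)$. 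The essential analytic inputs are: (i) lower semicontinuity of each term of the functional under the localization, (ii) the binding inequality / subadditivity of $\lambda \mapsto I^X_\lambda$ to handle any mass that concentrates "at infinity" away from both nuclei — such mass contributes at least $I^\infty_\mu \ge 0$ and can be reabsorbed, and (iii) continuity of $\alpha \mapsto I^X_\alpha$ on $[0,2N]$ so that passing $\alpha_n \to \alpha$ along a subsequence yields $\liminf I^{X_2}_{2N,R_n} \ge I^X_\alpha + I^X_{2N-\alpha} \ge \min_\alpha(\cdots)$.

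The main obstacle I anticipate is controlling the partition of the density near the two nuclei cleanly: one needs cutoff functions $\xi, \zeta$ (the partition of unity flagged in the notation) on scales that grow slower than $R_n$ but fast enough that the IMS-type localization error in the kinetic energy, and the errors in the exchange-correlation and Hartree terms, all vanish in the limit — and simultaneously one must ensure the localized densities are themselves admissible (nonnegative, correct total mass up to $o(1)$, finite kinetic energy) so that they can be compared to genuine atomic problems $I^X_{\alpha_n}$. Handling the exchange term under localization is more delicate than for Hartree or kinetic energy because $E_{ex}$ is neither convex nor quadratic; the growth conditions on $e_{ex}$ in Assumption~\ref{assumption1} (and the exponents $\beta_\pm$) are exactly what should make the cross terms $E_{ex}[\rho_n] - E_{ex}[\xi\rho_n] - E_{ex}[\zeta\rho_n] \to 0$. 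A secondary subtlety is ruling out "leaking mass at infinity": a priori some electrons could drift off to spatial infinity away from both nuclei; here one invokes that the problem at infinity $I^\infty_\mu$ has no negative-energy bound states (again an input from the assumptions on $e_{ex}$), so such a scenario is never energetically favorable, and all mass is accounted for near one of the two nuclei.
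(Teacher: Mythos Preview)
Your upper bound is essentially the paper's. For the lower bound your strategy is in the right spirit, but it contains one factual error and one genuine technical gap, and the route you outline differs substantially from what the paper actually does.

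\textbf{The error.} You claim that ``the problem at infinity $I^\infty_\mu$ has no negative-energy bound states,'' i.e.\ $I^\infty_\mu \ge 0$. Under Assumption~\ref{assumption1} this is false: part~4 of that assumption is precisely what forces $I^\infty_\lambda < 0$ for every $\lambda>0$ (Lemma~\ref{lem:subadditivity}, item~2). So mass \emph{can} drift to infinity with strictly negative energy. This does not actually wreck your argument, because the subadditivity inequality $I^X_{\beta+\mu} \le I^X_\beta + I^\infty_\mu$ --- which you also invoke --- is the correct mechanism and suffices on its own: if the localization yields masses $\alpha,\beta,\mu$ near nucleus~1, near nucleus~2, and at infinity, then $I^X_\alpha + I^X_\beta + I^\infty_\mu \ge I^X_\alpha + I^X_{\beta+\mu} = I^X_\alpha + I^X_{2N-\alpha} \ge \min_{\alpha'}(I^X_{\alpha'}+I^X_{2N-\alpha'})$. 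But your stated reason is wrong.

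\textbf{The gap.} Your three-piece localization needs $E_{xc}[\rho_n] - \sum_i E_{xc}[\xi_i^2\rho_n] \to 0$. This difference is supported on the cutoff transition annuli, and the growth bounds $|e_{xc}(t)| \le C(t^{1+\beta_-}+t^{1+\beta_+})$ only give control if the $L^p$-mass of $\rho_n$ in those annuli tends to zero. There is no a~priori reason why a (near-)minimizer of $I^{X_2}_{2N,R_n}$ has small mass at distance $\sim L_n$ from the nuclei for an arbitrary choice of scale $L_n$; growing $L_n$ slowly relative to $R_n$ is not enough by itself. This is repairable by a dyadic pigeonhole argument (choose $L_n$ among scales $2^k \le R_n/8$ so that the combined annulus mass around both nuclei is $O(1/\log R_n)$), but you do not mention it, and it is the crux of making your localization rigorous.

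\textbf{Comparison with the paper.} The paper takes a more elaborate route. It applies Lions' abstract concentration--compactness trichotomy to the sequence of \emph{actual minimizers} $\gamma_{R_n}$ (which exist since $\lambda \le 2Z$). Vanishing is excluded by the upper bound; concentration reduces to $I^X_\lambda + I^X_0$. In dichotomy the CC lemma already supplies a decomposition whose annulus mass vanishes (equation~\eqref{eq:dichotomy_annulus}), so the $E_{xc}$ cross-term is handled automatically --- but the concentration point $y_n$ need not be a nucleus. This forces a case analysis and, in the bad cases, a \emph{recursion}: one splits off an $I^\infty_{\alpha_l}$ and repeats the whole argument with smaller mass. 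The recursion could in principle run forever, and ruling this out is the most delicate step of the paper: it uses the Euler--Lagrange equations of the minimizers together with Lemma~\ref{lem:theta_negativ} to derive a contradiction. This is why the paper works with genuine minimizers rather than near-minimizers. Your direct geometric localization, once the pigeonhole gap above is closed, would sidestep this recursion entirely and give a shorter proof.
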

In the long range limit, the ground-state energy of the $X_2$-molecule is identical to the energy of two non-interacting atoms - one with electron mass $\alpha$ and one with electron mass $2N - \alpha$. The physical expectation here is, that it is optimal to split the electrons evenly (i.e. the minimum is attained for $\alpha = N$).

The question if or rather for which $\lambda$ one has symmetric splitting, i.e. given a family of infima $I_\lambda$ with mass $\lambda$ if 
\[
2 I_{\lambda} < I_{\lambda + \epsilon} + I_{\lambda - \epsilon} \quad \text{for all } 0 < \epsilon < \lambda ,
\]
already plays an important role in Thomas-Fermi and related theories, see e.g.~\cite{lieb1997thomas}.

To our knowledge the fact that the lowest energy splitting is always given by two neutral atoms is not even proven in full quantum mechanics, rather only in Thomas-Fermi theory and perturbations thereof, where the behaviour of the energy with respect to the particle number is completely understood. A simple sketch of this is presented in Figure \ref{fig:plot_energy_TF}. 

\begin{figure}[h!]
    \centering
    \includegraphics{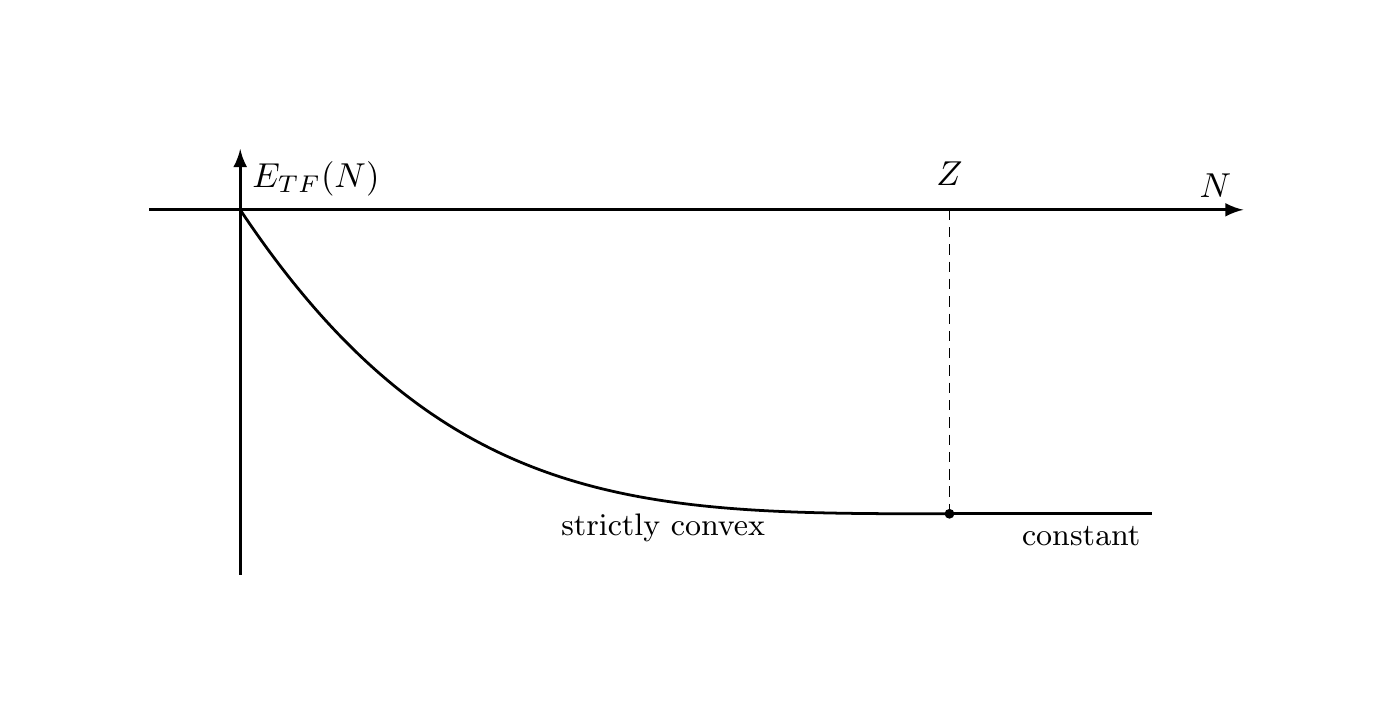}
    \caption{The Thomas-Fermi energy $E_{TF}(N)$ with respect to the particle number $N$. For positively charged systems $N<Z$ it is strictly convex, for $N>Z$ it remains constant.}
    \label{fig:plot_energy_TF}
\end{figure}

Note furthermore that in full quantum mechanics and thus for exact HK-DFT charge quantization occurs, i.e. $\alpha$ in \eqref{eq:splitting} can be restricted to integer values, as proven in \cite{friesecke2003multiconfiguration}.

As will be discussed in detail in Section \ref{sec:dissociation} if the exchange becomes too strong, we observe symmetry breaking, i.e.~the right hand side of \eqref{eq:splitting} does not equal 2$I^{X}_N$.

In the physics literature, this is a well-known challenge: While spin-restricted Kohn-Sham calculations yield qualitatively correct results (i.e. by nature preserve spin-symmetry) they only give reasonable energies close to the actual bond length. Spin unrestricted schemes on the other hand yield better energies but may prefer ionic solutions at long ranges \cite{fuchs2005,perdew1995}. 

This dilemma has recently attracted mathematical interest. In case of the \(H_2\) molecule at fixed bond-length (see \cite{lu_symmetrybreaking}) and for periodic systems (see \cite{gontier2020}), symmetry breaking occurs for  sufficiently strong exchange contributions.
These issues in LDA-DFT and related theories like Thomas-Fermi-Dirac-von Weizsäcker is caused by the Dirac term $- \int \rho^{\nicefrac{4}{3}}$, which to some extend makes the functional concave and can thus lead also to nonattainment, see e.g. \cite{lu_otto}.

The rest of the paper is structured as follows:
The next section sets the stage by defining and motivating all the energy functionals needed, giving our main result in Theorem \ref{thm:main} the necessary detail.
In Section 3 we put it into context by  considering first a one-dimensional DFT model where we can always determine the right hand side of \eqref{eq:splitting}.
Then we consider the full three dimensional case and fill the gap in our theoretical results by numerical evidence.

The last section contains all the proofs, with the most interesting point being that we apply the concentration-compactness lemma not to a minimizing sequence but to a sequence of minimizers. Figure \ref{fig:structure_proof} summarizes the structure of the proof to help not get lost in technical details. 

\section{Setting the stage} 
\subsection{Density functional theory} \label{sec:dft}

To put our result into perspective we recall here shortly the basic fundamentals of DFT. A  standard reference would be \cite{parr_young}.  Readers familiar with the topic might want to skip this section. 
As a quick reference guide for the notation in use, we created a list of symbols at the end of this paper, which the reader might want to consult from time to time. 

The starting point is a system in Born-Oppenheimer approximation \cite{bornoppenheimer, gustafson2003mathematical}, i.e.~a system of $N$ non-relativistic electrons under influence of an external potential $V(x)$ and with a repulsive interaction potential $v_{ee}(x-y)$.

For a molecule with $M$ atomic nuclei at positions $R_1, \ldots, R_M \in \R^{3}$, with individual charges $Z_1, \ldots, Z_M \in \N$ and total atomic charge $Z = \sum \limits_{i =1}^{M} Z_{i}$, and with $N$ electrons the potential $v(x)$ is just the ensuing Coulomb potential of their positions and charges 
 \begin{align*}
 V(x) := - \sum_{i = 1}^{M} \frac{Z_i}{|x - R_i|}, \quad x \in \R^3.
 \end{align*}
The class of admissible functions $\A_N$ -- the so-called $N$-electron wave functions --  is given by 
 \begin{gather*}
 \mathcal{A}_N := \left\lbrace \Psi \in L^{2} \left( (\R^3 \times \Sigma)^{N}  ; \C \right) : \nabla \Psi \in L^{2}, \Psi \text{ antisymmetric} , ||\Psi||_{L^2} =1 \right\rbrace,\\
 \text{where } \enskip \Sigma :=\big \lbrace \lvert \uparrow \rangle, \lvert \downarrow \rangle \big \rbrace \enskip \text{denotes the set of spin-states.}
 \end{gather*}
 In the following we denote the space of position and spin by $\R^3_\Sigma = \R^3 \times \Sigma $ and write $z_i = (x_i,s_i) \in R^3_\Sigma$ for the pair of position and spin of the $i^{\text{th}}$ particle. 
Now we can finally define the quantum mechanical energy functional $\E^{QM}$,
\begin{equation}
\E^{QM}[\Psi] := T[\Psi] + V_{ne}[\Psi] + V_{ee}[\Psi], \label{energy_functional}
\end{equation} 
where
\linespread{1.125}
\begin{align*}
T[\Psi] &:= \frac{1}{2} \int_{(\R^{3}_{\Sigma})^{N}} \sum_{i=1}^{N} \left| \nabla_{x_{i}} \Psi(x_1,s_1, \ldots, x_{N}, s_N) \right|^{2} \di z_{1} \ldots \di z_{N} 
\end{align*}
describes the kinetic energy,
\begin{align*}
V_{ne}[\Psi] &:= \int_{(\R^{3}_{\Sigma})^{N}} \sum_{i=1}^{N} V(x_i) \left| \Psi(x_1,s_1, \ldots, x_{N}, s_N) \right|^{2} \di z_{1}\ldots \di z_{N}
\end{align*}
gives the electron-nuclei interaction energy, and
\begin{align*}
V_{ee}[\Psi] &:= \int_{(\R^{3}_{\Sigma})^{N}} \sum_{1 \leq i <j \leq N} v_{ee}(x_i-x_j) \left| \Psi(x_1,s_1, \ldots, x_{N}, s_N) \right|^{2} \di z_{1}\ldots \di z_{N} \\
\end{align*}
is the electron-electron interaction energy. Here we used the notation $\int_{\R^3_\Sigma} f(z) \di z = \sum_{s \in \Sigma} \intr f(x,s) \di x$.
\linespread{1.4}
The exact quantum mechanical ground state energy is now defined as 
\begin{align} \label{exactgroundstateenergy}
E_{0}^{QM} : = \inf \limits_{\Psi \in \A_N} \E^{QM}[\Psi]. 
\end{align}

Unfortunately  due to the curse of dimensionality there is no hope of ever solving \eqref{exactgroundstateenergy} for interesting molecular systems. 
This is where a central result going back to Hohenberg and Kohn \cite{hohenbergkohn64} comes into play. We state it in the more modern formulation due to Levy and Lieb \cite{levy1979universal, lieb2002density}, see also \cite{chen_friesecke}:
The quantum mechanical ground state energy $E^{QM}_0$ \eqref{exactgroundstateenergy} only depends on the one-body density $\rho$ given by
\begin{align}\label{eq:onebodydensity}
\rho(x) = \sum_{s \in \Sigma} \int_{\R_\Sigma^{3(N-1)}} |\Psi(x,s,z_2,\ldots, z_N)|^2 \di z_2 \ldots \di z_N.
\end{align}
Furthermore it can be recovered exactly by the following minimization
\begin{align}\label{eq:levylieb}
    E^{QM}_0 = \inf_{\rho \in \mathcal{R}_N} \left( F_{LL}[\rho] + \intr v(x) \rho(x) \di x \right),
\end{align}
where the functional $F_{LL}$ is given by
\begin{align}\label{eq:levyliebfunctional}
F_{LL}[\rho] = \min_{\Psi \in \A_N, \Psi \mapsto \rho} \left( T[\Psi] + V_{ee}[\Psi]\right).
\end{align}
Here the map $\Psi \mapsto \rho$ describes the relationship in \eqref{eq:onebodydensity}, i.e.~ $\Psi$ has one-body density $\rho$ and $\mathcal{R}_N$ denotes the set of admissible densities $\rho$ arising via \eqref{eq:onebodydensity} from the set of admissible wavefunctions $\A_N$.
 Note that due to \cite{lieb2002density} $\mathcal{R}_N$ has an explicit form
 \begin{align}\label{eq:admissible_set_densities}
     \mathcal{R}_N = \left\lbrace \rho:\R^3 \to \R \bigg\lvert~ \rho \geq0, ~\sqrt{\rho} \in H^1\big(\R^3 \big), \intr \rho(x) \di x = N  \right\rbrace
 \end{align}
and provided $\rho \in \mathcal{R}_N$ the minimum in \eqref{eq:levyliebfunctional} is attained.

The problem one now faces is that there is no tractable expression of $F_{LL}$ which could be used in practice. 
In physics one usually takes the starting point of splitting $F_{LL}[\rho]$ into three parts
\[
F_{LL}[\rho] = T[\rho] + V_{ee} [\rho] +  E_{xc}[\rho],
\]
where $T$ describes a kinetic part and $V_{ee}$ an interaction part and the exchange-correlation $E_{xc}$ contains all the
other terms ensuring that equality holds.
There are of course different choice for the individual functionals, but a particularly successful one has been proposed by Kohn and Sham \cite{kohnsham65}. They came up with the idea to construct the kinetic term $T$ by considering a non-interacting reference system with the same density $\rho$ described by single-particle orbitals $(\varphi_i)_{i=1}^N$ given by
\begin{align}\nonumber
    T[\rho] = T_{KS}[\rho] =
    \min \bigg\{
    \frac{1}{2} \sum_{i=1}^N \int_{\R^3_\Sigma} |\nabla \varphi_i|^2 (z) \di z ~\bigg \lvert~ \varphi \in H^1(\R^3_\Sigma), 
    \int_{\R^3_\Sigma} \overline{\varphi_i(z)}  \varphi_j (z) \di z &= \delta_{ij},  \\
     \sum_{i=1}^N \sum_{s \in \Sigma} |\varphi_i(x,s)|^2 &= \rho(x) \bigg\}.
\end{align}
Note that these orbitals coming from the fictitious non-interacting system, are only connected to the real system by having the same density, a direct interpretation while sometimes loosely done in practice is not theoretically justified.
The interaction term is modeled by an independence ansatz
\[
V_{ee}[\rho] = \intr \intr \frac{\rho(x)\rho(y)}{|x-y|} \di x \di y.
\]
Thus the challenge becomes finding an accurate approximation for $E_{xc}[\rho]$. There is a huge variety of different exchange-correlation functionals (see e.g~\cite{perdew_wang, perdew_zunger, pbe}), each with its advantages and disadvantages.
In the following we will we working with the so-called local density approximation (LDA), meaning that the exchange correlation functional is assumed to be of the following form
\begin{align}\label{eq:exchange_correlation_functional_lda_form}
    E_{xc}[\rho] = \intr e_{xc}(\rho(x)) \di x,
\end{align}
where the function $e_{xc}:\R \to \R$ has to fulfill certain properties. In our case they will be
specified in Section \ref{sec:dissociation} under Assumptions \ref{assumption1}.

The prototypical example for an $E_{xc}[\rho]$--approximation stems from considering the homogeneous electron gas. It goes  back to Dirac \cite{dirac1930note} (for a mathematical derivation see \cite{friesecke1997pair}) and is given by
 \begin{align} \label{eq:dirac_exchange}
     E_{xc}[\rho] = \intr e_{xc}(\rho(x)) \di x,  \quad e_{xc}(\rho) = -
     c_{xc} \rho^{\nicefrac{4}{3}}. 
\end{align}{}

Employing the above ansatz for the different parts of $F_{LL}$, in particular using the orbitals $\Phi = (\varphi_1,\ldots,\varphi_N)$ of the KS-ansatz,  equation \eqref{eq:levylieb} takes the form

\begin{align*}\label{eq:energy_orbitals}
    E^{QM}_0  \approx E^{LDA}_0= \inf \bigg\{ 
    \frac{1}{2} \sum_{i=1}^N \int_{\R^3_\Sigma} |\nabla \varphi_i|^2(z) \di z
    + \intr V(x) \rho(x) \di x + \intr \intr \frac{\rho(x) \rho(y)}{|x-y|} \di x \di y + \intr e_{xc}(\rho(x) ) \di x 
    ~\bigg\lvert~ \\
    \varphi_i \in H^1(\R^3_\Sigma ),~ \int_{\R^3_\Sigma} \overline{\varphi_i(z)} \varphi_j (z) \di z = \delta_{ij},~ \rho(x) = \sum_{i=1}^N \sum_{s \in \Sigma } |\varphi_i(x,s)|^2
    \bigg\}.
\end{align*}

\subsection{Mixed-states}
This section shortly recalls the description of the above problem using mixed states, i.e.~density matrices. For more details see e.g.~\cite{cances}.\\
Let $\mathfrak{S}_1$ denote the vector space of trace class operators on $L^2(\R^3)$ and introduce the subspace $\H:= \{ \gamma \in \mathfrak{S}_1 : |\nabla| \gamma |\nabla| \in \mathfrak{S}_1 \}$ endowed with the norm $|| \cdot ||_{\H} := \tr(|\cdot|) + \tr\big(\big| |\nabla| \cdot |\nabla| \big|\big)$ and the convex set 
\begin{align}
    \label{eq:definition_set_k}
K := \{ \gamma \in \mathcal{S}(L^2(\R^3)) : 0 \leq \gamma \leq 1, \tr(\gamma) < \infty, \tr\big(  |\nabla| \gamma |\nabla|  \big) < \infty \},
\end{align}
where $\mathcal{S}(L^2(\R^3))$ denotes the space of bounded self-adjoint operators on $L^2(\R^3)$.
Next, let us remark that 
\begin{align}
E^{QM}_{0} &= \inf \bigg \lbrace  \langle \Psi \mid H_{N}^{V} \mid \Psi \rangle  \enskip : \enskip \Psi \in \A_N  \bigg \rbrace \label{wave}\\
&= \inf \bigg \lbrace  \tr \big(H_{N}^{V}\Gamma \big)  \enskip : \enskip \Gamma \in D_{N}  \bigg \rbrace, \label{matrices}
\end{align}
where $H_N^V$ is the electronic hamiltonian 
\begin{align}\label{eq:hamiltonian}
    H_N^V : = - \frac{1}{2} \sum_{i=1}^N \Delta_{x_i } - \sum_{i=1}^N V(x_i) + \sum_{1\leq i<j\leq N} \frac{1}{|x_i - x_j|},
\end{align}
and $D_{N}$ is the set of $N$-body density matrices defined by
\begin{align} \label{nbody_density_matrices}
D_N &= \bigg \lbrace \Gamma \in \mathcal{S}\big(\H_{N} \big) : 0 \leq \Gamma \leq 1, \tr(\Gamma) =1, \tr(- \Delta \Gamma) < \infty \bigg \rbrace.
\end{align}
In the above expression, $\mathcal{S}\big(\H_N \big)$ denotes the vector space of bounded self-adjoint operators on the Hilbert space $\H_N,$ where  
\begin{align*}
    \H_N = \bigwedge_{i=1}^N L^2\big( \R^3_\Sigma \big),
\end{align*}
endowed with the standard inner product
\[
\langle \Psi \lvert \Psi' \rangle_{\H_N} = \int_{(R^3_\Sigma )^N}  \overline{\Psi(z_1, \ldots, z_N)} \Psi'(z_1,\ldots,z_N) \di z_1 \ldots \di z_N.
\]
Furthermore the condition $0 \leq \Gamma \leq 1$ stands for $0 \leq \langle \Psi \mid \Gamma \mid \Psi \rangle\leq ||\Psi||^{2}_{\H_N}$ for all $\Psi \in \H_N$.

From a physical point of view, \eqref{wave} and \eqref{matrices} mean that the ground state energy can be computed either by minimizing over pure states -- characterized by wave functions $\Psi$ -- or by minimizing over mixed states -- characterized by density operators $\Gamma$.

As before we define the electronic density for any $N$-electron density operator $\Gamma \in D_N$
\begin{align}
\rho_{\Gamma}(x) : = N \sum_{\sigma \in \Sigma} \int_{(\R^{3}_{\Sigma})^{(N-1)}} \Gamma(x,\sigma, z_{2}, \ldots, z_{N}; x, \sigma, z_{2}, \ldots, z_{N}) \di z_{2} \ldots \di z_{N}.
\end{align}
Note that here and below we use the same notation for an operator and its Green kernel.

Then we get for the electron densities
\begin{align*}
\bigg \lbrace 
\rho : \R^3 \to \R :
 \exists \Gamma
 \in D_N, \rho_{\Gamma} = \rho \bigg \rbrace 
 = \mathcal{R}_N
 = \bigg \lbrace \rho : \R^3 \to \R : \rho \geq 0, \sqrt{\rho} \in H^{1}(\R^3), \intr \rho \di x = N \bigg \rbrace.
\end{align*}

Let $\Gamma \in D_N$ be in the set of $N$-body density matrices, then the one-electron reduced density operator $\Upsilon_{\Gamma}$ associated with $\Gamma$ which is the self-adjoint operator on $L^{2}(\R^{3}_{\Sigma})$ with kernel 
\begin{align*}
\Upsilon_{\Gamma}(x,s;y,t) = N \int_{(\R^{3}_{\Sigma})^{N-1}} \Gamma(x,s, z_{2}, \ldots, z_{N}; y, t,z_{2}, \ldots, z_{N}) \di z_{2} \ldots \di z_{N}.
\end{align*}

Furthermore it is known, see e.g. \cite{reduced}, that
\begin{align}
\bigg \lbrace \Upsilon : \exists  \Gamma \in D_{N}, \rho_{\Gamma} = \rho \bigg \rbrace = 
\bigg \lbrace \Upsilon \in \mathcal{R} D_{N}:  \rho_{\Gamma} = \rho \bigg \rbrace, \label{upsilon}
\end{align}
where 
\begin{align}
\mathcal{R} D_{N} &= \bigg \lbrace \Upsilon \in S( L^{2}(\R^{3}_{\Sigma})) : 0 \leq \Upsilon \leq 1, \tr(\Upsilon) =N, \tr(- \Delta_{x} \Upsilon) < \infty \bigg \rbrace  \quad
\text{and} \\
\rho_{\Upsilon}(x) &:= \sum_{\sigma \in \Sigma} \Upsilon(x, \sigma; x, \sigma).
\end{align}

This leads to the so-called extended Kohn-Sham models
\begin{align}
I^{EKS}_{N}[V] : = \inf \bigg \lbrace \tr \big(- \tfrac{1}{2} \Delta_{x} \Upsilon  \big) + \intr \rho_{\Upsilon} V \di x + J[\rho_{\Upsilon}] + E_{ex}[\rho_{\Upsilon}] : \Upsilon \in \mathcal{R}D_{N} \bigg \rbrace. \label{extendedkohnsham}
\end{align}
Note, up to now no approximation has been made, such that for the exact exchange-correlation functional $E^{QM}_{0} = I^{EKS}_{N}$ for any molecular system containing $N$ electrons.
Unfortunately as mentioned in Subsection \ref{sec:dft}, there is no tractable expression of $E_{xc}[\rho]$ that can be used in numerical simulations.

Before proceeding further, and for the sake of simplicity, we will restrict ourselves to closed-shell, spin-unpolarized systems.
This means that we will only consider molecular systems with an even number of electrons $N = 2 N_{p}$, where $N_{p}$ is the number of electron pairs in the system, and we will assume that electrons ``go by pairs''.

Hence, the constraints on the one-electron reduced density operator originating from the closed-shell approximation read:
\begin{align}
\Upsilon(x, \lvert \uparrow \rangle, y, \lvert \uparrow \rangle) = \Upsilon(x, \lvert \downarrow \rangle, y, \lvert \downarrow \rangle)
 \enskip \text{and} \enskip 
 \Upsilon(x, \lvert \uparrow \rangle, y, \lvert \downarrow \rangle) = \Upsilon(x, \lvert \downarrow \rangle, y, \lvert \uparrow \rangle) = 0.
\end{align}
Introducing $\gamma(x,y) = \Upsilon(x, \lvert \uparrow \rangle, y, \lvert \uparrow \rangle)$ and denoting $\rho_{\gamma}(x) = 2 \gamma(x,x)$, we obtain the spin-unpolarized (or closed-shell or restricted) extended Kohn-Sham model

\begin{gather}
I_{N}^{REKS}(V) = \inf \bigg \lbrace \E(\gamma) : \gamma \in K_{N_{p}} \bigg \rbrace, 
\end{gather}
where the energy functional $\E$ is given by
\begin{gather}
\E(\gamma) = \tr(-\Delta \gamma) + \intr \rho_{\gamma} V \di x + J[\rho_{\gamma}] + E_{xc}[\rho_{\gamma}],
\end{gather}
and the admissible set looks like
\begin{gather}
K_{N_{p}} = \bigg \lbrace \gamma \in \mathcal{S}(L^{2}(\R^{3})) : 0 \leq \gamma \leq 1, \tr(\gamma) = N_{p}, \tr(- \Delta \gamma) < \infty \bigg \rbrace.
\end{gather}

Note that the factor $\tfrac{1}{2}$  in front of the kinetic part of $H^V_N$ from \eqref{eq:hamiltonian}
 vanishes here in front of the trace due to the definition of $\gamma$ and accounts for the spin.

Furthermore, by spectral theory we have for any $\gamma \in K_{N_{p}}$
\begin{gather}
\gamma = \sum_{i \geq 1} \lambda_{i} \lvert \phi_i \rangle \langle \phi_{i} \rvert 
\end{gather}
with
\begin{gather} 
\phi_{i}  \in H^{1}(\R^3), \quad \intr \phi_{i} \phi_{j} \di x = \delta_{ij}, \quad \lambda_{i} \in [0,1],~ \sum_{i=1}^{\infty} \lambda_{i}= N_{p}, \quad \sum_{i =1}^{\infty} \lambda_{i} ||\nabla \phi_{i}||_{L^{2}}^{2} < \infty.
\end{gather}

\subsection{Dissociation}
In this section we shortly introduce the energy functionals we will be using in this paper.
The Kohn-Sham energy functional  is given by 
\begin{equation} \label{energy_molecule}
   \E^{V} [\gamma] := \tr[-\Lap \gamma] +  \intr V \rho \di x + \frac{1}{2} \intr \intr \frac{\rho(x) \rho(y)}{|x-y|} \di x \di y  +  \intr e_{xc}\big(\rho(x)\big) \di x ,
\end{equation}{}
where $ \rho(x) = 2 \gamma(x,x)$ and $V$ denotes the external potential. Note that the factor 2 is used since we are considering a spin-unpolarized system.
Let $X$ be any atom with $Z$ number of protons. Then for the $X_2$ molecule we have 
\begin{align}
\label{eq:energy_functional_molecule}
V^{X_2}_{R}
= - \frac{Z}{|\cdot|} - \frac{Z}{|\cdot - R|},
\quad \E^{X_2}_R[\gamma] := \E^{V^{X_2}_R}[\gamma],
\end{align}
and similar for the $X$-atom
\begin{align}
\label{eq:energy_functional_atom}
 V^{X}= - \frac{Z}{|\cdot|}, 
\quad \E^{X}[\gamma] := \E^{V^{X}}[\gamma].
\end{align}
Here and in the following to keep notation a bit simpler we will denote by $R$ the position of the second nucleus and also its distance to the origin, as long as it is clear from context which one we are referring to.

We then define the ground state energies
\begin{equation} \label{eq:ground_state_molecule}
I^{X_2}_{\lambda, R} : = \inf\limits_{\gamma \in K_\lambda}
\E^{X_2}_{R}[\gamma],
\quad
I^{X}_{\lambda} : = \inf\limits_{\gamma \in K_\lambda}
\E^{X}[\gamma],
\end{equation}
where the admissible set is given by
\begin{equation} 
K_\lambda := \big\{ \gamma \in S(L^2(\R^3)): 0 \leq \gamma \leq 1, \tr(\gamma) = \lambda, \tr(- \Lap \gamma) < \infty \big\}.
\end{equation}
Furthermore we introduce the problem at infinity, corresponding to a system without nuclei 
\begin{equation} \label{eq:problem_at_infinity}
    I^{\infty}_{\lambda} : = \inf\limits_{\gamma \in K_\lambda} \E^{ \infty}[\gamma], \quad
    \E^{\infty}[\gamma] :=  \tr[-\Lap \gamma]  + \frac{1}{2} \intr \intr \frac{\rho(x) \rho(y)}{|x-y|} \di x \di y  +  \intr e_{xc}\big(\rho\big) \di x.
\end{equation}{}

To shorten notation we will denote by 
$T[\gamma] = \tr[-\Lap \gamma]$  
the kinetic energy, 
$V[\gamma] = \int V(x) \rho(x)$  describes the electron-nuclei interaction and the exchange-correlation term is given by $E_{xc}[\rho] = \int e_{xc}(\rho) \di x$.
Furthermore  the Hartree energy is given by
\begin{align*}
J[\rho] =\tfrac{1}{2} \intr \intr \frac{\rho(x) \rho(y)}{|x-y|} \di x \di y
\end{align*}
with its corresponding bilinear form $D[f,g]$ being
\begin{align*}
     D[f,g] =  \intr \intr \frac{f(x) g(y)}{|x-y|} \di x \di y.
\end{align*} 
Furthermore if a certain statement holds true for all of the three infima, we will sometimes simply write it holds for the map $\lambda \mapsto I_\lambda$.

Next let us give the assumption on the exchange-correlation term. Note that we can use the same setting as \cite{cances} for the local-density approximation (LDA).
\begin{ass}[LDA-exchange-correlation] \label{assumption1}
Let $e_{xc} :\R_{+} \to \R$ be a $C^1$-function such that 
\begin{enumerate}
    \item $e_{xc}(0) = 0$,
    \item $e_{xc}' \leq 0$,
    \item $\exists 0< \beta_{-} \leq \beta_{+} < \frac{2}{3}$ such that $\big|e'_{xc}(\rho )\big| \leq C \big(\rho^{\beta_{-}} + \rho^{\beta_{+}}\big)$,
    \item $\exists 1 \leq \alpha < \frac{3}{2}$ such that $\limsup \limits_{\rho \to 0}
    \frac{e_{xc}(\rho)}{\rho^{\alpha}} < 0$.
\end{enumerate}{}
\end{ass}{}
Note that the prototypical exchange-correlation functional in the LDA-setting \eqref{eq:dirac_exchange} coming from the uniform electron gas satisfies these assumptions with $\alpha = \tfrac{4}{3}$ and $\beta_{-}=\beta_{+}= \tfrac{1}{3}$.



\begin{remark}
The existence of minimizers to these functionals for neutral or positively charged systems is due to \cite{cances}.
We will also be using the following standard results proven there, which we summaries in the Lemmata \ref{lem:subadditivity} -- \ref{lem:bounds}.
\end{remark}

First some properties of the electron mass to ground state energy map $\lambda \mapsto I_\lambda$.
\begin{lemma}[Properties of the infimum \cite{cances}] \label{lem:subadditivity}
Let $I^{X_2}_{\lambda,R},I^{X}_\lambda$ and $I^{\infty}_\lambda $ be as defined above. For the molecular energy assume $R$ is fixed, but arbitrary.
Then the following holds 

\begin{enumerate}
\item All three maps $\lambda \mapsto I^{X_2}_{\lambda,R},~ \lambda \mapsto I^X_\lambda$ and $\lambda \mapsto I^\infty_\lambda$ are continuous and strictly decreasing for any $\lambda$ in the domain $\lambda \in [0,\infty)$.
\item We always have $I^{X_2}_{0,R} = I^X_0 = I^\infty_0= 0$ and $ ~ -\infty < I^{X_2}_{R,\lambda} < I^{X}_\lambda < I^{\infty}_{\lambda}  < 0 $ for $\lambda>0$.
\item Furthermore all three maps  satisfies the subadditivty condition, i.e. for any of the maps denoted by $\lambda \mapsto I_\lambda$ we have
\begin{equation} \label{subadditivity}
I_{\lambda} \leq I_{\alpha} + I^{\infty}_{\lambda - \alpha}~  \forall \alpha \in [0,\lambda]
\end{equation}
\end{enumerate}
\end{lemma}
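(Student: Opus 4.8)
All three items are classical and, as the statement indicates, due to \cite{cances}; the plan is to recall the arguments in an order that exposes their dependencies, since everything rests on the strict negativity $I^\infty_\lambda<0$, which is the one place part~3 of Assumption~\ref{assumption1} is genuinely needed, and from which all the remaining strict inequalities in item~2 follow. Concretely I would proceed in four steps: (a) the lower bound and $I_0=0$; (b) subadditivity; (c) $I^\infty_\lambda<0$ and the strict chain; (d) strict monotonicity and continuity, deduced from (b)--(c).

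For (a), $I_0=0$ is immediate because $K_0=\{0\}$ and every term of $\E^V$ vanishes; for the lower bound I would apply to $\gamma\in K_\lambda$ the Hoffmann--Ostenhof inequality $T[\gamma]\ge\|\nabla\sqrt{\rho_\gamma}\|_{L^2}^2$, control $\int V\rho_\gamma$ by Hardy and $E_{xc}[\rho_\gamma]$ by a Gagliardo--Nirenberg estimate (admissible since $1+\beta_\pm<\tfrac53$ makes $\rho^{1+\beta_\pm}$ subcritical), and absorb the two negative contributions into $\tfrac12\|\nabla\sqrt{\rho_\gamma}\|_{L^2}^2$ plus a constant depending only on $\lambda,Z$, so that $\E^V[\gamma]\ge -C(\lambda,Z)>-\infty$, and in particular $I^{X_2}_{\lambda,R}>-\infty$. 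For (b), fix $\alpha\in[0,\lambda]$, take an $\epsilon$-optimal $\gamma_1$ for $I_\alpha$ and an $\epsilon$-optimal $\gamma_2$ for $I^\infty_{\lambda-\alpha}$, cut off their spectral orbitals to compact supports (perturbing energies and traces by $O(\epsilon)$), and set $\gamma_D=\gamma_1\oplus\tau_{D\omega}\gamma_2\tau_{-D\omega}$ for a fixed unit vector $\omega$; once $D$ is large the two pieces have disjoint supports, $\gamma_D\in K_\lambda$ after a harmless $O(\epsilon)$-correction of the trace, and $\E^V[\gamma_D]=\E^V[\gamma_1]+\E^\infty[\gamma_2]+D[\rho_{\gamma_1},\rho_{\gamma_2}(\cdot-D\omega)]+\int V\,\rho_{\gamma_2}(\cdot-D\omega)$, whose last two terms are $O(1/D)$ because $V$ and the Coulomb kernel decay; sending $D\to\infty$ and then $\epsilon\to0$ gives $I_\lambda\le I_\alpha+I^\infty_{\lambda-\alpha}$.

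For (c), I would first get $I^\infty_\lambda<0$ by testing $\E^\infty$ on a dilute state: a density equal to a smoothed multiple of $\mathbf{1}_{B_L}$ of small total mass, built from low-momentum orbitals so that $T[\gamma]=o(1)$ and $J[\rho_\gamma]=O(L^{-1})$, while part~3 of Assumption~\ref{assumption1} forces $E_{xc}[\rho_\gamma]\le -c\,|B_L|^{1-\alpha}$; balancing $L$ against the mass makes the exchange gain dominate, so $I^\infty_{\lambda_0}<0$ for some $\lambda_0>0$, whence $I^\infty_\lambda<0$ for every $\lambda>0$ by subadditivity together with $I^\infty_0=0$. Since $V^X\le0$ and $V^{X_2}_R\le V^X$ pointwise, this already gives $I^{X_2}_{\lambda,R}\le I^X_\lambda\le I^\infty_\lambda<0$; to upgrade the last two inequalities to strict ones I would use the standard fact that a configuration which \emph{vanishes} (meaning $\sup_a\int_{B_1(a)}\rho\to0$) has, thanks to the kinetic bound, $E_{xc}\to0$ and $\int V\rho\to0$, and hence nonnegative energy in the limit. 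Because $I^\infty_\lambda<0$, an $\epsilon$-optimal $\gamma^\infty_\epsilon$ for $I^\infty_\lambda$ must therefore carry a bump $\int_{B_1(a_\epsilon)}\rho_{\gamma^\infty_\epsilon}\ge\delta>0$ with $\delta$ independent of $\epsilon$; translating it to the origin, where $1/|x|\ge1$ on the bump, yields $I^X_\lambda\le\E^X[\tau_{a_\epsilon}\gamma^\infty_\epsilon\tau_{-a_\epsilon}]\le I^\infty_\lambda+\epsilon-Z\delta$, so $I^X_\lambda<I^\infty_\lambda$ after $\epsilon\to0$. Running the same non-vanishing argument for $I^X_\lambda<0$ --- and noting that a fixed amount of mass now has to stay within a bounded distance of the first nucleus, since otherwise $\int V^X\rho\to0$ would force $I^X_\lambda=I^\infty_\lambda$, contradicting what was just shown --- produces $I^{X_2}_{\lambda,R}\le I^X_\lambda-\tfrac{Z\delta_0}{M+|R|}<I^X_\lambda$.

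For (d): for $h>0$, subadditivity and $I^\infty_h<0$ give $I_{\lambda+h}\le I_\lambda+I^\infty_h<I_\lambda$, so each map is strictly decreasing. From the lower bound of (a), $I^\infty_h\ge -Ch^{\theta}$ for some $\theta>0$, while $I^\infty_h\le0$ always, so $I^\infty_h\to0$ as $h\to0^+$; with $I_{\lambda+h}\le I_\lambda+I^\infty_h$ this gives upper semicontinuity from the right, and for the complementary bound I would rescale an $\epsilon$-optimal $\gamma$ of mass $\lambda$ (resp.\ $\lambda\pm h$) by the factor $\tfrac{\lambda\pm h}{\lambda}$ and use the a priori bounds of Lemma~\ref{lem:bounds} to see that the energy moves by $O(h)$, so $|I_\lambda-I_{\lambda\pm h}|\to0$ and continuity --- including at $\lambda=0$, where $I_\lambda\to0$ --- follows. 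The hard part of the whole lemma is really just step (c)'s $I^\infty_\lambda<0$: kinetic and Hartree energies are nonnegative, so the negativity has to be extracted solely from the concave, sublinearly-vanishing exchange term in a dilute regime, in competition with the positive Coulomb self-energy --- exactly the mechanism behind the nonconvexity phenomena discussed in \cite{lu_otto} --- whereas the rest (keeping the trial density matrices admissible, the no-vanishing lemma) is routine concentration--compactness bookkeeping in the spirit of \cite{cances}.
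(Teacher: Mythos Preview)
The paper does not supply its own proof of this lemma: it is quoted verbatim from Anantharaman--Canc\`es \cite{cances} (see the remark immediately preceding Lemma~\ref{lem:subadditivity}), so there is nothing to compare your argument against. Your sketch is the standard route and is essentially correct; I only flag two small slips.

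First, both of your references to ``part~3 of Assumption~\ref{assumption1}'' should be to part~4: the bound $|e_{xc}'(\rho)|\le C(\rho^{\beta_-}+\rho^{\beta_+})$ is part~3 and is what you use in steps~(a) and~(d), while the negativity $\limsup_{\rho\to0}e_{xc}(\rho)/\rho^{\alpha}<0$ --- the ingredient that actually forces $I^\infty_\lambda<0$ --- is part~4.

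Second, in step~(c) you write ``$I^\infty_{\lambda_0}<0$ for some $\lambda_0>0$, whence $I^\infty_\lambda<0$ for every $\lambda>0$ by subadditivity''. Subadditivity only transfers the negativity \emph{upwards} in mass ($I^\infty_\lambda\le I^\infty_{\lambda_0}+I^\infty_{\lambda-\lambda_0}<0$ for $\lambda>\lambda_0$), not downwards. The fix is already implicit in your construction: your dilute trial state has two free parameters (the spread $L$ and the mass $\mu$), and optimising both --- e.g.\ coupling $L=\mu^{-\beta}$ for a suitable $\beta\in\bigl(\tfrac{\alpha-1}{5-3\alpha},\tfrac{2-\alpha}{3\alpha-4}\bigr)$, a nonempty interval precisely when $\alpha<\tfrac32$ --- yields $I^\infty_\mu<0$ for \emph{all} sufficiently small $\mu>0$, after which subadditivity handles every $\lambda>0$. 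With these two corrections your argument goes through.
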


Furthermore the next lemma says that minimizing sequences of our problems cannot vanish in the sense of \cite{lions_cc}.

\begin{lemma}[Non-vanishing \cite{cances}]
Let $\lambda >0 $ and $(\gamma_n)_n$ a minimizing sequence for any of the problems \eqref{eq:ground_state_molecule} or \eqref{eq:problem_at_infinity}. Then the sequence $(\rho_{\gamma_n})_n$ cannot vanish in the sense of \cite{lions_cc}, which means that
\[
\exists R>0: ~ \text{such that } \limit{n}{\infty} \sup\limits_{x \in \R^3} \int_{B_R(x)} \rho_{\gamma_n} (x) \di x >0.
\]
\end{lemma}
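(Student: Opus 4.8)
The plan is to argue by contradiction. Suppose $(\gamma_n)_n$ is a minimizing sequence for one of the problems \eqref{eq:ground_state_molecule} or \eqref{eq:problem_at_infinity} and that $(\rho_{\gamma_n})_n$ vanishes in the sense of \cite{lions_cc}. The first step is to observe that $T[\gamma_n]=\tr(-\Lap\gamma_n)$ stays bounded along the sequence. Indeed, since $J[\rho_{\gamma_n}]\ge 0$ and, by parts 1--2 of Assumption \ref{assumption1}, $e_{xc}\le 0$, we have $T[\gamma_n]\le \E[\gamma_n]-\int_{\R^3}V\rho_{\gamma_n}-E_{xc}[\rho_{\gamma_n}]$, and both $|\int_{\R^3}V\rho_{\gamma_n}|$ and $|E_{xc}[\rho_{\gamma_n}]|$ can be absorbed as $\epsilon\,T[\gamma_n]+C_\epsilon$ using, respectively, Hardy's inequality and part 3 of Assumption \ref{assumption1} together with Gagliardo--Nirenberg (here $1+\beta_\pm\in(1,\tfrac53)\subset(1,3)$); since $\E[\gamma_n]\to I_\lambda$ is finite, this yields $\sup_n T[\gamma_n]<\infty$ (this is of course also contained in the a priori bounds of \cite{cances}). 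By the Hoffmann--Ostenhof inequality $T[\gamma]\ge c\int_{\R^3}|\nabla\sqrt{\rho_\gamma}|^2$ for some $c>0$, so together with $\int_{\R^3}\rho_{\gamma_n}=\lambda$ the family $(\sqrt{\rho_{\gamma_n}})_n$ is bounded in $H^1(\R^3)$.

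Next I would feed this into Lions' vanishing alternative: a sequence bounded in $H^1(\R^3)$ whose $L^2$-mass vanishes uniformly over balls of fixed radius (which is exactly the hypothesis here, since $\int_{B_R(x)}\rho_{\gamma_n}=\int_{B_R(x)}|\sqrt{\rho_{\gamma_n}}|^2$) converges to $0$ strongly in $L^q(\R^3)$ for every $q\in(2,6)$. Consequently $\rho_{\gamma_n}=(\sqrt{\rho_{\gamma_n}})^2\to 0$ strongly in $L^p(\R^3)$ for every $p\in(1,3)$.

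The third step shows that the two terms in $\E[\gamma_n]$ which are not manifestly nonnegative tend to zero. For the exchange-correlation part, integrating part 3 of Assumption \ref{assumption1} gives $|e_{xc}(t)|=\big|\int_0^t e'_{xc}(s)\di s\big|\le C\big(t^{1+\beta_-}+t^{1+\beta_+}\big)$, hence $|E_{xc}[\rho_{\gamma_n}]|\le C\big(\|\rho_{\gamma_n}\|_{L^{1+\beta_-}}^{1+\beta_-}+\|\rho_{\gamma_n}\|_{L^{1+\beta_+}}^{1+\beta_+}\big)\to 0$ since $1+\beta_\pm\in(1,3)$. For the electron-nuclei term (absent in the problem at infinity), write each singularity as $\tfrac{1}{|x-R_i|}=\tfrac{\mathbbm{1}_{B_1(R_i)}}{|x-R_i|}+\tfrac{\mathbbm{1}_{B_1(R_i)^c}}{|x-R_i|}$; the first summand lies in $L^{p'}(\R^3)$ for $p'$ slightly below $3$, so its pairing with $\rho_{\gamma_n}\in L^{p}$ (conjugate $p$ slightly above $\tfrac32$) tends to $0$ by Hölder, while for the second summand one further splits $\R^3$ into $B_M(R_i)\setminus B_1(R_i)$, again controlled by Hölder and the $L^p$-smallness, and $B_M(R_i)^c$, on which $\tfrac{1}{|x-R_i|}\le\tfrac1M$ and $\int_{\R^3}\rho_{\gamma_n}=\lambda$; letting $n\to\infty$ and then $M\to\infty$ gives $\int_{\R^3}V\rho_{\gamma_n}\to 0$.

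Finally, since $T[\gamma_n]\ge 0$ and $J[\rho_{\gamma_n}]\ge 0$, the previous steps give $\liminf_{n\to\infty}\E[\gamma_n]\ge 0$, which contradicts $\E[\gamma_n]\to I_\lambda<0$ from part 2 of Lemma \ref{lem:subadditivity}; hence $(\rho_{\gamma_n})_n$ cannot vanish. The only genuinely delicate point is the first step, namely upgrading ``vanishing'' to strong $L^p$-convergence of $\rho_{\gamma_n}$ — which requires the uniform $H^1$-bound on $\sqrt{\rho_{\gamma_n}}$ — and then handling the pairing with the singular Coulomb potential $V$; once $\rho_{\gamma_n}\to 0$ in $L^p$ for all $1<p<3$, the conclusion is immediate from the strict negativity of the infimum.
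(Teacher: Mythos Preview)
Your argument is correct and mirrors exactly the reasoning the paper itself deploys. Although the lemma is stated without proof (it is cited from \cite{cances}), the paper reproduces the same argument in Case~1 (\textcolor{blue}{Vanishing}) of the lower-bound proof in Section~\ref{sec:proof}: bound $\|\nabla\sqrt{\rho_{\gamma_n}}\|_{L^2}$ via Lemma~\ref{lem:bounds}, apply Lions' $L^p$-vanishing lemma with $p=q=2$ to get $\sqrt{\rho_{\gamma_n}}\to 0$ in $L^\alpha$ for $\alpha\in(2,6)$, conclude $E_{xc}[\rho_{\gamma_n}]\to 0$ from Assumption~\ref{assumption1} and $\int V\rho_{\gamma_n}\to 0$ by splitting the Coulomb potential, and reach a contradiction with $I_\lambda<0$. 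Your treatment of the Coulomb term via a near/far decomposition is a cosmetic variant of the paper's $L^q+L^r$ splitting ($q<3<r$); both are standard and equivalent here.
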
 

Additionally we remark the classical continuity properties.
\begin{lemma}[Continuity \cite{cances}] \label{lem:continuity_E}
The three functionals
\(
\E^{X_2}, \E^{X}, \E^{\infty}
\)
are all continuous on the space $\H =\{ \gamma \in \mathfrak{S}_1 : |\nabla| \gamma |\nabla| \in \mathfrak{S}_1 \}$.
\end{lemma}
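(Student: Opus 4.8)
The plan is to exploit that all three functionals share the form $\E^V[\gamma] = T[\gamma] + \intr V \rho_\gamma \di x + J[\rho_\gamma] + E_{xc}[\rho_\gamma]$ with $\rho_\gamma(x) = 2\gamma(x,x)$, differing only in the external potential $V$; since $V^{X_2}_R$ and $V^X$ are (sums of) Coulomb potentials and $\E^\infty$ is the case $V = 0$, it suffices to show that each of the four maps $\gamma \mapsto T[\gamma]$, $\gamma \mapsto \intr V \rho_\gamma\di x$, $\gamma \mapsto J[\rho_\gamma]$, $\gamma \mapsto E_{xc}[\rho_\gamma]$ is continuous on $\H$ and then use the triangle inequality. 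The first step I would carry out is the density estimate underpinning everything: for $\gamma \in \H$ one has $\norm{\rho_\gamma}_{L^1} = 2\tr(\gamma)$, and by the Hoffmann-Ostenhof inequality $\norm{\nabla\sqrt{\rho_\gamma}}_{L^2}^2 \leq C\,\tr(-\Lap\gamma)$ together with the Sobolev embedding $\dot H^1(\R^3)\hookrightarrow L^6(\R^3)$ one gets $\rho_\gamma \in L^3(\R^3)$ with $\norm{\rho_\gamma}_{L^3} \leq C\,\tr(-\Lap\gamma)$; moreover $\gamma \mapsto \rho_\gamma$ is Lipschitz from $\mathfrak{S}_1$ to $L^1$, since the Jordan decomposition $\gamma_n - \gamma = A_+ - A_-$ (with $A_\pm \geq 0$, $\norm{A_+}_{\mathfrak{S}_1} + \norm{A_-}_{\mathfrak{S}_1} = \norm{\gamma_n - \gamma}_{\mathfrak{S}_1}$) gives $\norm{\rho_{\gamma_n} - \rho_\gamma}_{L^1} \leq \intr(\rho_{A_+} + \rho_{A_-})\di x = 2\norm{\gamma_n - \gamma}_{\mathfrak{S}_1}$. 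Consequently, if $\gamma_n \to \gamma$ in $\H$ then $\rho_{\gamma_n} \to \rho_\gamma$ in $L^1$ while staying bounded in $L^3$, hence by interpolation $\rho_{\gamma_n} \to \rho_\gamma$ strongly in $L^p(\R^3)$ for every $1 \leq p < 3$.

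With that in hand, three of the four terms go quickly. The kinetic functional is linear on $\H$ and $\big|T[\gamma_n] - T[\gamma]\big| = \big|\tr\big(|\nabla|(\gamma_n - \gamma)|\nabla|\big)\big| \leq \tr\big(\big||\nabla|(\gamma_n - \gamma)|\nabla|\big|\big) \leq \norm{\gamma_n - \gamma}_\H$, so it is even Lipschitz. For the external term, I would write $V = V_1 + V_2$ with $V_1 \in L^q(\R^3)$ for a fixed $q \in (\tfrac32, 3)$ (the Coulomb singularities lie in $L^q_{\mathrm{loc}}$ for every $q < 3$) and $V_2 \in L^\infty(\R^3)$, and bound, by Hölder, $\big|\intr V(\rho_{\gamma_n} - \rho_\gamma)\di x\big| \leq \norm{V_1}_{L^q}\norm{\rho_{\gamma_n} - \rho_\gamma}_{L^{q'}} + \norm{V_2}_{L^\infty}\norm{\rho_{\gamma_n} - \rho_\gamma}_{L^1}$, which tends to $0$ since $q' = \tfrac{q}{q-1} \in (\tfrac32, 3)$. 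For the Hartree term, the Hardy--Littlewood--Sobolev inequality $|D[f,g]| \leq C\norm{f}_{L^{6/5}}\norm{g}_{L^{6/5}}$ together with bilinearity gives $\big|J[\rho_{\gamma_n}] - J[\rho_\gamma]\big| \leq C\norm{\rho_{\gamma_n} - \rho_\gamma}_{L^{6/5}}\big(\norm{\rho_{\gamma_n}}_{L^{6/5}} + \norm{\rho_\gamma}_{L^{6/5}}\big) \to 0$, since $\tfrac65 \in (1,3)$.

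The exchange--correlation term is the delicate one, and the step I expect to be the main obstacle. From parts 1 and 3 of Assumption \ref{assumption1}, integrating $e_{xc}'$ gives $|e_{xc}(\rho)| \leq C\big(\rho^{1+\beta_-} + \rho^{1+\beta_+}\big)$ with $1 + \beta_\pm \in (1, \tfrac53)$, so $E_{xc}$ is finite on $\H$; and the mean value theorem plus part 3 give the pointwise bound $|e_{xc}(a) - e_{xc}(b)| \leq C|a - b|\big(a^{\beta_-} + b^{\beta_-} + a^{\beta_+} + b^{\beta_+}\big)$, hence $\big|E_{xc}[\rho_{\gamma_n}] - E_{xc}[\rho_\gamma]\big| \leq C\intr |\rho_{\gamma_n} - \rho_\gamma|\big(\rho_{\gamma_n}^{\beta_-} + \rho_\gamma^{\beta_-} + \rho_{\gamma_n}^{\beta_+} + \rho_\gamma^{\beta_+}\big)\di x$. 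I would estimate each of the four summands by Hölder with conjugate exponents $\tfrac1{p_\pm} + \tfrac1{p_\pm'} = 1$, choosing $p_\pm \in \big[\tfrac{3}{3 - \beta_\pm},\, \tfrac{1}{1 - \beta_\pm}\big]$: this interval is nonempty because $\beta_\pm \geq 0$, lies \emph{strictly} inside $(1,3)$ precisely because $0 < \beta_\pm < \tfrac23$, and is exactly the range for which the dual exponent satisfies $\beta_\pm p_\pm' \in [1,3]$. For such a choice $\norm{\rho_{\gamma_n} - \rho_\gamma}_{L^{p_\pm}} \to 0$ by the first paragraph, while $\norm{\rho_{\gamma_n}^{\beta_\pm}}_{L^{p_\pm'}} = \norm{\rho_{\gamma_n}}_{L^{\beta_\pm p_\pm'}}^{\beta_\pm}$ stays bounded (the exponent lies in $[1,3]$, where $(\rho_{\gamma_n})_n$ is bounded), and the same holds for the $\rho_\gamma^{\beta_\pm}$ summands; summing, $E_{xc}[\rho_{\gamma_n}] \to E_{xc}[\rho_\gamma]$. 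Adding the four terms gives the continuity of $\E^V$, and hence of $\E^{X_2}$, $\E^X$ and $\E^\infty$. The only real difficulty is this last bookkeeping: one must arrange that every factor simultaneously sits at a Lebesgue exponent below the Sobolev threshold $3$ (to be uniformly bounded) while the difference factor has exponent strictly below $3$ (to converge), and it is precisely the hypothesis $\beta_+ < \tfrac23$ in Assumption \ref{assumption1} that makes this possible.
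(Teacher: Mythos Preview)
The paper does not give its own proof of this lemma; it is quoted from \cite{cances} (see the remark immediately preceding Lemma~\ref{lem:subadditivity}, which says these results are ``proven there'' and merely summarized here). So there is nothing in the paper to compare your argument against.

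That said, your argument is the standard one and is correct. The decomposition into the four terms, the Lipschitz continuity of $\gamma \mapsto \rho_\gamma$ from $\mathfrak{S}_1$ to $L^1$ via Jordan decomposition, the $L^3$ bound via Hoffmann--Ostenhof plus Sobolev, and the resulting $L^p$-convergence for $1 \le p < 3$ by interpolation are exactly the ingredients needed; your H\"older bookkeeping for the exchange term, with $p_\pm \in \big[\tfrac{3}{3-\beta_\pm}, \tfrac{1}{1-\beta_\pm}\big] \subset (1,3)$, is precisely where the constraint $\beta_+ < \tfrac23$ of Assumption~\ref{assumption1} is used, as you note. One minor caveat worth flagging: the functionals are strictly speaking only defined where $\rho_\gamma \ge 0$ (otherwise $e_{xc}(\rho_\gamma)$ and $\sqrt{\rho_\gamma}$ are ill-defined), so ``continuous on $\H$'' should be read as continuity on the positive cone, or on $K$; this is in any case how the paper invokes the lemma in Section~\ref{sec:upper_bound}.
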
{}

The next lemma summarizes the standard bounds on the energy functional. We note that in the following C will describe a generic constant, which may have different values
at each appearance, indicating some finite positive constant independent of the surrounding
variables.
\begin{lemma}[Bounds on the energy functional \cite{cances}] \label{lem:bounds}
For all $\gamma \in K$, where $K$ is the convex set defined in \eqref{eq:definition_set_k}, we get $\sqrt{\rho_\gamma} \in H^1(\R^3)$ and the following inequalities:
\begin{enumerate}[(i)]
\item Lower bound on the kinetic energy:
\begin{equation} \label{Lower bound on the kinetic energy}
\frac{1}{2} || \nabla \sqrt{\rho_\gamma} ||^{2}_{L^2} \leq   \tr[- \Lap \gamma]
\end{equation}

\item Upper bound on the Coulomb energy:
\begin{equation} \label{Bound-Coulomb}
0 \leq J[\rho_\gamma] \leq C \, \tr[\gamma]^{\frac{3}{2}} \tr[- \Lap \gamma]^{\frac{1}{2}}
\end{equation}

\item Bounds on the interaction energy between nuclei and electrons: \noeqref{interactionbounds}
\begin{equation}
- 4 Z \tr[\gamma]^{\frac{1}{2}} \tr[- \Lap \gamma]^{\frac{1}{2}} \leq \int_{\R^3} \rho_{\gamma}(x) V(x) \di x \leq 0 \label{interactionbounds}
\end{equation}

\item Bounds on the exchange-correlation energy:
\begin{equation} \label{Bounds on the exchange-correlation energy}
-C \left( \tr[\gamma]^{1 - \frac{\beta_{-}}{2}}
\tr[- \Lap \gamma]^{\frac{3 \beta_{-}}{2}}  
+ \tr[\gamma]^{1 - \frac{\beta_{+}}{2}} \tr[- \Lap \gamma]^{\frac{3 \beta_{+}}{2}} \right) 
\leq E_{xc}[\rho_\gamma] \leq 0
\end{equation}

\item Lower bound on the energy: \noeqref{energybound}
\begin{equation}
\E[\gamma] 
\geq \frac{1}{2} \left( \tr[-\Lap \gamma]^{\frac{1}{2}} - 4 Z\tr[\gamma]^{\frac{1}{2}} \right)^{2} - 8 Z^2 \tr[\gamma] - C \left( \tr[\gamma]^{\frac{2- \beta_{-}}{2 - 3 \beta_{-}}} + \tr[\gamma]^{\frac{2- \beta_{+}}{2 - 3 \beta_{+}}} \right) \label{energybound}
\end{equation}

\item Lower bound on the energy at infinity: \noeqref{energyinfinity}
\begin{equation}
\E^{\infty}[\gamma] \geq \frac{1}{2} \tr[-\Lap \gamma] - C \left( \tr[\gamma]^{\frac{2- \beta_{-}}{2 - 3 \beta_{-}}} + \tr[\gamma]^{\frac{2- \beta_{+}}{2 - 3 \beta_{+}}} \right). \label{energyinfinity}
\end{equation}
\end{enumerate}
In particular, minimizing sequences of $I_\lambda$ \eqref{eq:ground_state_molecule} and  $I^{\infty}_\lambda$ \eqref{eq:problem_at_infinity} are bounded in $\H$.

\end{lemma}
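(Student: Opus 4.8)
The plan is to derive \eqref{Lower bound on the kinetic energy}--\eqref{energyinfinity} one inequality at a time, starting in each case from the spectral decomposition $\gamma=\sum_{i\ge 1}\lambda_i\lvert\phi_i\rangle\langle\phi_i\rvert$ of an element $\gamma\in K$, for which $\rho_\gamma=2\sum_i\lambda_i|\phi_i|^2$, $\tr[\gamma]=\sum_i\lambda_i$ and $\tr[-\Lap\gamma]=\sum_i\lambda_i\norm{\nabla\phi_i}_{L^2}^2$, and invoking a handful of classical functional inequalities (Hoffmann--Ostenhof, Hardy--Littlewood--Sobolev, Sobolev, Hardy, and Young). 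For \eqref{Lower bound on the kinetic energy} I would write $\nabla\rho_\gamma=4\sum_i\lambda_i\,\mathrm{Re}(\overline{\phi_i}\nabla\phi_i)$ and apply Cauchy--Schwarz in $i$ pointwise to get $|\nabla\rho_\gamma(x)|\le 4\big(\tfrac12\rho_\gamma(x)\big)^{1/2}\big(\sum_i\lambda_i|\nabla\phi_i(x)|^2\big)^{1/2}$, hence $|\nabla\sqrt{\rho_\gamma}|^2=|\nabla\rho_\gamma|^2/(4\rho_\gamma)\le 2\sum_i\lambda_i|\nabla\phi_i|^2$ pointwise; integrating yields \eqref{Lower bound on the kinetic energy}, and since in addition $\norm{\sqrt{\rho_\gamma}}_{L^2}^2=\intr\rho_\gamma(x)\di x=2\tr[\gamma]<\infty$, this simultaneously establishes $\sqrt{\rho_\gamma}\in H^1(\R^3)$.

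For \eqref{Bound-Coulomb}, positivity of $J$ is the positive-definiteness of the Coulomb kernel, while the upper bound follows by chaining $J[\rho_\gamma]\le C\norm{\rho_\gamma}_{L^{6/5}}^2$ (Hardy--Littlewood--Sobolev), the interpolation $\norm{\rho_\gamma}_{L^{6/5}}\le\norm{\rho_\gamma}_{L^1}^{3/4}\norm{\rho_\gamma}_{L^3}^{1/4}$, the Sobolev embedding $\norm{\rho_\gamma}_{L^3}=\norm{\sqrt{\rho_\gamma}}_{L^6}^2\le C\norm{\nabla\sqrt{\rho_\gamma}}_{L^2}^2$, and \eqref{Lower bound on the kinetic energy}. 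For \eqref{interactionbounds} the upper bound is immediate from $V\le 0\le\rho_\gamma$; for the lower bound — which I would carry out for $V=V^X$, the molecular potential being treated the same way on each center — I would use Cauchy--Schwarz, $\intr\frac{\rho_\gamma(x)}{|x|}\di x\le\big(\intr\rho_\gamma(x)\di x\big)^{1/2}\big(\intr\frac{\rho_\gamma(x)}{|x|^2}\di x\big)^{1/2}$, then Hardy's inequality $\intr\frac{|u|^2}{|x|^2}\di x\le 4\intr|\nabla u|^2\di x$ with $u=\sqrt{\rho_\gamma}$ together with \eqref{Lower bound on the kinetic energy}.

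For \eqref{Bounds on the exchange-correlation energy}, parts (1) and (2) of Assumption \ref{assumption1} give $e_{xc}\le e_{xc}(0)=0$ on $\R_+$, hence the upper bound $E_{xc}[\rho_\gamma]\le 0$; for the lower bound, integrating part (3) gives $|e_{xc}(\rho)|=\big|\int_0^\rho e_{xc}'(t)\di t\big|\le C\big(\rho^{1+\beta_-}+\rho^{1+\beta_+}\big)$, so that $|E_{xc}[\rho_\gamma]|\le C\big(\norm{\rho_\gamma}_{L^{1+\beta_-}}^{1+\beta_-}+\norm{\rho_\gamma}_{L^{1+\beta_+}}^{1+\beta_+}\big)$, and interpolating $\norm{\rho_\gamma}_{L^{1+\beta}}$ (for $\beta\in\{\beta_-,\beta_+\}$) between $L^1$ and $L^3$ — legitimate because $0<\beta<\tfrac23$ forces $1<1+\beta<3$ — with weight $\theta=\tfrac{2-\beta}{2(1+\beta)}$, followed by Sobolev and \eqref{Lower bound on the kinetic energy}, produces precisely $\tr[\gamma]^{1-\beta/2}\tr[-\Lap\gamma]^{3\beta/2}$.

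Finally, \eqref{energybound} is assembled from \eqref{Lower bound on the kinetic energy}--\eqref{Bounds on the exchange-correlation energy}: one has $\E[\gamma]\ge\tr[-\Lap\gamma]-4Z\tr[\gamma]^{1/2}\tr[-\Lap\gamma]^{1/2}-C\big(\tr[\gamma]^{1-\beta_-/2}\tr[-\Lap\gamma]^{3\beta_-/2}+\tr[\gamma]^{1-\beta_+/2}\tr[-\Lap\gamma]^{3\beta_+/2}\big)$; then, splitting $\tr[-\Lap\gamma]=\tfrac12\tr[-\Lap\gamma]+\tfrac12\tr[-\Lap\gamma]$, completing the square in the first half to get $\tfrac12\big(\tr[-\Lap\gamma]^{1/2}-4Z\tr[\gamma]^{1/2}\big)^2-8Z^2\tr[\gamma]$, and using Young's inequality on the two exchange-correlation terms — possible precisely because $3\beta_\pm/2<1$ — to absorb them into the remaining $\tfrac12\tr[-\Lap\gamma]$ at the cost of $\tr[\gamma]^{(2-\beta_\pm)/(2-3\beta_\pm)}$; estimate \eqref{energyinfinity} is the same computation with the electron--nuclei term simply absent. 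The final claim then follows since along a minimizing sequence $\E[\gamma_n]$ (resp.\ $\E^\infty[\gamma_n]$) is bounded above while $\tr[\gamma_n]=\lambda$ is fixed, so \eqref{energybound} (resp.\ \eqref{energyinfinity}) forces $\tr[-\Lap\gamma_n]$ to stay bounded, i.e.\ $(\gamma_n)_n$ is bounded in $\H$. I expect the only genuinely delicate point to be \eqref{Bounds on the exchange-correlation energy} — one must keep track of the interpolation exponents and their consistency with the Sobolev embedding — and it is there, together with the Young step in \eqref{energybound}, that the hypothesis $\beta_\pm<\tfrac23$ is used in an essential way; everything else is a routine chaining of standard inequalities.
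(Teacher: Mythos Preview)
Your proof is correct and complete. The paper does not actually prove Lemma~\ref{lem:bounds}: it is stated as a result taken from \cite{cances}, so there is no in-text proof to compare against. That said, your argument is the standard one (and essentially the one in the cited reference): Hoffmann--Ostenhof for (i), Hardy--Littlewood--Sobolev plus $L^1$--$L^3$ interpolation and Sobolev for (ii), Cauchy--Schwarz plus Hardy for (iii), the pointwise bound $|e_{xc}(\rho)|\le C(\rho^{1+\beta_-}+\rho^{1+\beta_+})$ followed by interpolation for (iv), and then completing the square plus Young for (v)--(vi). Your exponent bookkeeping is accurate, and you correctly identify that $\beta_\pm<\tfrac23$ is what makes both the interpolation in (iv) and the Young step in (v)--(vi) go through.
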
{}

Lemma \ref{lem:bounds} is a central point for the existence of minimizer in the fixed nuclei setting but more importantly for us it bounds the minimizers independently of the position of the nuclei.

Let us now restate the main result of this paper.
\begin{theorem}[Dissociation limit] \label{thm:main}
Let $I^{X_2}_{\lambda, R}$ and $I^{X}_\lambda$ be defined by \eqref{eq:ground_state_molecule}, then we have for positively and neutrally charged molecules, i.e.~for $\lambda \leq 2Z$,
\begin{equation}\label{eq:main}
    \lim\limits_{R \to \infty} I_{\lambda, R}^{X_2} = \min \limits_{\alpha \in [0,\lambda]} \big( I^{X}_\alpha + I^{X}_{\lambda - \alpha} \big). 
\end{equation}{}
\end{theorem}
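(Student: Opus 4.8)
The plan is to prove the two inequalities separately. The upper bound $\limsup_{R\to\infty} I^{X_2}_{\lambda,R} \le \min_{\alpha}(I^X_\alpha + I^X_{\lambda-\alpha})$ is the easy direction: given any $\alpha\in[0,\lambda]$ and near-minimizers $\gamma_1\in K_\alpha$, $\gamma_2\in K_{\lambda-\alpha}$ for $I^X_\alpha$ and $I^X_{\lambda-\alpha}$ respectively, one builds a trial state for the molecule by placing a compactly supported (or exponentially localized) approximation of $\gamma_1$ near the origin and a translate $\tau_R$ of $\gamma_2$ near the second nucleus. For $R$ large these two pieces have essentially disjoint supports, so the trace class operator $\gamma_1 \oplus \tau_R\gamma_2$ lies in $K_\lambda$ up to a negligible correction, the kinetic and exchange-correlation energies are additive up to $o(1)$, the Hartree cross-term $D[\rho_1, \tau_R\rho_2]$ is $O(1/R)$, and the cross nuclear-attraction terms (electrons of one atom feeling the other nucleus) are also $O(1/R)$. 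Taking $R\to\infty$ and optimizing over $\alpha$ gives the bound; care is needed only in truncating $\gamma_1,\gamma_2$ to compact support while controlling the energy, which is standard given Lemma \ref{lem:bounds} and Lemma \ref{lem:continuity_E}.

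The lower bound is the heart of the matter: $\liminf_{R\to\infty} I^{X_2}_{\lambda,R} \ge \min_\alpha(I^X_\alpha + I^X_{\lambda-\alpha})$. Here I would take a sequence $R_n\to\infty$ realizing the liminf and, crucially, a sequence of \emph{minimizers} $\gamma_n\in K_\lambda$ for $I^{X_2}_{\lambda,R_n}$ (these exist by the remark citing \cite{cances}, since $\lambda\le 2Z$). By Lemma \ref{lem:bounds} the sequence $(\gamma_n)_n$ is bounded in $\H$ uniformly in $n$ — this uniformity in the nuclear position is exactly the point emphasized after Lemma \ref{lem:bounds}. Now apply the concentration-compactness alternative to the densities $\rho_{\gamma_n}$ (or to $\gamma_n$ itself via its density). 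Vanishing is excluded — but note the cited non-vanishing lemma is stated for minimizing sequences of the \emph{fixed}-$R$ problem, so one must rule out vanishing directly here, using that $\E^{X_2}_{R_n}[\gamma_n] = I^{X_2}_{\lambda,R_n} \le I^X_\lambda < 0$ bounded away from $0$ together with the energy lower bound, which forces the density not to spread out. In the compactness (concentration) case, the mass stays together and drifts to one of the two nuclei (or escapes to infinity); splitting off the translation one obtains in the limit an admissible $\gamma$ for either $I^X_\lambda$, $I^X_0 + I^X_\lambda$, or $I^\infty_\lambda$, each of which is $\ge \min_\alpha(I^X_\alpha+I^X_{\lambda-\alpha})$ by subadditivity (Lemma \ref{lem:subadditivity}, items 2 and 3, since $I^\infty \ge I^X$). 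In the dichotomy case, one uses a partition of unity $\xi,\zeta$ (the cutoffs $\xi,\zeta$ from the list of symbols) to split $\gamma_n$ into two pieces $\gamma_n^{(1)},\gamma_n^{(2)}$ with masses converging to $\alpha$ and $\lambda-\alpha$ for some $\alpha\in(0,\lambda)$, living at distance $\to\infty$ from each other; one shows the energy is asymptotically superadditive, $\E^{X_2}_{R_n}[\gamma_n] \ge \E^{X_2}_{R_n}[\gamma_n^{(1)}] + \E^{X_2}_{R_n}[\gamma_n^{(2)}] - o(1)$, controlling the localization error in the kinetic energy (IMS-type formula for density matrices), the positivity of the cross Hartree term, and the fact that the exchange-correlation integrand splits up to an error supported on the overlap region (which has vanishing mass). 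Each of the two pieces sits near at most one nucleus (or neither), so each energy is bounded below by $I^X_{\mu}$ or $I^\infty_\mu$ for the appropriate mass $\mu$; passing to the limit and using continuity of $\lambda\mapsto I^X_\lambda$ (Lemma \ref{lem:subadditivity}) yields $\liminf \ge I^X_\alpha + I^X_{\lambda-\alpha} \ge \min$.

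The main obstacle is the dichotomy step, and specifically two intertwined difficulties. First, because the nuclei are themselves moving apart, one must track whether a splitting of the electronic mass is induced by the geometry (one chunk per nucleus) or is a genuine internal dichotomy of a chunk near a single nucleus, and then correctly attribute each limiting piece to an atomic problem $I^X$ or to the problem at infinity $I^\infty$; getting the bookkeeping to always land on a sum of the form $I^X_\mu + I^X_{\lambda-\mu}$ (rather than, say, $I^X + I^\infty$, which is only $\ge$ by Lemma \ref{lem:subadditivity}) requires using subadditivity to absorb any "leftover mass at infinity" into one of the atomic terms. Second, the quantitative localization estimates must be uniform in $n$ despite $R_n\to\infty$: the cutoff functions must be chosen adapted to the (growing) separation scale, and one needs the Hartree cross-term and the exchange-correlation overlap error to vanish, which relies on the uniform $\H$-bound from Lemma \ref{lem:bounds} and on Assumption \ref{assumption1} (the growth bound $|e_{xc}'(\rho)|\le C(\rho^{\beta_-}+\rho^{\beta_+})$ with $\beta_\pm<2/3$) to bound $E_{xc}$ on regions of small mass. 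Handling the density-matrix (rather than wavefunction or pure-density) setting adds a layer of technicality to the localization, but the structure is the one the authors advertise: apply concentration-compactness to a sequence of minimizers rather than a minimizing sequence.
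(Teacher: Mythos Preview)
Your upper bound and the overall concentration--compactness skeleton for the lower bound match the paper. The gap is in the dichotomy step.

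You assert that after splitting $\gamma_n$ into $\gamma_n^{(1)}$ (inside a ball around the dichotomy center $y_n$) and $\gamma_n^{(2)}$ (outside a larger ball), ``each of the two pieces sits near at most one nucleus (or neither).'' This is false in general. If $y_n$ drifts away from \emph{both} nuclei, then $\gamma_n^{(1)}$ indeed sees neither nucleus and contributes $\ge I^\infty_\alpha$; but $\gamma_n^{(2)}$ lives in the complement of a ball that is itself far from both nuclei, and can therefore carry mass near \emph{both} $0$ and $R_n$ simultaneously. Its energy is then only bounded below by $I^{X_2}_{\lambda-\alpha,R_n}$, not by any atomic quantity. (The mirror case, where both nuclei lie inside the inner ball, is the same with the roles of the two pieces exchanged.) So after one round of dichotomy you may only obtain
\[
\liminf_n I^{X_2}_{\lambda,R_n}\;\ge\; I^\infty_\alpha + \liminf_n I^{X_2}_{\lambda-\alpha,R_n},
\]
which is the original problem at smaller mass. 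The paper therefore iterates: if the procedure terminates after finitely many steps (landing in concentration, or in the ``good'' dichotomy sub-case where the cut separates the two nuclei), one collects $\sum_l I^\infty_{\alpha_l}+I^X_\beta+I^X_{\lambda-\sum_l\alpha_l-\beta}$ and then absorbs the $I^\infty$ pieces into the atomic ones via subadditivity, exactly as you anticipated.

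The substantive missing ingredient is ruling out an \emph{infinite} iteration. This is where the paper uses, for the first and only time, that the $\gamma_n$ are genuine minimizers: they satisfy Euler--Lagrange equations $h_{\gamma_n}\phi^l_n=-\theta^l_n\phi^l_n$, and a Rayleigh--Ritz argument (exploiting the strict positivity of the charge defect after the first splitting) shows the eigenvalues $\theta^l_n$ are bounded away from zero uniformly in $n$. These equations survive localization; if infinitely many nonzero chunks split off to infinity, their masses must tend to zero, and an elliptic-regularity argument forces the associated eigenvalues to zero as well, a contradiction. So working with minimizers rather than minimizing sequences is not merely a structural choice ``the authors advertise'' --- it is precisely what closes the lower bound. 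Without this Euler--Lagrange input (or an equivalent such as uniform-in-$n$ decay of $\rho_{\gamma_n}$ away from the nuclei, which would again require the equations), your dichotomy argument does not terminate.
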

\begin{proof}
The proof of Theorem \ref{thm:main} is quite technical and is split into several parts. Since we first want to concentrate on its implications and in order to help with the reading flow of this paper we moved it into its own Section \ref{sec:proof}.
\end{proof}
Theorem \ref{thm:main} says the energy of the $X_2$-molecule converges -- as the nuclei are pulled infinitely far apart -- to the minimum over distributing the amounts of electrons $\lambda$ on two separated $X$-atoms.
For linear problems this directly gives $2I^X_{\nicefrac{\lambda}{2}}$, i.e.~a symmetric splitting, but for nonlinear problems $\alpha \mapsto I^X_\alpha + I^X_{\lambda-\alpha}$ might take its minimum at another value.
Whether the right hand side gives the expected symmetric minimum or not, will be discussed on the basis of the $H_2$ molecule in the next Section.

We want to stress again that we consider the spin-restricted setting also for the two individual atoms. Hence applying it to an $H$-Atom with a single electron has to be taken with a grain of salt.

Before we move on, let us make the following remarks regarding the modeling perspective of Theorem \ref{thm:main}.

\begin{remark}[possible generalizations]
We remark that the result of Theorem \ref{thm:main} can be extended to a large class of finite systems, i.e. arbitrary molecules, provided we stay in a neutral or positively charged setting and the minimum distance between the nuclei of each sub-system goes to zero. 

To ease presentation and focus on the main implications, we concentrate on the diatomic case, most prevalent in the related physics literature (see e.g. \cite{lu_symmetrybreaking,jackson1987dynamics,Zhang2004DissociationEO}).
Moreover Theorem \ref{thm:main} does not guarantee a symmetric splitting even in the case of hydrogen, so a simple extension to larger systems seems to be of limited interest. 

\end{remark}

\begin{remark}[spin-restricted vs. spin-unrestricted]
For the question of existence of minimizers the extension to spin-polarized systems was of independent interest \cite{cances, Gontier_2014}. 
Conversely, we expect that the equivalent of Theorem \ref{thm:main} would be a quantitative result about distribution of mass of the eigenfunctions of the density matrix, which does not favor a straightforward interpretation in terms of individual orbitals or electrons.
This also seems to be the reason why the spin-restricted case is more prevalent in practical computations.

Finally, staying in the spin-restricted setting allows for a comparison with models outside the Kohn-Sham framework as done in Subsections \ref{sec:one_dim} and \ref{sec:three_dim}.

So unless new techniques are developed to improve on our result an extension in this direction also seems to be of limited interest.




\end{remark}


\section{Symmetric Dissociation or not ? \label{sec:dissociation}}

The question which arises now is of course: Does 
    \begin{align} \label{eq:minimum}
           \min \limits_{\alpha \in [0,1]} \big( I^{H}_\alpha + I^{H}_{2 - \alpha} \big) \overset{?}{=} 2 I^H_1,
    \end{align}

hold or not, i.e.~do we have the right dissociation limit which we expect from physical intuition or which holds also for the Schr\"odinger equation.
The answer is it depends on the ``strength'' of the exchange-correlation functional.
To discuss this further we consider in the following only the Dirac exchange $e_{xc}(\rho) = - c_{xc} \rho^{\nicefrac{4}{3}}$ -- the prototypical example arising from the homogeneous electron gas -- with the constant $c_{xc}$ determining the strength of the exchange term.

To get a better feeling for what determines if the splitting is symmetric or not, i.e. if $\alpha =1$ is the minimizer in \eqref{eq:minimum}, we consider first a one-dimensional model.
\subsection{One-dimensional model \label{sec:one_dim}}
As we will see in the following section, it is quite hard to determine when 
\begin{align*}
    \min_{\alpha \in [0,1]} \big( I^{H}_\alpha + I^{H}_{2-\alpha} \big) = 2 I_{1}^{H}.
\end{align*}{}

To understand the problem better we study in this section the one-dimensional problem.
Since the Coulomb potential is not well suited for the one dimensional case, we consider $v(x) = \delta_0(x)$, i.e.~a simple contact potential \cite{dft_1d_contact}. 
The corresponding full Schr\"odinger system for the $H_2$-molecule looks like

\begin{align} \label{eq:h2_hamiltonian_one_dim}
    I_{R}^{H_2} = \inf \limits_{\substack{\psi \in H^1(\R_\Sigma^2),\\ \norm{\psi}_{L^2}=1,\\ \psi \text{ antisymm.}}} \langle \psi , H(x,y) \psi \rangle, \quad 
H(x,y) = \sum_{z \in \{x,y\}} -\frac{1}{2} \diff[2]{}{z} - \delta_0(z)  -\delta_R(z) + \delta_{|x-y|}(z) 
\end{align}

and the energy for the $H$-atom becomes
\begin{align}
    I^{H} = \inf \limits_{\substack{\phi \in H^1(\R),\\ \norm{\phi}_{L^2}=1}} \langle \phi , h(x) \phi \rangle, \quad 
h(x) =  -\frac{1}{2} \diff[2]{}{x} - \delta_0(x) . 
\end{align}
Note that $\delta_{|x-y|}(x)$ in \eqref{eq:h2_hamiltonian_one_dim} denotes the delta-distribution, i.e.
\[
\int_\R \delta_{|x-y|}(x) f(x,z) \di x = f(y,z)
\]

As for the standard Schr\"odinger system also here we have the right dissociation limit.

\begin{proposition}[Dissociation limit for the Schr\"odinger setting] \label{prop:linear_case}
For the full Schr\"odinger setting we always have
\begin{align}
    \limit{R}{\infty} I^{H_2}_{R} = 2 I^H,
\end{align}
i.e. the right dissociation limit.
\end{proposition}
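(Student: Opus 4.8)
The plan is to prove the two inequalities $\limsup_{R\to\infty} I^{H_2}_R \le 2I^H$ and $\liminf_{R\to\infty} I^{H_2}_R \ge 2I^H$ separately, as is standard for dissociation limits.

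For the upper bound, I would construct a trial state. Let $\phi$ be a minimizer (or near-minimizer) for the atomic problem $I^H$, so $\phi \in H^1(\R)$, $\|\phi\|_{L^2}=1$, supported essentially near the origin; by a cutoff argument I may assume $\phi$ has compact support at the cost of an $\epsilon$-error. Take $\phi_R(x) := \phi(x-R)$ centered at the second nucleus, and form the antisymmetrized product (Slater determinant) $\psi_R(x,y) := \tfrac{1}{\sqrt{2}}\big(\phi(x)\phi_R(y) - \phi(y)\phi_R(x)\big)$, suitably normalized (the normalization constant tends to $1$ as $R\to\infty$ since the overlap $\langle \phi, \phi_R\rangle \to 0$, and in fact vanishes for large $R$ once supports are disjoint). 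Plugging $\psi_R$ into the Rayleigh quotient, the kinetic terms and the two single-particle $\delta$-potential terms split into two copies of the atomic energy as $R\to\infty$ (each electron "sees" essentially one nucleus, with the cross terms involving $\delta_R$ acting on $\phi$, or $\delta_0$ acting on $\phi_R$, being exponentially small in $R$ once the supports separate). The electron-electron contact interaction $\langle\psi_R, \delta_{|x-y|}\psi_R\rangle$ becomes $\int |\psi_R(x,x)|^2\,\mathrm{d}x$, which also vanishes for large $R$ since $\psi_R(x,x)$ is supported where both $\phi$ and $\phi_R$ are nonzero. Hence $\limsup_R I^{H_2}_R \le 2I^H + \epsilon$ for every $\epsilon>0$, giving the upper bound. (One must also check the spin/antisymmetry bookkeeping is consistent, but with two electrons in a spin-singlet this is routine.)

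For the lower bound, the cleanest route is to drop the nonnegative electron-electron term, $\delta_{|x-y|}\ge 0$, so that $H(x,y) \ge \sum_{z\in\{x,y\}}\big(-\tfrac12\partial_z^2 - \delta_0(z) - \delta_R(z)\big)$, a sum of two commuting one-body operators each of the form $h^{(2)} := -\tfrac12\partial^2 - \delta_0 - \delta_R$ acting in a single variable. Therefore $I^{H_2}_R \ge 2\,\lambda_1(h^{(2)})$, where $\lambda_1(h^{(2)})$ is the bottom of the spectrum of the double-well contact Hamiltonian in one dimension. It then remains to show $\lambda_1(h^{(2)}) \to I^H$ as $R\to\infty$. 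This is a standard fact about well-separated potential wells: the ground state energy of $-\tfrac12\partial^2 - \delta_0 - \delta_R$ converges to that of a single $\delta$-well $-\tfrac12\partial^2 - \delta_0$ (equal to $I^H$), e.g. by using the explicit $\delta$-well eigenfunction translated to each site as a trial function for the upper bound on $\lambda_1$, and for the lower bound by an IMS-type localization splitting $\R$ into a neighborhood of $0$ and a neighborhood of $R$, on each of which the operator is bounded below by $I^H - o(1)$. Combining, $\liminf_R I^{H_2}_R \ge 2I^H$.

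The main obstacle is the lower-bound half, specifically the spectral-convergence claim $\lambda_1\big(-\tfrac12\partial^2 - \delta_0 - \delta_R\big) \to I^H$: one has to make the localization argument rigorous for $\delta$-potentials (which are form-bounded but not operator-bounded), controlling the error terms from the IMS partition of unity and verifying that the "tail" pieces do not produce spurious low-lying energy below $I^H - o(1)$. This is where I would spend the most care; the upper bound and the electron-electron term are comparatively soft once compact-support reductions are in place.
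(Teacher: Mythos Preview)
Your proposal is correct and follows essentially the same route as the paper: a compactly supported trial state at the two nuclei for the upper bound, and for the lower bound dropping the nonnegative contact interaction, reducing to the one-body double-well operator $-\tfrac12\partial^2-\delta_0-\delta_R$, and showing its bottom converges to $I^H$ via an IMS localization. The only cosmetic difference is that the paper first disposes of the antisymmetry constraint by a spin trick (any spatial two-body function can be antisymmetrized via the singlet spin factor, so $I^{H_2}_R$ equals the unconstrained infimum over $H^1(\R^2)$), and then uses a plain tensor product $\phi\otimes\phi(\cdot-R)$ as trial state and the bound $\inf\langle\psi,(\tilde h(x)+\tilde h(y))\psi\rangle = 2\inf\langle\phi,\tilde h\phi\rangle$ directly; you instead build a Slater determinant and invoke the spin-singlet remark, which amounts to the same thing.
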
{}
\begin{proof}
See Section \ref{sec:proof_linear_case}.
\end{proof}{}
Note that Theorem \ref{thm:main} gives exactly the same result in the nonlinear case, but in the linear case every pair $(\alpha, 2-\alpha)$ gives the same result, so we always have symmetric dissociation.


Now we consider the DFT version of this system.
Note that in this case the Hartree term takes the form
\begin{align*}
    J[\rho] = \frac{1}{2} \int \int \rho(x) v(x-y) \rho(y) \di x \di y
    = 
    \frac{1}{2} \int \rho^2(x) \di x.
\end{align*}{}
Furthermore the exchange energy per volume looks like $e_{xc}(\rho) = - c_{xc} \rho^2$, where the exponent is $2 = 1 + \tfrac{1}{d}$ and $c_{xc} = \tfrac{1}{4}$ see \cite{dft_1d_contact, laestadius2019onedimensional}.

In total our energy functional for the $H$-atom takes the form 
\begin{align}\nonumber 
    \E^{H}[\rho]
    &=
    \frac{1}{2} \int \big( \sqrt{\rho}'\big)^2 \di x
    - \int v \rho \di x 
    + \frac{1}{2} \int \int \rho(x) \rho(y) v(|x-y|) \di x \di y
    + E_{xc}[\rho] \\ \label{eq:one-dim_dft}
    &=
    \frac{1}{2} \int \big( \sqrt{\rho}'\big)^2 \di x
    - \rho(0)
    + \big(\tfrac{1}{2} - c_{xc}\big) \int \rho^2 \di x.
\end{align}
And analogously for the $H_2$-molecule
\begin{align*}
    \E^{H_2}[\rho]
    =
    \int \big( \sqrt{\rho}'\big)^2 \di x
    - \rho(0)
    -\rho(R)
    + \big(\tfrac{1}{2} - c_{xc}\big) \int |\rho|^2\di x.
\end{align*}{}

In the same way as in Section \ref{sec:proof} we can show the dissociation limit
\begin{align*}
    \limit{R}{\infty}I_R^{H_2}
    =
    \min\limits_{\alpha \in [0,1]} \bigg( I^{H}_\alpha + I_{2 - \alpha}^{H} \bigg).
\end{align*}{}

Due to replacing the Coulomb potential by a contact potential we simplify the problem because the Hartree and the exchange energy take the same form.
Hence the energy functional $\rho \mapsto \E^H[\rho]$ is clearly convex for $c_{xc} \leq \tfrac{1}{2}$, since the von-Weizsäcker kinetic energy is.

This property is inherited by the infimum. Take any $\rho_\alpha$, $\rho_\beta$ non-negative and with $L^1$-norm $\alpha$, $\beta$, respectively. Then, 
\begin{align*}
    I^H_{\lambda \alpha + (1-\lambda)\beta}
    \leq 
    \E^H[\lambda \rho_\alpha + (1-\lambda) \rho_\beta]
    \leq 
    \lambda \E^H[\rho_\alpha] + (1-\lambda) \E^H[\rho_\beta],
\end{align*}{}
taking the infimum over $\rho_\alpha$, $\rho_\beta$ gives the convexity of $\alpha \mapsto I_\alpha$.

Therefore we have for $c_{xc} \leq \tfrac{1}{2}$
\begin{align*}
    2 I^H_1 
    =
    2 I^H_{\tfrac{1}{2} \alpha + \tfrac{1}{2}(2-\alpha)}
    \leq 
    I^H_{\alpha} + I^H_{2- \alpha},
\end{align*}{}
so symmetric splitting occurs.

For $c_{xc} > \tfrac{1}{2}$ there is no symmetric splitting anymore.
In order to see this, note that taking the test-functions $(1\pm \eta)\rho_1$ with $\rho_1$ the minimizer to $I^H_1$ yields
\begin{align*} 
    I^H_{1+\eta} + I^H_{1-\eta}
    \leq 
    2 I^H_1 + 2 \eta^2 \big(\tfrac{1}{2 } - c_{xc} \big) \int \rho_1^2 \di x
    < 2 I^H_1,
\end{align*}{}
i.e.~$2I^H_1$ is the strict global maximum.
Furthermore in this setting, i.e.~the one-dimensional DFT system with contact potential given by the energy functional \eqref{eq:one-dim_dft}, the ground state density can be found explicitly, see e.g.~\cite{witthaut_1d_exactsolution}:
\begin{align} \label{eq:1d_wavefunction}
   \rho = \alpha |\psi|^2, \quad \text{with} \quad \psi(x) = a \cdot \mathrm{sech}\big( b |x| + x_0\big),
\end{align}{}
where the parameters $a,b,x_0$ only depend on $\alpha$ and $c_{xc}$ and are given by
\[
x_0 = \mathrm{arctanh}\big(  \frac{1}{b}\big),\quad
a = \sqrt{\frac{b^2}{2(b-1)}}, \quad
b = 1 - \alpha \frac{1-2c_{xc}}{2}.
\]
With this we obtain 
\begin{align}\label{eq:exact_one_dim_energy}
I^{H}_\alpha + I_{2 - \alpha}^{H}
=
\frac{1}{12} (\alpha^2 (3 - 12 c_{xc}^2) + 6 \alpha ( 4 c_{xc}^2-1) -
   4 (1 + 2 c_{xc} + 4 c_{xc}^2)),
\end{align}
where the exact integrals are carried out in the appendix.

Equation \eqref{eq:exact_one_dim_energy} directly implies
\[
\min\limits_{\alpha \in [0,1]} \bigg(
I^{H}_\alpha + I_{2 - \alpha}^{H} \bigg)
= I^{H}_2.
\]
Therefore for $c_{xc}> \frac{1}{2}$, we always have both electrons bound at one nucleus.

The fact that the minimum is attained at an integer, is also something we observe numerically in the three-dimensional case.
From the view point of physics this make sense since we can not split an electron in half, but it is non obvious why this drops out of the mathematics.

\subsection{The three-dimensional case \label{sec:three_dim}}

Now we go back to the physically more interesting case of three dimensions.
Since we are just considering the $H$-atom the kinetic energy is the same as the von Weiz\"acker kinetic energy, i.e.
\begin{align} \label{eq:weizaecker}
    \E[\rho] = \frac{1}{2} \intr |\nabla \sqrt{\rho}|^2 \di x 
    + \intr V \rho \di x + J[\rho] + \intr e_{xc}(\rho) \di x
\end{align}{}
and with the energy as before 
\begin{align*}
 E_{\alpha} =   \inf_{\rho \in \mathcal{A}_\alpha} \E[\rho], 
 \quad \mathcal{A}_\alpha := \{ \rho \in L^1 : \sqrt{\rho} \in H^1(\R^3), \intr \rho \di x = \alpha \}.
\end{align*}{}
In this section we assume the exchange functional in \eqref{eq:weizaecker} is given by   $e_{xc}(\rho) = - c_{xc} \rho^{\nicefrac{4}{3}}$ (Dirac-exchange). 
\begin{remark}[TFDW]
In the remainder of this section we will be considering the Thomas-Fermi-Dirac-von Weizsäcker (TFDW) energy functional given by \eqref{eq:weizaecker}. As mentioned this coincides with our DFT energy functional in the case of the H-atom, i.e. $N=1$. So this case is the most interesting for our results. Nevertheless we will be considering an arbitrary real number $N$ of electrons in the following statements whenever possible. 
\end{remark}

Then for $c_{xc} \gg 1$ we observe symmetry breaking as in the one-dimensional case.
\begin{proposition}[Neutrally charged case] \label{thm:splitting}
For $c_{xc} =0$ we have the correct splitting, i.e.~$\alpha = 1 $ is the unique global minimizer to $\alpha \mapsto E_\alpha + E_{2-\alpha}$.
On the other hand there exists a $c(N) > 0 $ such that if $c_{xc} > c(N)$ we obtain 
    \[
    2 E_N  >  \big( E_\alpha + E_{2N - \alpha} \big) \quad \forall \alpha \neq N
    \]
i.e.~symmetry breaking occurs.    
\end{proposition}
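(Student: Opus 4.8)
I would handle the two assertions of Proposition~\ref{thm:splitting} by rather different means, so I treat them separately.

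\emph{The case $c_{xc}=0$.} Here $\E[\rho]=T_W[\rho]+\intr V\rho\di x+J[\rho]$ with $T_W[\rho]:=\tfrac12\intr|\nabla\sqrt\rho|^2\di x=\tfrac14\intr\frac{|\nabla\rho|^2}{\rho}\di x$. The plan is to prove $\E$ is \emph{strictly} convex on the convex set of admissible densities and to push this through to $\alpha\mapsto E_\alpha$. Convexity of $T_W$ follows since $(a,b)\mapsto|b|^2/a$ is jointly convex on $(0,\infty)\times\R^3$ (a perspective function); the potential term is linear; and $J[\rho]=\tfrac12D[\rho,\rho]$ obeys $\tfrac12J[\rho_1]+\tfrac12J[\rho_2]-J\!\left[\tfrac12(\rho_1+\rho_2)\right]=\tfrac14J[\rho_1-\rho_2]>0$ whenever $\rho_1\ne\rho_2$, since $D[\cdot,\cdot]$ is positive definite on $L^{6/5}(\R^3)\supset\{\rho:\sqrt\rho\in H^1\}$. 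Using that minimizers of $E_\alpha$ exist for $\alpha\le\lambda_c$ (the critical mass; cf.~\cite{cances}) and $E_\alpha=E_{\lambda_c}$ for $\alpha>\lambda_c$, strict convexity of $\E$ upgrades to strict convexity of $\alpha\mapsto E_\alpha$ on $(0,\lambda_c)$: for minimizers $\rho_i$ of $E_{\alpha_i}$ with $\alpha_1\ne\alpha_2$, the competitor $t\rho_1+(1-t)\rho_2$ gives $\E[t\rho_1+(1-t)\rho_2]<tE_{\alpha_1}+(1-t)E_{\alpha_2}$. Since $E_0=0$ and $E_\lambda<0$ for $\lambda>0$, the map is moreover strictly decreasing on $[0,\lambda_c]$. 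As $\lambda_c\ge N$ in the neutral range $N\le Z$ (von~Weizsäcker theory binds at least $Z$ electrons), $\alpha\mapsto E_\alpha+E_{2N-\alpha}$ is symmetric about $\alpha=N$, and a short case split — strict convexity when $\alpha,2N-\alpha\in(0,\lambda_c)$, strict monotonicity together with $E_\lambda\equiv E_{\lambda_c}$ on $[\lambda_c,\infty)$ otherwise — gives $E_\alpha+E_{2N-\alpha}>2E_N$ for every $\alpha\ne N$. (Strict convexity of $\alpha\mapsto E_\alpha$ for TFW is also quotable directly.)

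\emph{Large $c_{xc}$: reduction to one scalar inequality.} Fix a minimizer $\rho_N\in\mathcal A_N$ of $E_N$ (see the obstacle below) and set $t:=\alpha/N\in[0,2]$. Then $t\rho_N\in\mathcal A_\alpha$, $(2-t)\rho_N\in\mathcal A_{2N-\alpha}$ are admissible, and using the homogeneities $T_W[c\rho]=cT_W[\rho]$, $\intr V(c\rho)\di x=c\intr V\rho\di x$, $J[c\rho]=c^2J[\rho]$, $E_{xc}[c\rho]=c^{4/3}E_{xc}[\rho]$ (with $E_{xc}[\rho]:=\intr e_{xc}(\rho)\di x=-c_{xc}\intr\rho^{4/3}\di x$) one obtains
\[
E_\alpha+E_{2N-\alpha}-2E_N\ \le\ \E[t\rho_N]+\E[(2-t)\rho_N]-2E_N\ =\ p(t)\,J[\rho_N]+q(t)\,E_{xc}[\rho_N],
\]
with $p(t)=t^2+(2-t)^2-2=2(t-1)^2$ and $q(t)=t^{4/3}+(2-t)^{4/3}-2$. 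A computation gives $q''\ge\tfrac89$ on $[0,2]$, hence $q(t)\ge\tfrac49(t-1)^2$, so (as $E_{xc}[\rho_N]<0$)
\[
E_\alpha+E_{2N-\alpha}-2E_N\ \le\ \Big(\tfrac{\alpha}{N}-1\Big)^2\Big(2J[\rho_N]+\tfrac49E_{xc}[\rho_N]\Big).
\]
Thus it suffices to produce $c(N)>0$ with $|E_{xc}[\rho_N]|/J[\rho_N]>\tfrac92$ whenever $c_{xc}>c(N)$; then the bracket is negative and $E_\alpha+E_{2N-\alpha}<2E_N$ for all $\alpha\ne N$.

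\emph{Large $c_{xc}$: the ratio diverges.} Let $M_N:=\sup\big\{(\intr\sigma^{4/3}\di x)^2/(4T_W[\sigma]):\sigma\in\mathcal A_N\big\}$; the bound $\intr\sigma^{4/3}\di x\le C\norm{\sigma}_{L^1}^{5/6}\norm{\nabla\sqrt\sigma}_{L^2}$ (Sobolev and interpolation) gives $0<M_N<\infty$, and scaling gives $M_\alpha=\alpha^{5/3}M_1$. Testing with a dilated near-optimizer of $M_N$ yields $E_N\le-c_{xc}^2M_N+o(c_{xc}^2)$; and Hardy's inequality ($\intr V\rho\di x\ge-2Z\sqrt{2N}\sqrt{T_W[\rho]}$), $J\ge0$, $\intr\rho^{4/3}\di x\le2\sqrt{M_N\,T_W[\rho]}$ and completing the square in $\sqrt{T_W[\rho]}$ yield $E_N\ge-c_{xc}^2M_N-O(c_{xc})$; so $E_N=-c_{xc}^2M_N+o(c_{xc}^2)$. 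The virial identity for the dilation $\ell\mapsto\E[(\rho_N)_\ell]$ then gives $T_W[\rho_N]=-E_N=c_{xc}^2M_N(1+o(1))$ and $\intr V\rho_N\di x+J[\rho_N]-c_{xc}\intr\rho_N^{4/3}\di x=-2T_W[\rho_N]$. Since $\big|\intr V\rho_N\di x\big|=O(c_{xc})$ and $J[\rho_N]\ge0$, this forces $c_{xc}\intr\rho_N^{4/3}\di x\ge2c_{xc}^2M_N+o(c_{xc}^2)$, while the definition of $M_N$ applied to $\rho_N$ gives $\intr\rho_N^{4/3}\di x\le2\sqrt{M_N\,T_W[\rho_N]}=2c_{xc}M_N(1+o(1))$. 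Combining, $c_{xc}\intr\rho_N^{4/3}\di x=2c_{xc}^2M_N+o(c_{xc}^2)$, whence $J[\rho_N]=c_{xc}\intr\rho_N^{4/3}\di x-2c_{xc}^2M_N+o(c_{xc}^2)=o(c_{xc}^2)$ and $|E_{xc}[\rho_N]|\sim2c_{xc}^2M_N$. Therefore $|E_{xc}[\rho_N]|/J[\rho_N]\to\infty$ as $c_{xc}\to\infty$, which produces $c(N)$ and finishes the argument.

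\emph{Anticipated obstacle.} The main difficulty is the existence of the minimizer $\rho_N$ of $E_N$ when $c_{xc}$ is large — strong exchange is precisely the regime where compactness of minimizing sequences can fail (cf.~\cite{lu_otto}). I expect it to be recoverable here since large $c_{xc}$ drives the optimal length scale to $O(c_{xc}^{-1})$, i.e.\ favours concentration over escape to infinity, so the splitting scenario is excluded; alternatively, the test-function bounds above give the uniform asymptotics $E_\alpha=-c_{xc}^2M_1\alpha^{5/3}+o(c_{xc}^2)$ on $[0,2N]$, and combining this (for $\alpha$ bounded away from $N$, using strict convexity of $\alpha\mapsto\alpha^{5/3}$) with the local scaling estimate applied to a near-minimizer (for $\alpha$ near $N$) would also conclude. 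Getting the two regimes to overlap with uniform error control is then the remaining bookkeeping.
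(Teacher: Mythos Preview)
Your proof is correct. The $c_{xc}=0$ case is essentially identical to the paper's argument (strict convexity of $\E$, then a case split according to whether minimizers exist beyond the neutral mass). For large $c_{xc}$ your reduction is in fact cleaner than the paper's: you use the exact scaling identities and the global bound $q(t)\ge\tfrac49(t-1)^2$ to get $E_\alpha+E_{2N-\alpha}-2E_N\le(\tfrac{\alpha}{N}-1)^2\big(2J[\rho_N]+\tfrac49E_{xc}[\rho_N]\big)$ for \emph{all} $\alpha$, whereas the paper does a Taylor expansion in $\eta$ and then remarks that the higher-order terms have the right sign.

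The real divergence is in how you make $2J[\rho_N]+\tfrac49E_{xc}[\rho_N]$ negative. Your scaling/virial asymptotics ($E_N\sim -c_{xc}^2M_N$, $J[\rho_N]=o(c_{xc}^2)$, $|E_{xc}[\rho_N]|\sim 2c_{xc}^2M_N$) are correct and interesting in their own right, but they only yield existence of $c(N)$. The paper instead applies Hardy--Littlewood--Sobolev followed by interpolation,
\[
2J[\rho_N]\ \le\ C_{HLS}\,\|\rho_N\|_{L^1}^{2/3}\,\|\rho_N\|_{L^{4/3}}^{4/3}\ =\ C_{HLS}\,N^{2/3}\intr\rho_N^{4/3}\di x,
\]
so that $2J[\rho_N]+\tfrac49E_{xc}[\rho_N]\le\big(C_{HLS}N^{2/3}-\tfrac49 c_{xc}\big)\intr\rho_N^{4/3}\di x<0$ as soon as $c_{xc}>\tfrac94 C_{HLS}N^{2/3}\approx 5.16\,N^{2/3}$. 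This is a one-line argument giving an explicit threshold, and it makes your asymptotic machinery unnecessary.

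Your ``anticipated obstacle'' is not one: existence of a minimizer $\rho_N$ for the neutral problem $N\le Z$ in TFDW-type models holds for all $c_{xc}>0$ (the paper simply cites \cite{lieb1997thomas}); the nonattainment phenomena of \cite{lu_otto} concern other regimes.
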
{}

\begin{proof}

We start with  the extreme case with $c_{xc}=0$, then the functional $\rho \mapsto \E[\rho]$ is strictly
convex and hence we obtain for any admissible densities $\rho_\alpha, \rho_{2N-\alpha}$ with mass $\alpha$ and $2N-\alpha$, respectively
    \begin{align}\label{eq:convex_tfdw_minimizer}
    2 E_N 
    \leq
        2 \E[\tfrac{1}{2} \rho_\alpha + \tfrac{1}{2} \rho_{2N- \alpha}]
    < 
        \E[\rho_\alpha] + \E[\rho_{2N-\alpha}].
    \end{align}
Taking now the infimum over $\rho_\alpha$ and $\rho_{2N - \alpha}$ gives
\[
2 E_N =   \min \limits_{\alpha \in [0,N]} \big( E_\alpha + E_{2N- \alpha} \big).
\]
So here the minimum is really attained at the symmetric splitting.
Furthermore we also have that $\alpha =N $ is always the strict global minimizer.
Indeed this can be seen by a case distinguishment:

\underline{
Case 1:}
 Assume minimizers exist  also for slightly negatively charged systems, i.e.~there is some $\epsilon>0$ such that minimizer exist for all $\alpha \in [0,N + \epsilon]$. Note we do not assume anything about uniqueness of minimizers just existence.
 Then we directly get a strong inequality $2E_N < E_\alpha + E_{2N-\alpha}$ by taking the corresponding minimizers in \eqref{eq:convex_tfdw_minimizer} for $\alpha \neq N$.
 Thus, in this case $\alpha = N$ would be a strict local minimum and by convexity the unique global minimum. 
 
 \underline{Case 2:}
If we do not have a minimizer for slightly negatively charged systems, then this can only happen if $\alpha \mapsto E_\alpha$ is not strictly decreasing anymore for $\alpha >N$. 
Otherwise we would have a strict subadditivity inequality because here the problem at infinity $E^\infty_\beta =0$ is trivial and the strict subadditivity condition (compare Lemma \ref{lem:subadditivity}) would be give us existence directly \cite{lions_cc, lions_cc2}.

Additionally due to convexity and the fact that $\alpha \mapsto E_\alpha$ is always non-increasing, we must have $E_\alpha = E_N$ for every $\alpha \in [N,2N]$. But in this case we have 
\begin{align*}
    E_\alpha + E_{2N -\alpha} 
    =
    E_\alpha + E_N > 2 E_N \qquad \forall~ \alpha \in [0,N).
\end{align*}{}
This is what happens in Thomas-Fermi theory, compare Figure \ref{fig:plot_energy_TF}.
Therefore also in the second case the symmetric splitting is the minimum if we set the exchange constant $c_{xc}=0$.

Now we consider the second statement of our proposition, i.e.~we take $c_{xc}$ to be large:\\
Let $\rho_N$ be a minimizer of $E_N$ (which is known to exist \cite{lieb1997thomas}) and $\eta \in (0,N)$. Then
\begin{align*}
    E_{(N+\eta)} +  E_{(N-\eta)} 
    & \leq 
    \E[(1+\tfrac{\eta}{N})\rho_N] + \E[(1-\tfrac{\eta}{N})\rho_N] \\
    &=
    2 \big( T[\rho_N] + V[\rho_N] \big) + \big((1+\tfrac{\eta}{N})^2 +(1-\tfrac{\eta}{N})^2 \big) J[\rho_N] + \big( (1+\tfrac{\eta}{N})^{\nicefrac{4}{3}} +(1-\tfrac{\eta}{N})^{\nicefrac{4}{3}} \big) E_{xc}[\rho_N] \\
    &=
    2 E_N + \tfrac{\eta^2}{N^2} \bigg( 2J[\rho_N] +\tfrac{4}{9} E_{xc}[\rho_N] \bigg) + o\big(\tfrac{\eta^2}{N^2}\big),
\end{align*}{}
where we used the Taylor-expansion for $(1\pm\eta)^{\nicefrac{4}{3}}$.
Now we can use Hardy-Littlewood-Sobolev and then H\"older interpolation to bound $J[\rho_N]$.
\begin{align*}
    2J[\rho_N] = \intr \intr \frac{\rho_N(x) \rho_N(y)}{|x-y|} \di x \di y
    \leq
    C_{HLS} ||\rho_N||^2_{L^\frac{6}{5}}
    \leq 
    C_{HLS} \norm{\rho_N}_{L^1}^{\nicefrac{2}{3}} \norm{\rho_N}_{L^{\nicefrac{4}{3}}}^{\nicefrac{4}{3}}.
\end{align*}{}
So we get using $\norm{\rho_N}_{L^1} =N$
    \[
        2J[\rho_N] +\tfrac{4}{9} E_{xc}[\rho_N]
        \leq 
        \bigg( C_{HLS}N^{\nicefrac{2}{3}} - \frac{4}{9} c_{xc} \bigg) \intr \rho_{1}^{\nicefrac{4}{3}} \di x < 0,
    \]
for $c_{xc} > \frac{9}{4} C_{HLS} N^{\nicefrac{2}{3}}$.
Putting in the numbers, i.e.~using the optimal $C_{HLS}$ given in \cite{lieb_hls} we see that 
    \[
        c_{xc} > \frac{9}{4} \sqrt{\pi} \frac{\Gamma(1)}{\Gamma(\tfrac{5}{2})} \bigg( \frac{\Gamma(3)}{\Gamma(\tfrac{3}{2})}\bigg)^{\nicefrac{2}{3}} N^{\nicefrac{2}{3}}
        \approx
        5.1615~ N^{\nicefrac{2}{3}}
    \]
suffices. 
In this case the symmetric splitting of the mass is not the minimum, in fact it is the maximum since the remaining terms in the Taylor expansion all have negative sign. 

Note again that for our result of Theorem \ref{thm:main} concerning KS-DFT only the case $N=1$, i.e.~the H-atom, is covered by this argument.
\end{proof}{}

While Proposition \ref{thm:splitting} deals with the extreme cases $c_{xc } =0$ and $c_{xc} \gg 1$, we were not able to prove symmetric splitting for the physically most interest case $c_{xc} =  \tfrac{3}{4} \big(\tfrac{3}{\pi}\big)^{\nicefrac{1}{3}}$. 
Therefore, we studied the behavior numerically. As in the one-dimensional setting \ref{sec:one_dim}, the minimum seems to be always attained at an integer pair. But the transition from symmetric to asymmetric seems to be more interesting since the function $\lambda \mapsto I^{H}_{\lambda} + I^{H}_{2-\lambda}$ does not simply switch from convex to concave.

\begin{figure}[h!]
     \centering
     \begin{subfigure}[b]{0.32\textwidth}
         \centering
         \includegraphics[width=\textwidth]{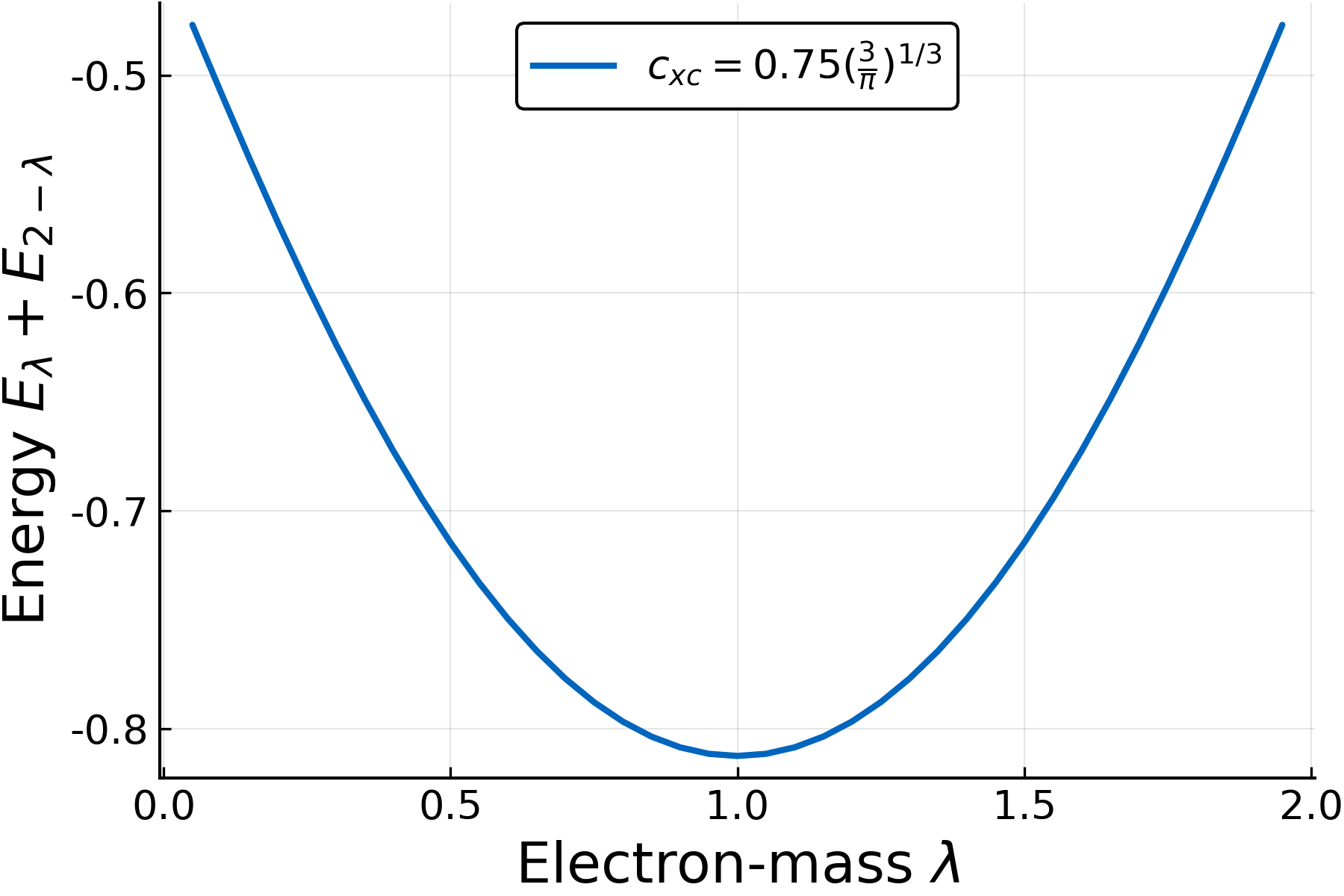}
     \end{subfigure}
     \hfill
     \begin{subfigure}[b]{0.32\textwidth}
         \centering
         \includegraphics[width=\textwidth]{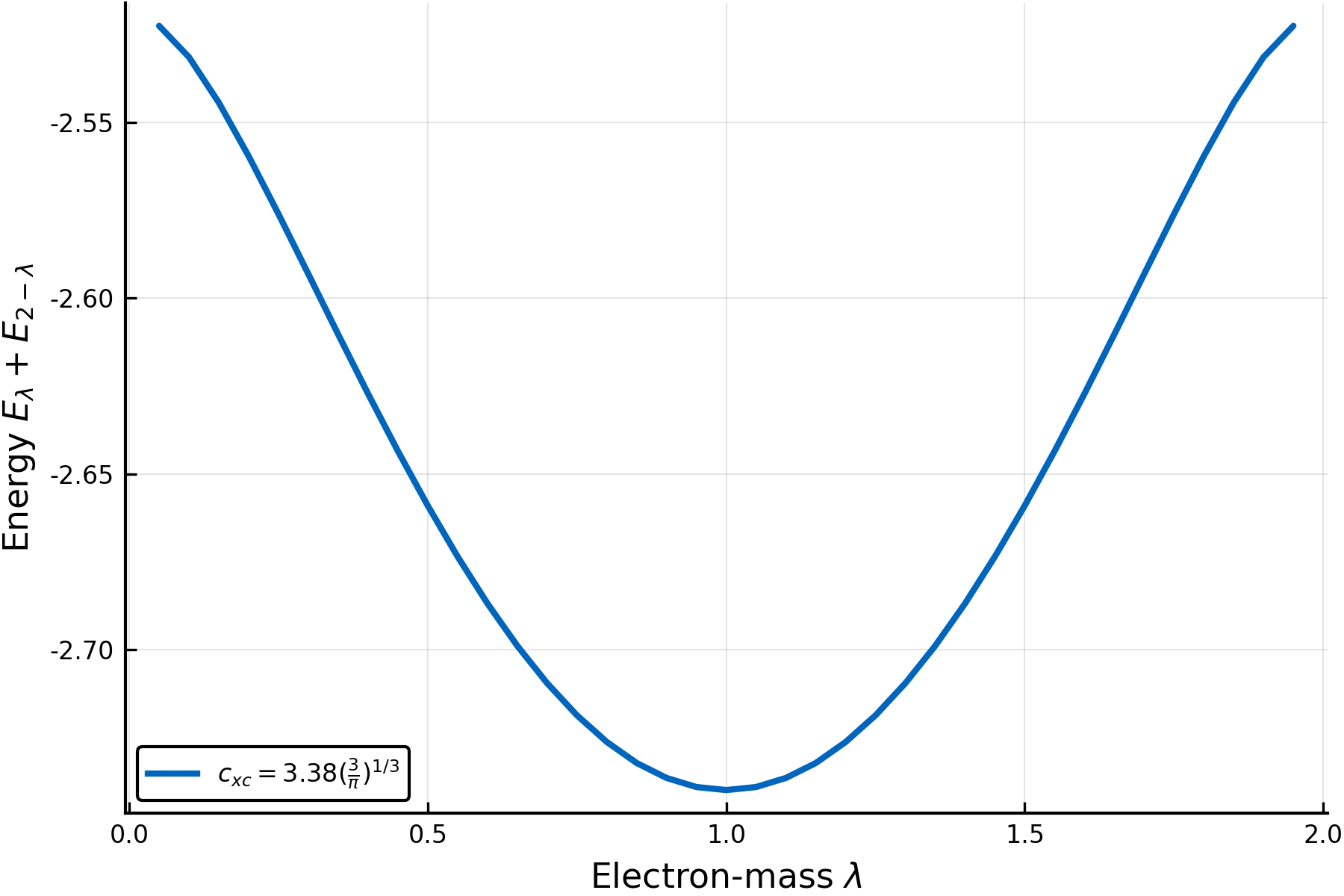}
     \end{subfigure}
     \hfill
     \begin{subfigure}[b]{0.32\textwidth}
         \centering
         \includegraphics[width=\textwidth]{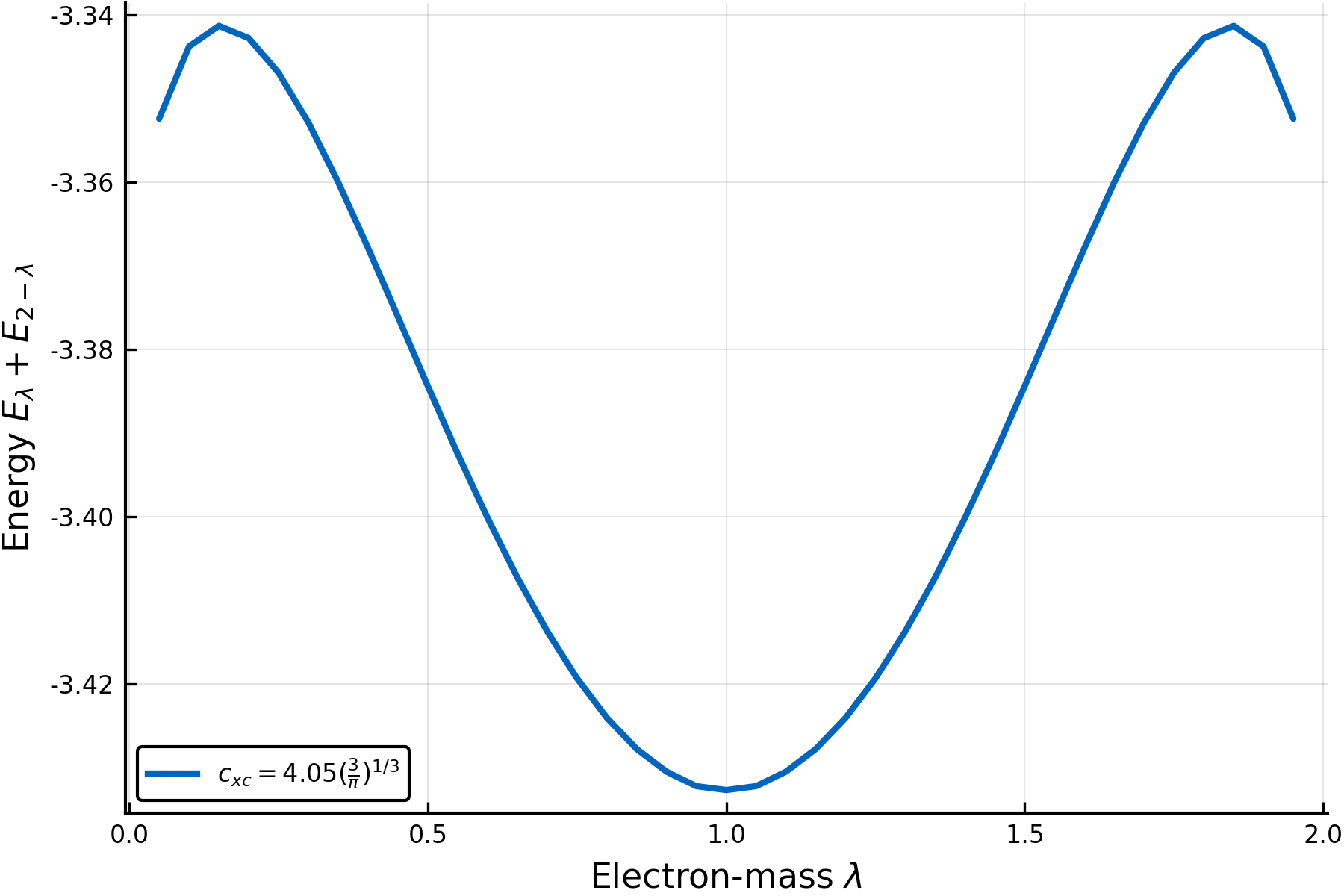}
     \end{subfigure}
     \vspace{5mm}
     
          \begin{subfigure}[b]{0.32\textwidth}
         \centering
         \includegraphics[width=\textwidth]{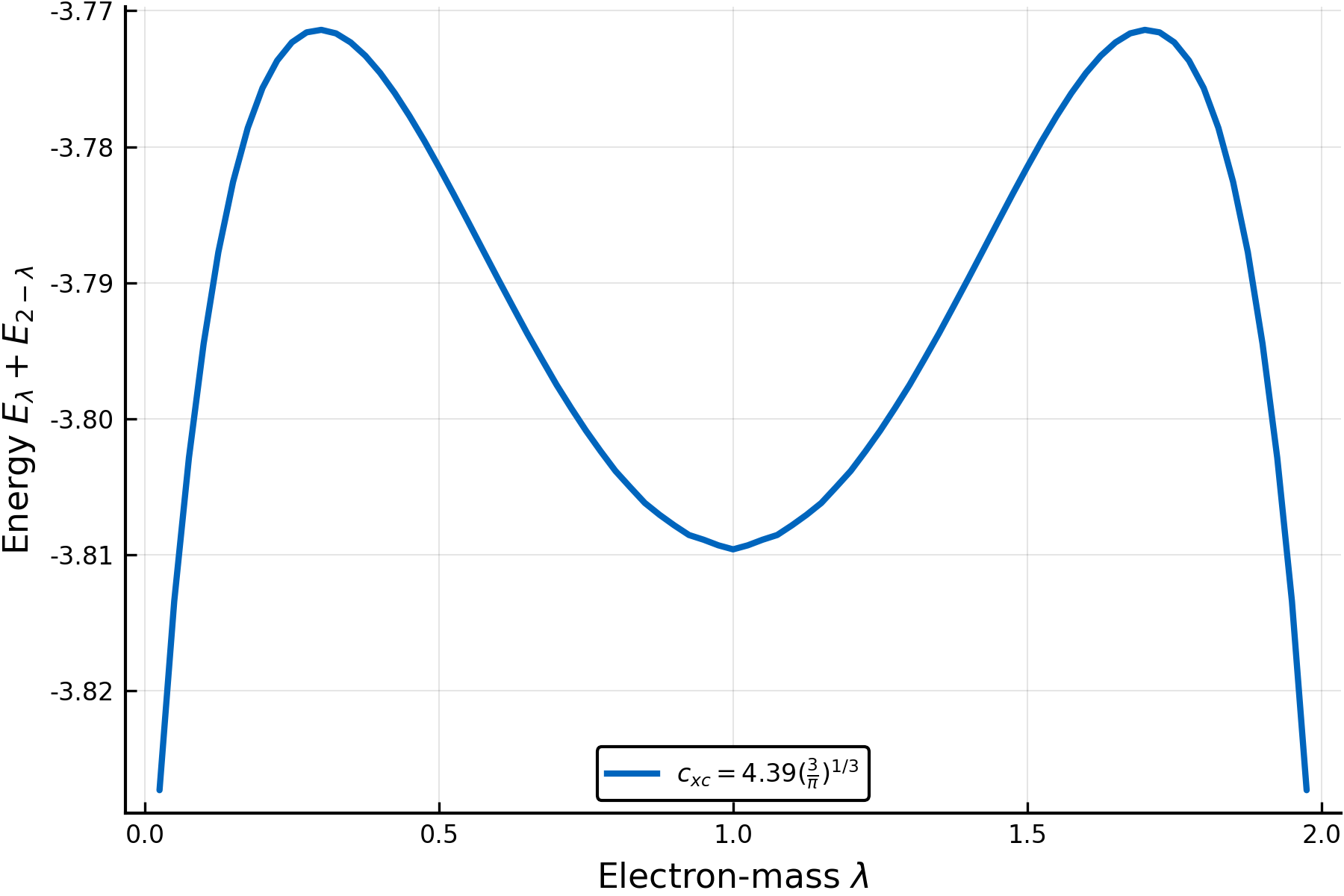}
     \end{subfigure}
     \hfill
     \begin{subfigure}[b]{0.32\textwidth}
         \centering
         \includegraphics[width=\textwidth]{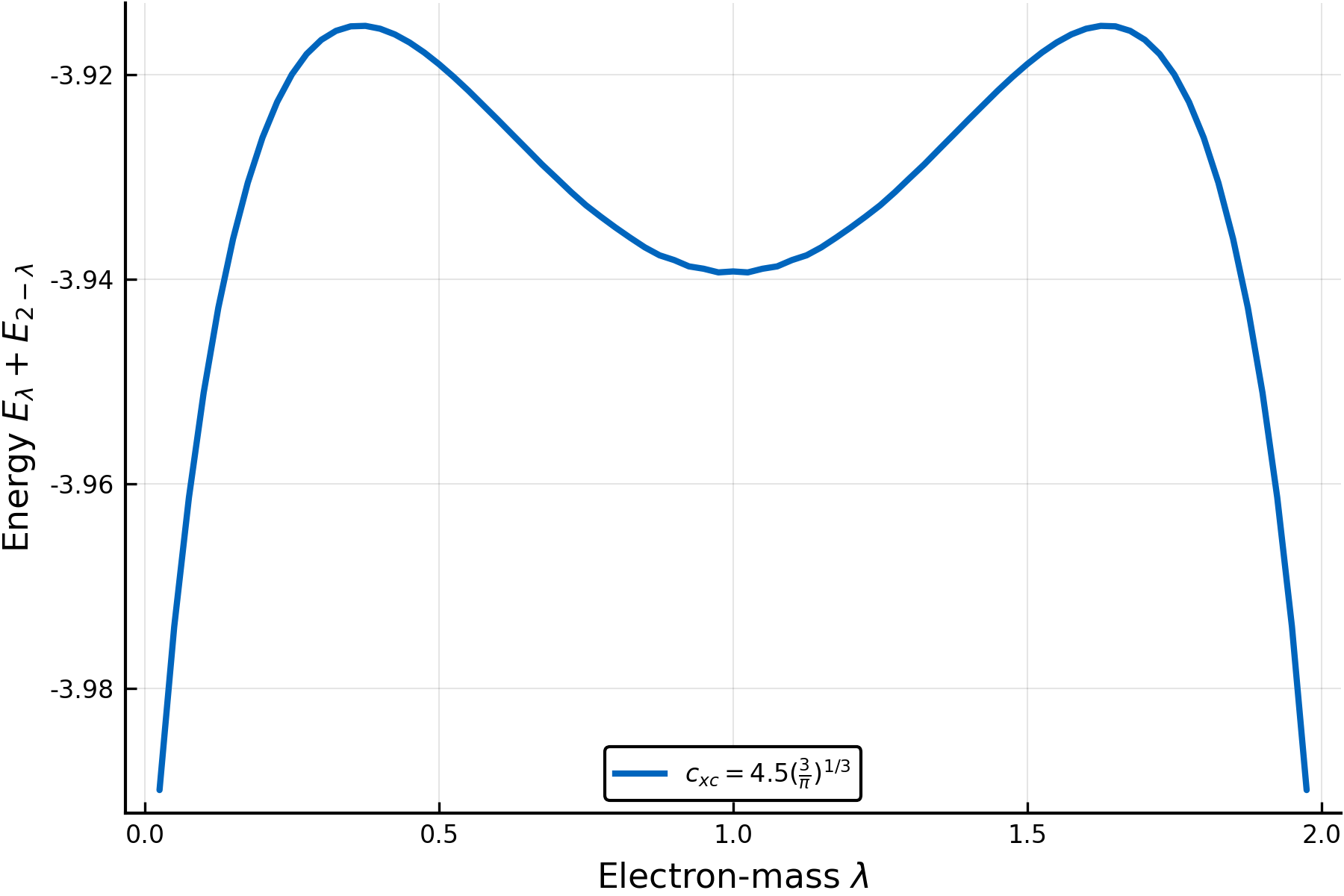}
     \end{subfigure}
     \hfill
     \begin{subfigure}[b]{0.32\textwidth}
         \centering
         \includegraphics[width=\textwidth]{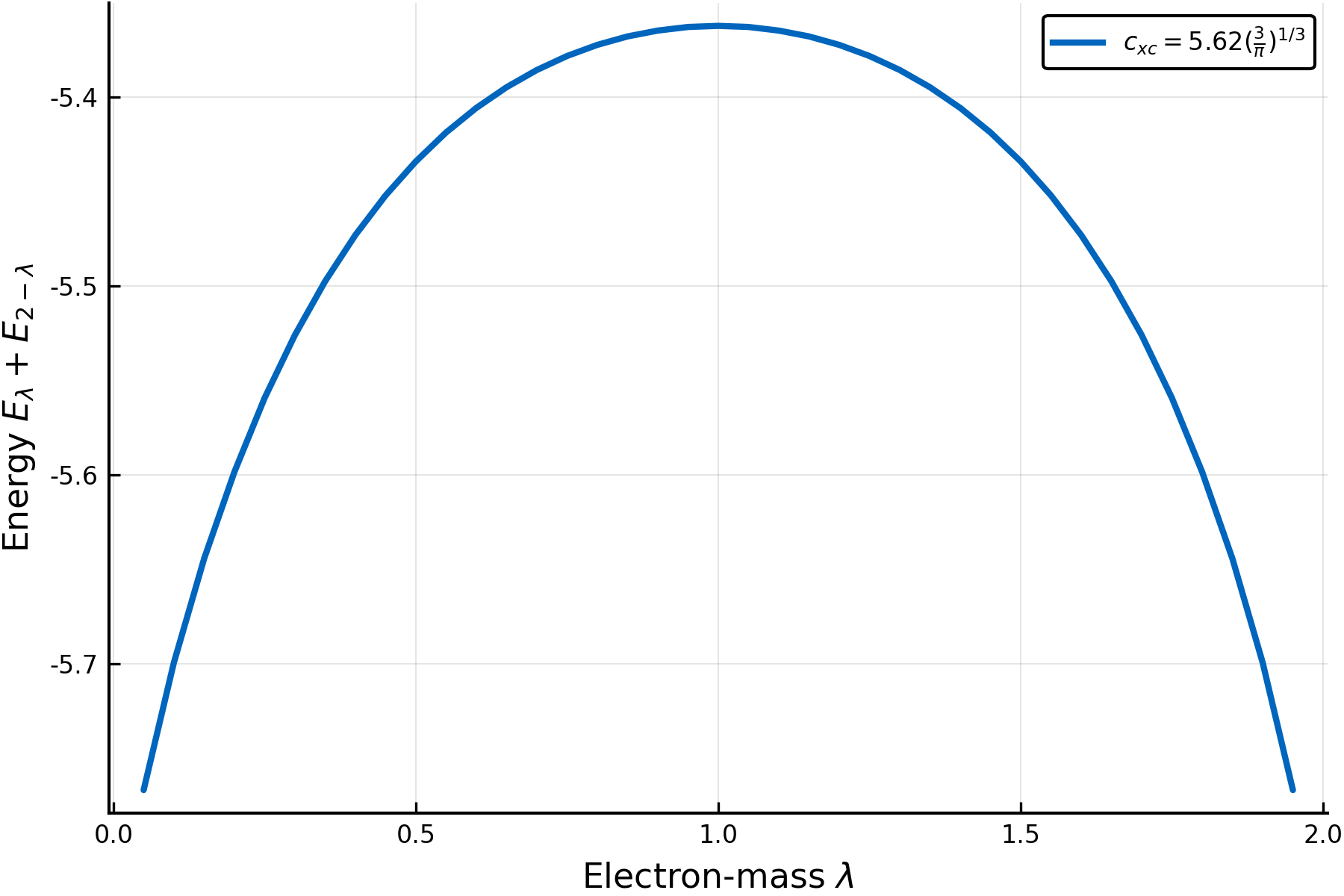}
     \end{subfigure}
     
        \caption{The function $\lambda \mapsto I^H_\lambda + I^{H}_{2-\lambda}$ for increasing values of $c_{xc}$.
        Note that the plot in the top left corner corresponds to the physically interesting case of $c_{xc} = \tfrac{3}{4} \big(\tfrac{3}{\pi}\big)^{\nicefrac{1}{3}}$; here we get numerically a symmetric splitting.}
        \label{fig:three graphs}
\end{figure}


The computations for Figure \ref{fig:three graphs} were done using the OCTOPUS package \cite{octopus}. We remark that after rescaling to an \(L^2\) normalized orbital, \(I^{H}_{\lambda}\) can be computed from a more standard DFT problem with modified electron-electron interaction potential (fractional charge) and modified exchange constant (cp. Slater X-$\alpha$ exchange \cite{slater_x_alpha}).

If we already start with a positively charged molecule in the beginning and thus wonder about the minimum of 
$\alpha \mapsto E_{2\lambda - \alpha} + E_{\alpha}$ for $\lambda < N$ we can get a stronger result.
\begin{proposition}[Positively charged case]
Let $\lambda < N$, then there exists a constant $c(\lambda) >0$ such that for all $c_{xc} < c(\lambda)$ we have 
\[
\min\limits_{\alpha \in [0,\lambda]} \big( E_{2\lambda - \alpha} + E_{\alpha} \big)= 2E_{\lambda}.
\]
\end{proposition}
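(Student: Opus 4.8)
The goal is to prove that the map $\Phi(\alpha):=E_\alpha+E_{2\lambda-\alpha}$, $\alpha\in[0,\lambda]$, attains its minimum at the right endpoint $\alpha=\lambda$, where $\Phi(\lambda)=2E_\lambda$. The plan is to treat separately the part of $[0,\lambda]$ bounded away from $\lambda$, by a soft perturbation argument off the Dirac-free problem, and a fixed small neighbourhood of $\lambda$, where one has to extract genuine local convexity. Note that one cannot hope for convexity of $\alpha\mapsto E_\alpha$ on all of $[0,2\lambda]$: the Dirac term $-c_{xc}\int\rho^{4/3}$ contributes $-\tfrac49 c_{xc}\alpha^{-2/3}(1+o(1))$ to $\partial_\alpha^2 E_\alpha$ as $\alpha\to0^+$, so convexity genuinely fails near $\alpha=0$; this is, however, irrelevant to the sign of $\Phi$, which forces the split into two regions.

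Write $E^{0}_\alpha$ for the infimum when $c_{xc}=0$. As in the first part of the proof of Proposition \ref{thm:splitting}, the functional obtained from \eqref{eq:weizaecker} by setting $c_{xc}=0$ is strictly convex in $\rho$. Since minimisers of $E^{0}_\alpha$ exist for $\alpha$ in the positively charged range (here $\lambda<N$, so certainly for $\alpha$ near $\lambda$), the map $\alpha\mapsto E^{0}_\alpha$ is strictly convex there and convex on all of $[0,2\lambda]$; by non-degeneracy of the unique minimiser on the mass constraint it is in fact strongly convex near $\lambda$, so $g(t):=\min\{E^{0}_\alpha+E^{0}_{2\lambda-\alpha}-2E^{0}_\lambda:\alpha\in[0,\lambda-t]\}>0$ for every $t\in(0,\lambda]$. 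Lemma \ref{lem:bounds} bounds minimisers and near-minimisers of $E_\beta$ in $\H$ uniformly for $c_{xc}\le c_0$ and $\beta\in[0,2\lambda]$, and on that set $0\le\int\rho^{4/3}\,\di x\le C$; hence $E^{0}_\beta-Cc_{xc}\le E_\beta\le E^{0}_\beta$, so that for $\alpha\in[0,\lambda-\epsilon]$,
\[
\Phi(\alpha)-2E_\lambda\;\ge\;\big(E^{0}_\alpha+E^{0}_{2\lambda-\alpha}-2E^{0}_\lambda\big)-Cc_{xc}\;\ge\;g(\epsilon)-Cc_{xc}\;>\;0
\]
once $c_{xc}<g(\epsilon)/C$.

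On $[\lambda-\epsilon,\lambda]$ (with $\epsilon<N-\lambda$ fixed) the estimate $|E_\cdot-E^{0}_\cdot|=O(c_{xc})$ is too weak, because $E^{0}_\alpha+E^{0}_{2\lambda-\alpha}-2E^{0}_\lambda$ vanishes quadratically as $\alpha\to\lambda$. Here I would show that $\beta\mapsto E_\beta$ is $C^2$ on a fixed neighbourhood of $\lambda$, with second derivative converging uniformly to that of $\beta\mapsto E^{0}_\beta$ as $c_{xc}\to0$: for masses in the positively charged range the minimiser $\rho_\beta$ is unique and non-degenerate, so the implicit function theorem applied to the Euler--Lagrange system $-\tfrac12\Delta\sqrt\rho+\big(V+\rho\ast\tfrac1{|\cdot|}+e'_{xc}(\rho)\big)\sqrt\rho=\theta\sqrt\rho$, $\int\rho\,\di x=\beta$ (with $\theta$ the Lagrange multiplier), yields $C^1$ dependence of $(\sqrt{\rho_\beta},\theta)$, hence $C^2$ dependence of $E_\beta$, on $(\beta,c_{xc})$ near $(\lambda,0)$. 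Since the second derivative of $E^{0}_\cdot$ at $\lambda$ is strictly positive, the map $\beta\mapsto E_\beta$ is convex on a fixed interval $[\lambda-\epsilon,\lambda+\epsilon]$ for all $c_{xc}$ small, and then $\Phi$ is convex there with $\Phi'(\lambda)=0$, so $\Phi(\alpha)\ge2E_\lambda$. (Equivalently, one argues directly with the Hessian of $\E$ at $\rho_\lambda$ restricted to mass-preserving perturbations, which is coercive when $c_{xc}=0$; the Dirac term perturbs it by $-c_{xc}$ times multiplication by $\tfrac49\rho_\lambda^{-2/3}$, which is form-bounded relative to it with relative bound $O(c_{xc})$ — using the exponential decay of $\rho_\lambda$ for positively charged systems — so the Hessian stays positive. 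A more hands-on variant: strong convergence $\rho_\alpha\to\rho_\lambda$ as $\alpha\to\lambda$ plus a comparison/Harnack bound for the Euler--Lagrange equation gives $\tfrac12\rho_\lambda\le\rho_\alpha\le2\rho_\lambda$ pointwise near $\lambda$, so that in a midpoint test the concave defect of $-c_{xc}\int\rho^{4/3}$ is controlled by $c_{xc}\int\rho_\lambda^{-2/3}(\rho_\alpha-\rho_{2\lambda-\alpha})^2$ and absorbed into the strictly positive von Weizsäcker--Hartree convexity defect.)

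Putting the two regions together gives $\min_{\alpha\in[0,\lambda]}\Phi(\alpha)=\Phi(\lambda)=2E_\lambda$ for $c_{xc}<c(\lambda)$, and strict convexity of $E^{0}_\cdot$ upgrades this to uniqueness of the minimiser $\alpha=\lambda$. The main obstacle is the near-$\lambda$ region: one must show that the convexity margin of the Dirac-free energy degenerates, as $\alpha\to\lambda$, no faster than the also quadratically small defect of the concave Dirac term, with all constants uniform in $c_{xc}$ over a fixed mass interval, so that a single threshold $c(\lambda)$ works for all $\alpha\in[0,\lambda]$ at once — concretely, the $C^2$-regularity and non-degeneracy of $\beta\mapsto E_\beta$ in the positively charged regime, equivalently the relative form-boundedness of the Dirac perturbation of the Hessian at $\rho_\lambda$.
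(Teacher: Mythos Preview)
Your overall strategy coincides with the paper's: split $[0,\lambda]$ into a region bounded away from $\lambda$, handled by the uniform estimate $|E_\beta-E^{0}_\beta|=O(c_{xc})$ against the strict gap $g(\epsilon)>0$ of the convex $c_{xc}=0$ problem, and a fixed neighbourhood of $\lambda$, where one needs actual convexity of $\beta\mapsto E_\beta$ for small $c_{xc}$.

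The difference lies entirely in the second step. The paper does not attempt a direct argument but simply invokes a theorem of Le~Bris (\cite{le1993some}, Theorem~4), which establishes strict convexity of $\alpha\mapsto E_\alpha$ on $[0,Z]$ for the TFDW functional (with arbitrary Thomas--Fermi constant, in particular zero) whenever the Dirac constant is small enough. This gives local minimality at $\alpha=\lambda$ on the fixed window $|\alpha-\lambda|<Z-\lambda$ in one stroke, and the continuity-from-$c_{xc}=0$ argument then closes the global statement in two lines. Your route via the implicit function theorem, $C^2$-dependence of $(\beta,c_{xc})\mapsto E_\beta$, and form-boundedness of the Dirac perturbation $-\tfrac49 c_{xc}\rho_\lambda^{-2/3}$ of the Hessian is essentially a sketch of how one would \emph{prove} Le~Bris's result; it is plausible, but making the non-degeneracy of the linearised Euler--Lagrange operator and the uniform control of $\rho_\lambda^{-2/3}$ rigorous is genuine work that the citation bypasses. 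In short: same architecture, but the paper outsources the hard local step to the literature, whereas you outline it from scratch.
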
{}
\begin{proof}
As in the proof of Proposition \ref{thm:splitting} we know that for $c_{xc}=0$ the symmetric splitting is the strict global minimum.
By continuity it thus suffices to show that it stays a local one for all $c_{xc}$ small enough.

This follows by a result of Le Bris \cite{le1993some}. 
Indeed in Theorem 4 of \cite{le1993some} he proved that the mapping 
\(
\alpha \mapsto E_{\alpha}
\)
is strictly convex for $\alpha \leq Z $ and $c_{xc} >0$ small enough.
Note that Le Bris originally proved his convexity result for the Thomas-Fermi-Dirac-von Weizsäcker model. But since he considered an arbitrary non-negative constant in front of the Thomas-Fermi term,
this reduces to our model in the hydrogen case, if we set this constant to zero.

This directly implies
\[
2 E_{\lambda} = 2 E_{\frac{1}{2} (\lambda - \alpha) + \frac{1}{2} (\lambda + \alpha)} < E_{\lambda - \alpha} + E_{\lambda+ \alpha}, \quad \forall~\alpha \in (0,Z-\lambda).
\]
So the symmetric splitting is a local minimum, as mentioned above since it is a strict global minimum for $c_{xc} = 0$ and thus by continuity it remains a global minimum for $c_{xc} $ small enough.
\end{proof}{}

\section{Dissociation limit -- The proof \label{sec:proof}}
This section contains the proof to Theorem \ref{thm:main}, it is split into two parts containing the upper bound and lower bound, respectively.

\subsection{Upper bound}\label{sec:upper_bound}
We begin by proving the upper bound to Theorem \ref{thm:main}, i.e.
    \begin{equation} \label{eq:upper_bound}
        \limsup\limits_{R \to \infty} I_{\lambda, R}^{X_2} \leq \min \limits_{\alpha \in [0,\frac{\lambda}{2}]} \big( I^{X}_\alpha + I^{X}_{\lambda - \alpha} \big). 
      \end{equation}
For this purpose, given $\epsilon>0$ take $\gamma_\alpha \in K_\alpha$ and $\gamma_{\lambda - \alpha} \in K_{\lambda - \alpha}$, s.t.
    \[
        \E[\gamma_\alpha] \leq I^X_\alpha + \tfrac{\epsilon}{2} \quad \text{and} \quad\E[\gamma_{\lambda - \alpha}] \leq I^X_{\lambda - \alpha} + \tfrac{\epsilon}{2}.
    \]
Thanks to the continuity of the energy functionals established in Lemma \ref{lem:continuity_E} and the fact that the finite rank operator and
the functions $C^\infty_{c}(\R^3)$ are dense in $\H$ and $L^2(\R^3)$,  respectively, we may assume that both $\gamma_\alpha$ and $\gamma_{\lambda - \alpha}$ have finite rank with range in $C^{\infty}_{c}(\R^3)$.

Then define the operator $\gamma_R := \gamma_\alpha + \tau_{R} \gamma_{\lambda - \alpha} \tau_{-R}$, where $\tau_{R}$ is the unitary operator on $L^2(\R^3)$ defined by
\[
\big( \tau_{R} f\big) (x) := f(x-R).
\]
For $R$ large enough we have $\gamma_{R} \in K_\lambda$ and thus
\begin{align*}
    I^{X_2}_{\lambda, R} &\leq 
    \E^{X_2}_R [\gamma_R] \\
    &\leq
    \E^X[\gamma_\alpha] + \E^X[\gamma_{\lambda - \alpha}]
    + \intr \intr \frac{\rho_{\gamma_{\lambda - \alpha}} (x- R) \rho_{\gamma_{\alpha}}(y)}{|x-y|} \di x \di y \\
    &\leq I^X_\alpha + I^X_{\lambda - \alpha} + \epsilon + \intr \intr \frac{\rho_{\gamma_{\lambda - \alpha}} (x- R) \rho_{\gamma_{\alpha}}(y)}{|x-y|} \di x \di y \gegen{R}{\infty} I^X_\alpha + I^X_{\lambda - \alpha} + \epsilon ,
\end{align*}{}
where we used that $\rho_{\gamma_{\lambda - \alpha}}$ and $\rho_{\gamma_{ \alpha}}$ have compact support.
Taking the limsup yields
 \begin{align*}
      \limsup\limits_{R \to \infty} I_{\lambda, R}^{X_2} \leq
      I^X_\alpha + I^X_{\lambda - \alpha} + \epsilon.
 \end{align*}{}
Since $\epsilon >0$ and also $\alpha \in [0, \lambda]$ were arbitrary, we get the desired assertion.

\subsection{Lower bound}
The lower bound is more difficult, we want to prove 
\begin{align} \label{eq:lower_bound}
    \liminf\limits_{R \to \infty} I_{\lambda, R}^{X_2} \geq
      \min \limits_{\alpha \in [0,\lambda]} \big( I^{X}_\alpha + I^{X}_{\lambda - \alpha} \big).
\end{align}{}

Our proof idea is to use the concentration-compactness lemma, which is usually applied to a minimizing sequence, but this time act on a sequence of minimizers $\big(\gamma_{R_n}\big)_n$ of $\In$for a sequence $\big( R_n\big)_n$ tending to infinity.\\
 In the following we will denote the arising subsequence also with $\big(\gamma_{R_n}\big)_n$ to keep notation clearer.
Furthermore in the following $C>0$ will denote a generic constant, which may have different values
at each appearance, indicating some finite positive constant independent of the surrounding
variables.

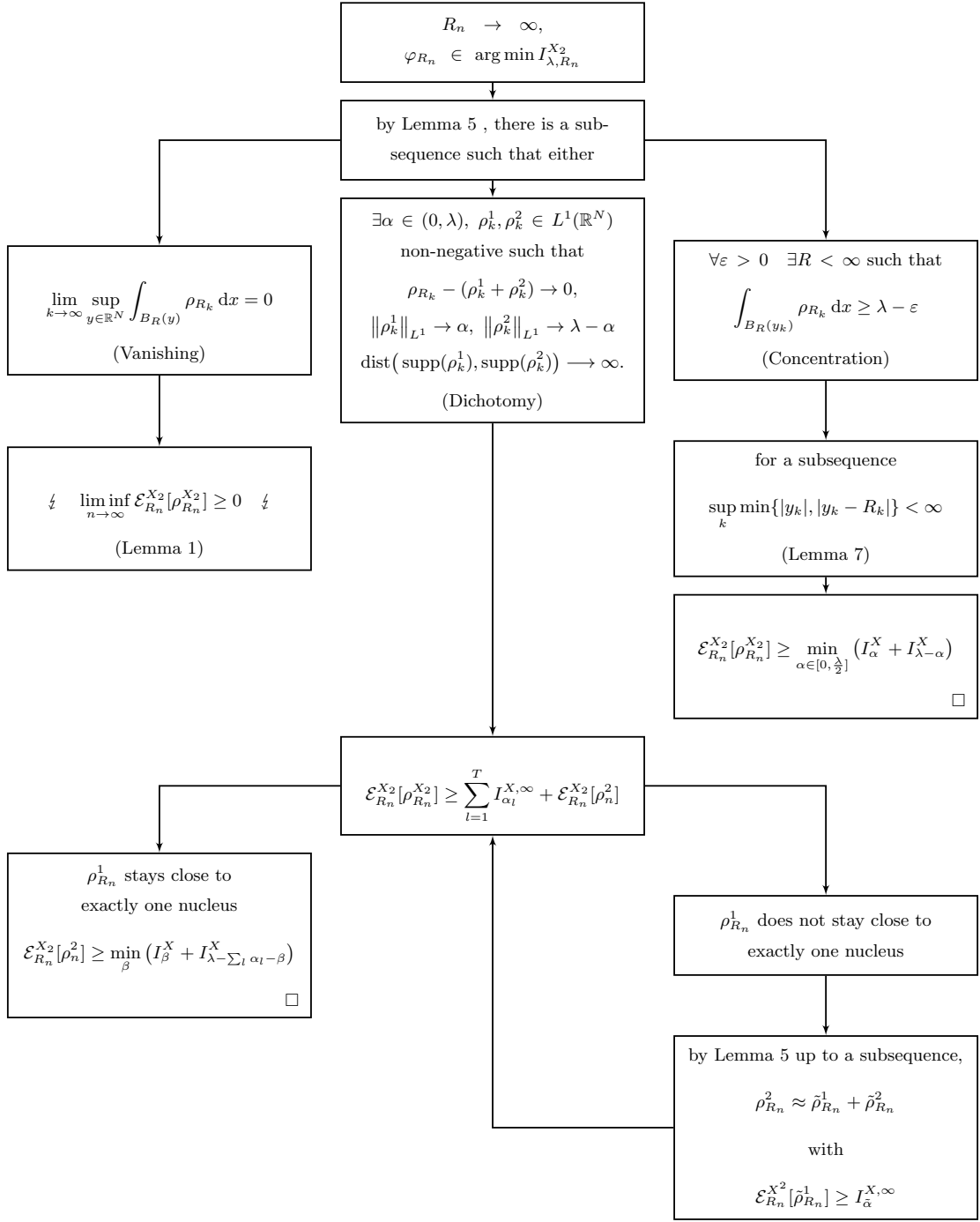
\begin{figure}[p]
  \begin{center}
  \noindent\adjustbox{width=\textwidth}{\begin{tikzpicture}[auto,
    decision/.style={diamond, draw=black, thick, fill=white,
    text width=8em, text badly centered,
    inner sep=1pt, font=\small},
    block_center/.style ={rectangle, draw=black, thick, fill=white,
      text width=14em, text centered,
      minimum height=4em, font=\small, inner sep=6pt},
    block_left/.style ={rectangle, draw=black, thick, fill=white,
      text width=16em, text ragged, minimum height=4em, inner sep=6pt, font=\small},
    block_noborder/.style ={rectangle, draw=none, thick, fill=none,
      text width=18em, text centered, minimum height=1em, font=\small},
    block_assign/.style ={rectangle, draw=black, thick, fill=white,
      text width=18em, text ragged, minimum height=3em, inner sep=6pt, font=\small},
    block_lost/.style ={rectangle, draw=black, thick, fill=white,
      text width=16em, text ragged, minimum height=3em, inner sep=6pt, font=\small},
      line/.style ={draw, thick, -latex', shorten >=0pt}]
    \matrix [column sep=5mm,row sep=3mm] {
      & \node [block_center] (start) {\(R_n \to \infty\), \\ \(\varphi_{R_n} \in \argmin \In \)}; & \\
      & \node [block_center] (ccp) {by Lemma \ref{lem:lions_cc} , there is a subsequence such that either}; & \\
      \node [block_center] (vanishing) {\[\lim \limits_{k \to \infty} \sup_{y \in \R^N} \int_{B_R(y)} \rho_{R_k} \di x =0 \] (Vanishing)}; & \node [block_center] (dichotomy) {\( \exists \alpha \in (0, \lambda), \ \rho_{k}^1, \rho_{k}^{2} \in L^1(\R^N)\) non-negative such that 
      \begin{gather*}
      \rho_{R_k} - (\rho^{1}_k + \rho^2_{k}) \to  0,\\
      \norm{\rho^{1}_k}_{L^1} \to \alpha, \ \norm{\rho^{2}_k}_{L^1} \to \lambda - \alpha \\
      \mathrm{dist}\big(\supp (\rho^{1}_k), \supp (\rho^{2}_k) \big) \longrightarrow \infty.
      \end{gather*} (Dichotomy)}; &\node [block_center] (concentration) {\(\forall \epsilon >0 ~~~ \exists R < \infty\) such that \[\int_{B_R(y_k)} \rho_{R_k} \di x \geq \lambda - \epsilon \] (Concentration)};\\
      \node [block_center] (vanishing_impossible) {\[\lightning \quad \liminfn \E^{X_2}_{R_n} [\rhon] \geq 0  \quad \lightning\] (Lemma \ref{lem:subadditivity})}; & &  \node [block_center] (concentration_stay_close) {for a subsequence \[ \sup_k \min \{\abs{y_k}, \abs{y_k - R_k}\} < \infty \] (Lemma \ref{lem:concentration_yk_bounded})}; \\
      & & \node [block_center] (concentration_done) {\[\E^{X_2}_{R_n}[\rhon] \geq \minimum\] \hfill $\square$}; \\
      & \node [block_center] (dichotomy_splitting) {\[ \E^{X_2}_{R_n}[\rhon] \geq \sum_{l=1}^T I^{X,\infty}_{\alpha_l} + \E^{X_2}_{R_n}[\rho_{n}^2]\]}; & \\
      \node [block_center] (two_masses) {\(\rho_{R_n}^1\) stays close to \\exactly one nucleus \[\E^{X_2}_{R_n}[\rho_{n}^2]\geq \min \limits_{\beta} \big( I^{X}_\beta + I^{X}_{\lambda - \sum_l \alpha_l - \beta} \big)\] \hfill $\square$}; & & \node [block_center] (further_splitting) {\(\rho_{R_n}^1\) does not stay close to \\exactly one nucleus}; \\
       & & \node [block_center] (energy_update) {by Lemma \ref{lem:lions_cc} up to a subsequence, \[\rho_{R_n}^2 \approx \tilde{\rho}_{R_n}^1 + \tilde{\rho}_{R_n}^2\] with \[\E^{X^2}_{R_n}[\tilde{\rho}^1_{R_n}] \geq I^{X,\infty}_{\tilde{\alpha}}\]};\\
    };
    \begin{scope}[every path/.style=line]
      \path (start) -- (ccp);
      \path (ccp) -| (vanishing);
      \path (ccp) -- (dichotomy);
      \path (ccp) -| (concentration);
      \path (vanishing) -- (vanishing_impossible);
      \path (concentration) -- (concentration_stay_close);
      \path (concentration_stay_close) -- (concentration_done);
      \path (dichotomy) -- (dichotomy_splitting);
      \path (dichotomy_splitting) -| (two_masses);
      \path (dichotomy_splitting) -| (further_splitting);
      \path (further_splitting) -- (energy_update);
      \path (energy_update) -| (dichotomy_splitting);
    \end{scope}
  \end{tikzpicture}}
  \caption{Structure of the proof for the lower bound. The loop on the bottom-right can only be visited a finite number of times.}
  \label{fig:structure_proof}
\end{center}
\end{figure}

\begin{lemma}[Lemma I.1., Lions \cite{lions_cc}] \label{lem:lions_cc}
Let $(\rho_n)_{n \geq 1} \subseteq L^1(\R^D)$ be a sequence of non-negative functions such that
$$ \int_{\R^D} \rho_n \di x = \lambda, \quad \lambda >0 \text{ fixed.}$$
Then there exists a subsequence $(\rho_{n_k})_{k \geq 1}$ satisfying one (and only one) of
the following properties:
\begin{enumerate}
\item \textcolor{blue}{Concentration} There is $(y_{n_k})_{k \geq 1} \subseteq \R^D$ such that $\rho(\cdot + y_{n_k})$ is tight, i.e. 
\[
\forall \epsilon >0 ~~~ \exists R < \infty: \quad \int_{B_R(y_k)} \rho_{n_k} \di x \geq \lambda - \epsilon  
\]
\item \textcolor{blue}{Vanishing} For any $R<\infty$, we have
\[ \lim \limits_{k \to \infty} \sup_{y \in \R^D} \int_{B_R(y)} \rho_{n_k} \di x =0, \]
\item \textcolor{blue}{Dichotomy}
  There is $\alpha \in (0, \lambda)$ and  $\rho_{k}^1, \rho_{k}^{2}$ $\in L^1(\R^D)$ non-negative such that 
  \begin{gather*}
  \int_{\R^D} |\rho_{n_k} - \rho^{1}_k - \rho^2_{k} | \longrightarrow  0,\\
  \int_{\R^D} \rho^{1}_k \longrightarrow \alpha  \text{ and } \int_{\R^D} \rho^{2}_k \longrightarrow \lambda - \alpha \\
  \mathrm{dist}\big(\supp (\rho^{1}_k), \supp (\rho^{2}_k) \big) \longrightarrow \infty.
  \end{gather*}
\end{enumerate}
\end{lemma}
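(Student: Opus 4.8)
The plan is the classical argument of Lions via the \emph{L\'evy concentration functions}. For each $n$ set
\[
Q_n(t) := \sup_{y \in \R^D} \int_{B_t(y)} \rho_n \di x, \qquad t \in [0, \infty),
\]
which is non-negative, non-decreasing in $t$, and bounded above by $\lambda$. By Helly's selection theorem we pass to a subsequence (not relabeled) along which $Q_n$ converges pointwise on $[0,\infty)$ to a non-decreasing $Q$ with $0 \le Q \le \lambda$, and put
\[
\alpha := \lim_{t \to \infty} Q(t) = \sup_{t \ge 0} Q(t) \in [0, \lambda].
\]
The three alternatives correspond to the mutually exclusive cases $\alpha = 0$, $0 < \alpha < \lambda$, and $\alpha = \lambda$; this trichotomy on $\alpha$, together with the mass bookkeeping in the dichotomy step, also yields the ``one and only one'' claim. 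If $\alpha = 0$, then $Q_n(t) \to 0$ for every fixed $t$, which is precisely the vanishing alternative.

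Suppose $\alpha = \lambda$. For each $j \in \N$ choose $R_j$ with $Q(R_j) > \lambda - 1/j$, so that for $n$ large $Q_n(R_j) > \lambda - 1/j$ and there is $y_n^{(j)}$ with $\int_{B_{R_j}(y_n^{(j)})} \rho_n \di x > \lambda - 1/j$. Taking $j$ so large that $1/j < \lambda/2$, any two such balls each carry more than half of the total mass $\lambda$ and hence intersect, so that $\lvert y_n^{(j)} - y_n^{(1)} \rvert < R_1 + R_j$ and therefore $\int_{B_{R_1 + 2R_j}(y_n^{(1)})} \rho_n \di x > \lambda - 1/j$. Thus the single sequence $(y_n^{(1)})_n$, with radii $R_1 + 2R_j$ as $j \to \infty$, witnesses tightness: this is the concentration alternative.

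Finally suppose $0 < \alpha < \lambda$. Fix $\epsilon > 0$ and choose $R$ with $Q(R) > \alpha - \epsilon$; for $n$ large there is $y_n$ with $\int_{B_R(y_n)} \rho_n \di x > \alpha - 2\epsilon$. Since $Q_n(m) \to Q(m) \le \alpha$ for every integer radius $m$, one can select $R_n \to \infty$ slowly enough that $\limsup_n Q_n(R_n) \le \alpha$. Setting $\rho_n^1 := \rho_n \mathbbm{1}_{B_R(y_n)}$ and $\rho_n^2 := \rho_n \mathbbm{1}_{\R^D \setminus B_{R_n}(y_n)}$, the quantity $\norm{\rho_n - \rho_n^1 - \rho_n^2}_{L^1}$ equals the mass of $\rho_n$ on the annulus $B_{R_n}(y_n) \setminus B_R(y_n)$, which is at most $Q_n(R_n) - \int_{B_R(y_n)} \rho_n \di x$ and hence $O(\epsilon)$ in the limit; likewise $\lim_n \norm{\rho_n^1}_{L^1}$ and $\lim_n \norm{\rho_n^2}_{L^1}$ lie within $O(\epsilon)$ of $\alpha$ and $\lambda - \alpha$, respectively; and $\dist(\supp \rho_n^1, \supp \rho_n^2) \ge R_n - R \to \infty$ since $R$ is fixed. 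A second diagonal extraction over $\epsilon = 1/j \to 0$ then upgrades all of this to the required exact limits $\norm{\rho_n - \rho_n^1 - \rho_n^2}_{L^1} \to 0$, $\norm{\rho_n^1}_{L^1} \to \alpha$, $\norm{\rho_n^2}_{L^1} \to \lambda - \alpha$.

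The step I expect to be most delicate is this last one: controlling the nested subsequences through the two diagonal extractions, and in particular choosing the radii $R_n$ so that they tend to infinity while the mass of $\rho_n$ captured in $B_{R_n}(y_n)$ still tends to $\alpha$, so that the annulus becomes asymptotically negligible after letting $\epsilon \to 0$. The remaining ingredients -- measurability of $y \mapsto \int_{B_t(y)} \rho_n \di x$, Helly's theorem, and the exclusivity of the three regimes -- are routine.
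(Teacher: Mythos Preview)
Your sketch is the classical Lions argument via L\'evy concentration functions and Helly selection, and it is correct as written; the dichotomy step with the two nested diagonal extractions is indeed the only place requiring care, and you have identified it accurately. Note, however, that the paper does not give its own proof of this lemma at all: it is quoted verbatim from Lions~\cite{lions_cc} and used as a black box (with the refined dichotomy statement from~\cite{lions_cc2} invoked later), so there is nothing in the paper to compare your argument against beyond the original reference you are reproducing.
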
{}
For the dichotomy case we will actually use the stronger statement given in \cite{lions_cc2} see below.

Hence we have to distinguish three cases.
Note that the concentration case is the extreme case where the entire mass stays at one nucleus. Dichotomy corresponds to the electron mass being distributed in some way over the two nuclei. Finally vanishing means that the electron mass separates from both nuclei completely, which does not fit into our result.

Therefore let us start with the vanishing case.\\
\underline{Case 1: \textcolor{blue}{Vanishing}:}

We apply the bounds for the energy functional $\E^{X_2}$ established in \eqref{lem:bounds}, which yields
\begin{align}\label{eq:seminorm_bdd}
    \norm{\nabla \sqrt{ \rho^{X_2}_{R_n}}} _{L^2}^2
    \leq C + \E^{X_2}[\rhon] 
    =
    C + \In
    \leq C,
\end{align}{}
which implies that the sequence $\big( \rhon \big)_n$ is bounded in $H^1(\R^3)$.
Hence we can apply the following lemma by Lions.
\begin{lemma}[Lemma I.1., Lions\cite{lions_cc2}]
Let $1 \leq p \leq \infty$, $1 \leq q < \infty$ with $q \neq \tfrac{Dp}{D-p} =:p^{*}$ if $p <D$.
Assume that $(u_n)_{n \geq 1}$ and  $\big(\nabla u_n \big)_{n}$ are bounded in $L^q(\R^D)$ and $L^p(\R^D) $, respectively. If 
$$\sup_{y \in \R^D} \int_{B_R(y)} |u_n|^{q} \di x \gegen{n}{\infty} 0, \quad \text{for some } R >0,$$
then $u_n \to  0 $ in $L^\alpha(\R^D)$ for $\alpha$ between $q$ and $p^*$ (if $p \geq D$ set $p^{*} = \infty$).
\end{lemma}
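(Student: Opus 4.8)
The plan is to derive the global $L^\alpha(\R^D)$ bound from a summed family of \emph{local} Gagliardo--Nirenberg--Sobolev inequalities on a bounded-overlap cover of $\R^D$ by balls of radius $R$, and then to use that $\alpha$ lies \emph{strictly} between $q$ and $p^*$ so that every resulting term carries a strictly positive power of the vanishing quantity $\varepsilon_n := \sup_{y\in\R^D}\int_{B_R(y)}|u_n|^q\di x\to0$. I treat the generic case $q<\alpha<p^*$ (nonempty exactly because $q\neq p^*$, and the only one used later; the reversed range $p^*<\alpha<q$ is handled by the same interpolation with the two exponents swapped, and the borderline values $\alpha\in\{q,p^*\}$ must be excluded, as the argument below uses both inequalities in strict form).

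First I would fix once and for all a set of centers, say $(y_i)_{i\in\N}\subset\tfrac{R}{\sqrt{D}}\mathbb{Z}^D$, so that the balls $B_i:=B_R(y_i)$ cover $\R^D$ and each point lies in at most $N_0=N_0(D)$ of them. Writing $M:=\sup_n\big(\norm{\nabla u_n}_{L^p(\R^D)}^p+\norm{u_n}_{L^q(\R^D)}^q\big)<\infty$, the covering yields at once
\[
\int_{\R^D}|u_n|^\alpha\di x\;\le\;\sum_i\int_{B_i}|u_n|^\alpha\di x,\qquad\qquad \sum_i\Big(\norm{\nabla u_n}_{L^p(B_i)}^p+\norm{u_n}_{L^q(B_i)}^q\Big)\;\le\;N_0\,M .
\]

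Second, on each fixed ball I would invoke the interpolation--embedding bound: choosing $\theta\in(0,1)$ by $\tfrac1\alpha=\tfrac{\theta}{p^*}+\tfrac{1-\theta}{q}$, Gagliardo--Nirenberg together with the Sobolev-type estimate $\norm{v}_{L^{p^*}(B_R)}\le C_R\big(\norm{\nabla v}_{L^p(B_R)}+\norm{v}_{L^q(B_R)}\big)$ (valid for any $q\ge1$ on the bounded ball, by Poincar\'e--Wirtinger; when $p\ge D$ replace $p^*$ by any finite exponent larger than $\alpha$) gives, with $C_R$ independent of $i$ by translation invariance,
\[
\int_{B_i}|v|^\alpha\di x\;\le\;C_R\,\big(\norm{\nabla v}_{L^p(B_i)}+\norm{v}_{L^q(B_i)}\big)^{\theta\alpha}\,\norm{v}_{L^q(B_i)}^{(1-\theta)\alpha}.
\]
The two arithmetic facts I would read off from $q<\alpha<p^*$ are $0<(1-\theta)\alpha<q$ and $\alpha>q$: the first lets me bound $\norm{u_n}_{L^q(B_i)}^{(1-\theta)\alpha}\le\varepsilon_n^{(1-\theta)\alpha/q}$, the second gives $\norm{u_n}_{L^q(B_i)}^{\alpha}\le\varepsilon_n^{(\alpha-q)/q}\,\norm{u_n}_{L^q(B_i)}^{q}$.

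Third, I would sum over $i$ with $v=u_n$. After the elementary split $\big(\norm{\nabla u_n}_{L^p(B_i)}+\norm{u_n}_{L^q(B_i)}\big)^{\theta\alpha}\le 2^{\theta\alpha}\big(\norm{\nabla u_n}_{L^p(B_i)}^{\theta\alpha}+\norm{u_n}_{L^q(B_i)}^{\theta\alpha}\big)$, the ``pure'' piece sums to $\sum_i\norm{u_n}_{L^q(B_i)}^{\alpha}\le\varepsilon_n^{(\alpha-q)/q}N_0M$, while the ``mixed'' piece $\sum_i\norm{\nabla u_n}_{L^p(B_i)}^{\theta\alpha}\norm{u_n}_{L^q(B_i)}^{(1-\theta)\alpha}$ is estimated by H\"older in the index $i$ with exponent $p/(\theta\alpha)$ if $\theta\alpha<p$ — so that the gradient part is summed against $N_0M$ and the $L^q$ part, raised to the conjugate exponent, still leaves a positive power of $\varepsilon_n$ after extracting one full factor $\norm{u_n}_{L^q(B_i)}^q$ — and, if $\theta\alpha\ge p$, by the inclusion $\ell^p\hookrightarrow\ell^{\theta\alpha}$ on the gradient part together with $\norm{u_n}_{L^q(B_i)}^{(1-\theta)\alpha}\le\varepsilon_n^{(1-\theta)\alpha/q}$. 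Either way $\int_{\R^D}|u_n|^\alpha\di x\le C\,\varepsilon_n^{\delta}$ with $\delta>0$ and $C=C(R,D,p,q,\alpha,M)$, and letting $n\to\infty$ concludes. I expect the main obstacle to be precisely this last piece of bookkeeping: arranging the summation so that every surviving summand keeps a strictly positive power of $\varepsilon_n$ while its companion factors stay summable against the uniform $L^p$ and $L^q$ bounds — which is exactly where the strict inclusions $q<\alpha<p^*$, and the resulting case split $\theta\alpha\lessgtr p$, are genuinely needed.
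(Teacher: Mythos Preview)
The paper does not supply a proof of this lemma: it is quoted as Lions' Lemma~I.1 and used immediately afterwards as a black box (with $p=q=2$). There is therefore nothing in the paper to compare your argument against.

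Your argument is correct and is essentially Lions' original proof: cover $\R^D$ by a bounded-overlap family of balls of radius $R$, apply the local Sobolev--Poincar\'e/Gagliardo--Nirenberg inequality on each, and sum so that every surviving term carries a strictly positive power of $\varepsilon_n$. One step you leave implicit deserves to be written out: in the case $\theta\alpha<p$, your claim that ``the $L^q$ part raised to the conjugate exponent still leaves a positive power of $\varepsilon_n$ after extracting one full factor $\norm{u_n}_{L^q(B_i)}^q$'' amounts to the inequality $(1-\theta)\alpha\cdot\tfrac{p}{p-\theta\alpha}>q$; substituting $\theta\alpha=\tfrac{(\alpha-q)p^*}{p^*-q}$ this reduces exactly to $p<p^*$, hence holds. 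Your remark that the endpoints $\alpha\in\{q,p^*\}$ must be excluded is also correct (spreading, respectively concentrating, profiles give counterexamples). The only loose end is your parenthetical on the reversed range $p^*<\alpha<q$: the argument is not literally symmetric because the vanishing hypothesis sits on the $L^q$ side rather than the $L^{p^*}$ side, so ``swapping the two exponents'' does not close the estimate without an extra idea. This is harmless here, since the paper only invokes the lemma with $p=q=2<6=p^*$.
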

With $p=q=2$ we obtain that 
\begin{align*}
    \sqrt{\rhon} \gegen{n}{\infty} 0, \quad \text{in} \enskip L^\alpha(\R^3), \enskip \alpha \in (2,6).
\end{align*}{}
So clearly we get by applying 1 and 3 from Assumption\ref{assumption1}
\begin{align*}
    0 \leq - E_{xc}[\rhon] = C\intr \big( \rhon\big)^{1+ \beta_+}  +\big( \rhon\big)^{1+ \beta_-} \di x \gegen{n}{\infty} 0,
\end{align*}
where $1+ \beta_{\pm} \in (1,\tfrac{5}{3})$ by assumption.\\
Furthermore we can split the Coulomb potential $V = v_1 + v_2$ with $v_1 \in L^{q}(\R^3)$ and $v_2\in L^r(\R^3)$ with $q<3$ and $r>3$. Hence by taking e.g.~$q=2, r=4$ and applying H\"older inequality with we obtain
\begin{align*}
    \bigg|\intr \frac{1}{|x|} \rhon \di x \bigg|
    \leq 
    \norm{v_1}_{L^2} \norm{\rhon}_{L^{2}} + \norm{v_2}_{L^4} \norm{\rhon}_{L^{\nicefrac{4}{3}}}
    \gegen{n}{\infty} 0. 
\end{align*}{}
Analogously for the second nucleus with Coulomb potential $\tfrac{1}{|\cdot - R_n|}$.
So combining those two results yields
\begin{align*}
    \liminfn \E^{X_2, R_n} [\gamman] \geq 0,
\end{align*}{}
but this contradicts the upper bound \eqref{eq:upper_bound} we established
\begin{align*}
    0 \leq \liminfn \E^{X_2, R_n} [\rhon] \leq \limsupn  \E^{X_2, R_n} [\rhon] \leq \minimum < 0 \enskip \lightning.
\end{align*}{}{}
Therefore vanishing cannot occur.

\underline{Case 2: \textcolor{blue}{Concentration}:}

Assume concentration occurs, i.e.~
\begin{align*}
    \forall~\epsilon>0 ~ \exists~(y_n)_n,~ M< \infty:~ \int_{B_M(y_n)} \rhon(x) \di x\geq \lambda - \epsilon \enskip \forall~n.
\end{align*}

Intuitively this corresponds to
\begin{align*}
    \E[\rhon] \gegen{n}{\infty} I^X_\lambda + I^X_0 = I^X_\lambda.
\end{align*}{}

We start off with a small lemma.

\begin{lemma}\label{lem:concentration_yk_bounded}
The sequence $(y_n)_n$ stays bounded around 0 or $(R_n)_n$, to be more precise (up to a subsequence)
\begin{align*}
    \exists~L<  \infty~~\forall n \geq0: \enskip |y_n| \leq L \text{ or } |y_n -R_n| \leq L.
\end{align*}{}
\end{lemma}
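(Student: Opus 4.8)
The plan is to argue by contradiction: suppose that along every subsequence both $|y_n| \to \infty$ and $|y_n - R_n| \to \infty$. The heuristic is that in the concentration case essentially all of the electron mass $\lambda$ sits in a bounded ball around $y_n$, so the corresponding density $\rho_{R_n}$ ``sees'' neither nucleus — it behaves asymptotically like a density for the problem at infinity $I^\infty_\lambda$. But by Lemma \ref{lem:subadditivity} we know $I^\infty_\lambda > I^X_\lambda > I^{X_2}_{\lambda,R_n}$ strictly, and by the upper bound \eqref{eq:upper_bound} already established, $\limsup_n I^{X_2}_{\lambda,R_n} \leq \min_{\alpha}(I^X_\alpha + I^X_{\lambda-\alpha}) \leq I^X_\lambda + I^X_0 = I^X_\lambda < I^\infty_\lambda$. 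So if we can show $\liminf_n \E^{X_2}_{R_n}[\rho_{R_n}] \geq I^\infty_\lambda$ under the assumption that $y_n$ escapes both nuclei, we reach a contradiction.

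Concretely, first I would quantify the concentration: given $\epsilon > 0$ choose $M_\epsilon < \infty$ with $\int_{B_{M_\epsilon}(y_n)} \rho_{R_n} \geq \lambda - \epsilon$ for all $n$, using Lemma \ref{lem:bounds} to keep $\big(\rho_{R_n}\big)_n$ bounded in $H^1$ (hence in particular the tail mass outside $B_{M_\epsilon}(y_n)$ is small and controlled). Then I would estimate the electron–nuclei interaction term
\[
\left| \intr V^{X_2}_{R_n} \rho_{R_n} \di x \right|
\leq Z \intr \frac{\rho_{R_n}(x)}{|x|}\di x + Z\intr \frac{\rho_{R_n}(x)}{|x - R_n|}\di x.
\]
Splitting each Coulomb kernel as $v_1 + v_2$ with $v_1 \in L^2$, $v_2 \in L^4$ (exactly as in the vanishing case) and using that the bulk of $\rho_{R_n}$ lives in $B_{M_\epsilon}(y_n)$, which has distance $\to \infty$ from both $0$ and $R_n$, the contribution of that bulk to the interaction goes to $0$ (the kernel restricted far from the singularity is uniformly small on the relevant region and lies in $L^2\cap L^4$ with vanishing norm there), while the contribution of the $\leq \epsilon$ tail is bounded by $C\sqrt\epsilon$ via Hölder and the uniform $H^1$-bound. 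Hence $\limsup_n \big| \intr V^{X_2}_{R_n}\rho_{R_n} \di x\big| \leq C\sqrt{\epsilon}$, and since $\epsilon$ is arbitrary this interaction term vanishes in the limit. All the remaining terms of $\E^{X_2}_{R_n}$ — kinetic, Hartree, exchange-correlation — are exactly the terms defining $\E^\infty$, so $\liminf_n \E^{X_2}_{R_n}[\rho_{R_n}] \geq \liminf_n \E^\infty[\rho_{R_n}] \geq I^\infty_\lambda$.

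Combining, $I^\infty_\lambda \leq \liminf_n I^{X_2}_{\lambda,R_n} \leq \limsup_n I^{X_2}_{\lambda, R_n} \leq I^X_\lambda < I^\infty_\lambda$, a contradiction. Therefore along a subsequence $y_n$ must stay within a bounded distance $L$ of $0$ or of $R_n$. The main obstacle I anticipate is the careful bookkeeping in the interaction estimate: one must control the Coulomb singularity near a nucleus \emph{even though} only $\lambda - \epsilon$ of the mass is guaranteed to be far from it, so the small-mass tail near the singularity must be handled separately (uniform $H^1$-bound plus Hölder on $B_1(0)$, say, gives $\int_{B_1(0)} \rho_{R_n}/|x| \leq C\|\rho_{R_n}\|_{L^{4/3}}$, which is bounded but not small — so one genuinely needs that this tail has small \emph{mass}, not just small kernel, and interpolates this against the $L^2$-type bound). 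Once that estimate is set up the rest is a direct application of the already-proven lemmas.
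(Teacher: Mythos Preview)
Your proposal is correct and follows essentially the same contradiction strategy as the paper: assume $y_n$ escapes both nuclei, show the electron--nuclei interaction vanishes so that $\liminf_n \E^{X_2}_{R_n}[\gamma_{R_n}] \geq I^\infty_\lambda$, and contradict the upper bound via $I^X_\lambda < I^\infty_\lambda$ from Lemma~\ref{lem:subadditivity}. The only technical difference is in the interaction estimate: where you split the Coulomb kernel $v_1+v_2$ and interpolate the small-mass tail against the $H^1$-bound, the paper instead writes $\int \tfrac{\rho}{|x|} = \int \sqrt{\rho}\cdot\tfrac{\sqrt{\rho}}{|x|}$, applies Cauchy--Schwarz and then Hardy's inequality to obtain directly $\int_{\text{tail}} \tfrac{\rho}{|x|} \leq 2\big(\int_{\text{tail}}\rho\big)^{1/2}\|\nabla\sqrt{\rho}\|_{L^2} \leq C\sqrt{\epsilon}$, which handles the singularity near the nucleus in one clean step and avoids the interpolation bookkeeping you flag as the main obstacle.
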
{}

\begin{proof}
Assume $|y_n| > L $ and $|y_n-R_n| > L$ for any $L >0$ (in particular $L\gg M$).
We estimate the Coulomb interaction by applying Cauchy-Schwarz and then Hardy's inequality  to obtain
\begin{align*}
    \int \frac{\rhon}{|x|} \di x
    &= 
    \int \frac{\sqrt{\rhon} \sqrt{\rhon}}{|x|} \di x
    \leq 
    \bigg( \int \rhon \di x \bigg)^{\nicefrac{1}{2}} 
        \bigg( \int \frac{\rhon}{|x|^2} \di x \bigg)^{\nicefrac{1}{2}} \\
    &\leq 
    2 \bigg( \int \rhon \di x \bigg)^{\nicefrac{1}{2}} 
    \bigg( \int |\nabla \sqrt{\rhon}|^2  \di x \bigg)^{\nicefrac{1}{2}} 
    \leq C \bigg( \int \rhon \di x \bigg)^{\nicefrac{1}{2}}, 
\end{align*}{}
where we used that the $H^1$-seminorm of $\big(\sqrt{\rhon}\big)_n$ stays bounded \eqref{eq:seminorm_bdd}.
Now if $|y_n| >L$ we obtain 
\begin{align*}
    \bigg( \int_{B_{L-M}(0)} \frac{\rhon}{|x|} \di x \bigg)^2
    \leq 
    C \int_{B_{L-M}(0)}\rhon \di x 
    \leq 
    C\bigg( \lambda - \int_{B_M(y_n)} \rhon  \di x \bigg)
    \leq C \epsilon
\end{align*}{}
and the analogous result for the Coulomb interaction with the other nucleus.\\
Then,
\begin{align*}
    - \frac{1}{Z} V^{X_2}_{R_n}[\rhon] 
    &= 
    \intr \rhon \bigg( \frac{1}{|x|} + \frac{1}{|x-R|} \di x \bigg)\\
    &=
    \int_{B_{L-M}(0)} \frac{\rhon}{|x|} \di x + \int_{B_{L-M}(R_n)} \frac{\rhon}{|x-R_n|} \di x
    + \int_{B^c_{L-M}(0)} \frac{\rhon}{|x|} \di x + \int_{B^c_{L-M}(R_n)} \frac{\rhon}{|x-R_n|} \di x \\
    &\leq 
    \frac{2 \lambda}{L-M} + 2C \epsilon^{\nicefrac{1}{2}}.
 \end{align*}{}
 Since this inequality holds for any $L>M$ we can take $L \to \infty$ and then $\epsilon \to 0$, which gives
 \begin{align*}
     V^{X_2}_{R_n}[\rhon] \gegen{n}{\infty} 0.
 \end{align*}{}
 But this would imply
 \begin{align}\label{eq:contradiction_concentration}
     I^X_\lambda \geq \minimum
     \geq \limsupn \E^{X_2}_{R_n}[\rhon]
     \geq \liminfn \E^{X_2}_{R_n}[\rhon]
     = \liminfn \E^{\infty}[\rhon]
     \geq I^{\infty}_\lambda,
 \end{align}{}
 where the first inequality simply comes from the fact that $I^X_\lambda$ is the same as setting $\alpha = 0$, thus it is for sure bigger than the minimum. The second inequality is exactly our upper bound \eqref{eq:upper_bound} proven in the previous subsection.
But equation \eqref{eq:contradiction_concentration} is a contradiction to the strict inequality in Lemma \ref{lem:subadditivity} (ii).
 This finishes the proof.
\end{proof}{}
So Lemma \ref{lem:concentration_yk_bounded} gives us either $|y_n| \leq L$ or $|y_n - R| \leq L$ for some $L>0$.
W.l.o.g.~we can in the following assume that $|y_n| \leq L$ (otherwise transform the coordinate system by a reflection s.t. $0$ gets mapped to $R_n$. This leaves the energy functional unchanged.)\\
The last step consists now in a cut-off argument.

By Lemma \ref{lem:bounds} the sequence $(\gamman)_{n}$ stays uniformly bounded in $\H$ and hence we have (up to subsequence) 
\[
\gamman \overset{\ast}{\rightharpoonup} \gamma^{\ast} \text{ in } \H,\quad \sqrt{\rhon} \rightharpoonup \sqrt{\rho^\ast} \text{ in } H^1(\R^3).
\]
Since we are in the concentration case and the $(y_n)_n$ stays bounded, we can choose for any $\epsilon >0$ a compact set $Q \subseteq \R^3$ such that
\[
\int_Q \rhon \di x \geq \int_{B_M(y_n)} \rhon \di x \geq \lambda - \epsilon.
\]
Therefore we get for the limit $\rho^\ast$ using convergence in $L^1_{loc}$ 
\[
 \norm{\rho^\ast}_{L^1} \geq \int_Q \rho^{\ast} \di x = \limit{n}{\infty} \int_Q \rhon \di x \geq \lambda - \epsilon,
\]
since $\epsilon >0$ was arbitrary we get $\norm{\rho^\ast}_{L^1} = \lambda$.
Therefore $\sqrt{\rhon}$ convergences also strongly in $L^2$ and due to the weak convergence in $H^1$ also strongly in $L^p(\R^3)$ for $p \in [2,6)$.

Therefore we get by using the same line of argument as above with Hardy's inequality
\begin{align*}
    \intr \frac{1}{|x-R_n|} \rhon \di x \gegen{n}{\infty} 0.
\end{align*}{}
Using now the sequential weak lower semi-continuity of the kinetic energy functional $T$ we obtain
\begin{align*}
    \liminfn \E^{X_2}[\gamman] 
    \geq 
    \E^{\infty}[\gamma^{\ast}] - \intr \frac{1}{|x|} \rho^{\ast} \di x = 
    \E^{X}[\gamma^{\ast}] 
    \geq I^{X}_{\lambda} = I^{X}_{\lambda} + I^{X}_0.
\end{align*}{}

\underline{Case 3: \textcolor{blue}{Dichotomy}:}

Take a smooth partition of unity $\xi^2 + \zeta^2 =1$ such that 
\begin{align*}
    0\leq \xi, \zeta \leq 1, ~ \xi(x) = 1, \text{ if } |x| \leq 1, \xi(x) = 0 \text{ if } |x| \geq 2 ~~ \text{and} ~ \zeta(x) = 0, \text{ for } |x| \leq 1, \zeta(x) = 1 \text{ for } |x| \geq 2.
\end{align*}
Furthermore assume 
\begin{align*}
    \norm{\nabla \xi}_\infty \leq 2 \text{ and }  \norm{\nabla \zeta}_\infty \leq 2,
\end{align*}{}
and consider the dilated functions $\xi_K(x) = \xi\big(\tfrac{x}{K}\big)$ and
$\zeta_K(x) = \zeta\big(\tfrac{x}{K}\big)$.
Now if we use the detailed construction of the dichotomy case given in \cite{lions_cc2} (compare also \cite{cances}), we can assume that (up to a subsequence), there exists 
\begin{itemize}
    \item  $\alpha \in (0, \lambda) $
    \item  a sequence of points $(y_n)_n \in \R^3$
    \item  two increasing sequences of positive real numbers $(K^{(1)}_{n})_n$ and $(K^{(2)}_{n})_n$ such that 
        \begin{align} \label{eq:dichotomy_increasing}
            \limit{n}{\infty} K^{(1)}_{n} = \infty ~~ \text{and} ~~ \limit{n}{\infty} \frac{K^{(2)}_{n}}{2} - K^{(1)}_{n} = \infty
        \end{align}
\end{itemize}
such that the sequences $\gamma^{(1)}_{n} := \xi_{K^{(1)}_{n}}\gamman \xi_{K^{(1)}_{n}}$ and $\gamma^{(2)}_{n} := \zeta_{K^{(2)}_{n}}\gamman \zeta_{K^{(2)}_{n}}$ satisfy 

\begin{flalign}[left = \empheqlbrace\,]
   &  \rho_{\gamman} = \rho_{\gamma^{(1)}_{n}}~\text{on } B_{K_{1,n}(y_n)}, \label{eq:dichotomy_begin}
    \quad 
     \rho_{\gamman} = \rho_{\gamma^{(2)}_{n} }~\text{on } B^{c}_{K_{2,n}(y_n)},&  \\
    & \limit{n}{\infty}\tr \gamma^{(1)}_{n} = \alpha,  \\
    & \limit{n}{\infty} \tr \gamma^{(2)}_{n} = \lambda -  \alpha,&  \\
     &\rho_{\gamma^{(1)}_{n}} + \rho_{\gamma^{(2)}_{n}} - \rho_{\gamman} \gegen{n}{\infty} 0 \text{ in } L^p \text{ for all } p \in [1,3),& \label{eq:dichotomy_splitting}\\
     & \norm{\rho_{\gamma_{n}}}_{L^p\left(B_{K^{(2)}_n}(y_n) \setminus \overline{B}_{K^{(1)}_n}(y_n)\right)} \gegen{n}{\infty} 0 \text{ in } L^p \text{ for all } p \in [1,3), & \label{eq:dichotomy_annulus} \\
     &\limit{n}{\infty} \mathrm{dist}\big( \supp\big( \rho_{\gamma^{(1)}_{n}} \big), \supp\big( \rho_{\gamma^{(2)}_{n}} \big)\big) = \infty, & \label{eq:dichotomy_distance} \\
     &\liminfn \tr \big[- \Lap\big( \gamma_n - \gamma^{(1)}_{n}  -\gamma^{(2)}_{n}  \big) \big] \geq 0.  \label{eq:dichotomy_kinetic}
\end{flalign}

In terms of the energy functional this splitting gives 
\begin{align*}
    \E^{X_2}[\gamman]
    &=
    \E^{\infty}[\gamma^{(1)}_{n} ]
    +
    \E^{\infty}[\gamma^{(2)}_{n} ] 
    +
    \intr \rho_{\gamma^{(1)}_{n} } V^{X_2} 
     + 
     \intr \rho_{\gamma^{(2)}_{n} } V^{X_2}  
    + \intr \tilde{\rho}_n V^{X_2}\\ 
    \phantom{= } 
    &+ \tr \big[- \Lap\big( \gamma_n - \gamma^{(1)}_{n}  -\gamma^{(2)}_{n}  \big) \big]\\
   \phantom{=}&+ D[\rho_{\gamma^{(1)}_{n} },\rho_{\gamma^{(2)}_{n} }] + D[\tilde{\rho}_n, \rho_{\gamma^{(1)}_{n} } + \rho_{\gamma^{(2)}_{n} }] + J[\tilde{\rho}_n] \\
   \phantom{=}&+ \intr e_{xc}\big(\rhon \big) - e_{xc}\big(\rho_{\gamma^{(1)}_{n} } \big) - e_{xc}\big(\rho_{\gamma^{(2)}_{n} }\big) \di x,
\end{align*}{}
where we have denoted $\tilde{\rho}_n = \rhon - \rho_{\gamma^{(1)}_{n} } -\rho_{\gamma^{(2)}_{n} }$.
Since \eqref{eq:dichotomy_splitting} we know $\tilde{\rho}_n$ converges to zero in $L^p(\R^3)$ for all $p \in [1,3)$, so we obtain
\begin{align*}
    \intr \tilde{\rho}_n V^{X_2}
    +
     D[\tilde{\rho}_n,\rho_{\gamma^{(1)}_{n} } + \rho_{\gamma^{(2)}_{n} }] 
     + J[\tilde{\rho}_n]
     \gegen{n}{\infty} 0.
\end{align*}{}
Indeed, the first term can be handled by again splitting up the Coulomb potential and the second and third term are dealt with using Hardy-Littlewood-Sobolev
\[
\big|D[f,g] \big| = \bigg| \intr \intr \frac{f(x) g(y)}{|x-y|} \di x \di y \bigg| \leq C \norm{f}_{L^{\nicefrac{6}{5}}}  \norm{g}_{L^{\nicefrac{6}{5}}} .
\]
Furthermore for the Coulomb-interaction between $\rho_{\gamma^{(1)}_{n} }$ and $\rho_{\gamma^{(2)}_{n} }$ we have
\begin{align*}
    D[\rho_{\gamma^{(1)}_{n} }, \rho_{\gamma^{(2)}_{n} }]
    \leq 
   \mathrm{dist}\big( \supp\big( \rho_{\gamma^{(1)}_{n}}  , \rho_{\gamma^{(2)}_{n}} \big) \big)^{-1} \norm{\rho_{\gamma^{(1)}_{n} }}_{L^1} \norm{\rho_{\gamma^{(2)}_{n} }}_{L^1} 
    \gegen{n}{\infty} 0,
\end{align*}{}
where we used \eqref{eq:dichotomy_distance}.
Also the difference in the exchange terms vanishes

\begin{align*}
    \bigg|\intr e_{xc}\big(\rhon \big) &- e_{xc}\big(\rho_{\gamma^{(1)}_{n} } \big) - e_{xc}\big(\rho_{\gamma^{(2)}_{n} } \big) \bigg| \\
    &\leq 
    \int\limits_{B_{K_{2,n}}(y_n) \backslash \overline{B}_{K_{1,n}}(y_n)} \big|e_{xc}\big(\rhon \big)\big| + \big|e_{xc}\big(\rho^{H}_{\gamma^{(1)}_{n}} \big)\big| + \big|e_{xc}\big(\rho^{H}_{\gamma^{(2)}_{n}} \big) \big| \\
    &\leq 
    3C \bigg(\norm{\rhon}^{p_{-}}_{L^{p_{-}}\big( B_{K_{2,n}}(y_n) \backslash \overline{B}_{K_{1,n}}(y_n)\big) }
    +\norm{\rhon}^{p_{+}}_{L^{p_{+}}\big( B_{K_{2,n}}(y_n) \backslash \overline{B}_{K_{1,n}}(y_n)\big) }\bigg) \gegen{n}{\infty} 0,
\end{align*}{}
where the exponents $p_\pm$ are given by $p_{\pm} = 1 + \beta_\pm$.
Indeed here we employed Assumption\ref{assumption1} part 3 on $e_{xc}'(\rho)$ together with part 2 $e_{xc}(0)=0$ in order to get 
\[
\big| e_{xc}(\rho)  \big| \leq C \big( \rho^{p_+} + \rho^{p_-}\big)
\] and equation \eqref{eq:dichotomy_annulus}, i.e. that the density vanishes on the annulus $B_{K_{2,n}}(y_n) \backslash \overline{B}_{K_{1,n}}(y_n)$.

Using the \(\liminf \) estimate for the kinetic energy from \eqref{eq:dichotomy_kinetic}, we obtain
\begin{align} \label{energy_cutoff}
      \E^{X_2}[\gamman]
      \geq 
       \E^{ \infty}[\gamma^{(1)}_{n}]
    +
    \E^{ \infty}[\gamma^{(2)}_{n}] 
    +
    \intr \rho_{\gamma^{(1)}_{n}} V^{X_2} 
     + 
     \intr \rho_{\gamma^{(2)}_{n}} V^{X_2}  
          + \mathcal{R}(n)
\end{align}
with a remainder \(\mathcal{R}(n) \gegen{n}{\infty} 0 \).
The last step is to deal with the nuclei part and to go from $V^{X_2}$ to $V^{X}$; here we again have to distinguish three cases.

\underline{Case1:} $\rhogn{1}$ stays close to exactly one nucleus (w.l.o.g. the one at the origin), i.e.
\begin{align*}
    \mathrm{dist} \big( {0}, B_{K^{(1)}_{n}}(y_n)\big) \text{ stays bounded and }
    \mathrm{dist} \big( {R_n}, B_{K^{(1)}{n}}(y_n)\big) \gegen{n}{\infty} \infty.
\end{align*}{}
Note that this necessarily implies that $\mathrm{dist}\big( 0, \supp(\rhogn{2})\big) \to \infty$ due to triangle inequality.
Hence,
\begin{align*}
    \intr \rhogn{1} \frac{1}{|x - R_n|} \di x, ~   
    \intr \rhogn{2} \frac{1}{|x|} \di x
    \gegen{n}{\infty}0 
\end{align*}{}
and thus taking the limit in  \eqref{energy_cutoff} and using the continuity of $\lambda \mapsto I^{X}_\lambda$ gives 
\begin{align*}
    \liminfn 
    \In = 
    \liminfn
    \E^{X_2}[\gamman]
      \geq
      \liminfn 
       \E^{X}[\gamma^{(1)}_n]
    +
    \E^{X}[\gamma^{(2)}_n] 
        \geq 
    I^{X}_{\alpha} 
    + 
     I^{X}_{\lambda - \alpha} 
     \geq \minimum.
\end{align*}{}

\underline{Case 2:}  $\rhogn{1}$ does not stay close to any of the two nuclei, i.e.
\begin{align*}
    \mathrm{dist} \big(\{0, R_n \}, \supp\big( \rhogn{1} \big) \big) \gegen{n}{\infty} \infty.
\end{align*}{}
Then by using again our upper bound \eqref{eq:upper_bound} from Subsection \ref{sec:upper_bound} equation \eqref{energy_cutoff} becomes
\begin{align*}
    \minimum
    \geq 
    \liminfn 
    \E^{X_2}[\gamman]
    &\geq 
    I^{\infty}_{\alpha}
    +
    \liminfn
    \bigg(\E^{\infty}[\gamma^{(2)}_{n}] 
    +
    \intr \rhogn{2} V^{X_2}  \bigg)\\
    &
    = 
    I^{\infty}_{\alpha}
    +
    \liminfn \E^{X_2}[\gamma^{(2)}_{n}]\\
    &
    \geq 
    I^{ \infty}_{\alpha}
    +
    \underbrace{\liminfn \E^{X_2}[ \tilde{\gamma}_{n}]}_{=:J_{\lambda - \alpha}},
\end{align*}{}
where $\tilde{\gamma}_n$ is a minimizer of the problem $I^{X_2}_{\lambda - \alpha, R_n}$ for each $n$.

\underline{Case 3:}
Here $\rhogn{1}$ stays close to both of the nuclei and hence $\rhogn{2}$ does not stay close to any of the two. Thus we get the same result as in case 2, but with $\alpha$ and $\lambda - \alpha$ exchanged.

So we obtain 
\begin{align*}
    J_{\lambda} \geq I^{\infty}_{\alpha} + J_{\lambda - \alpha} \text{ (case 2)}
    ~~\text{ or }~~
    J_{\lambda} \geq I^{\infty}_{\lambda - \alpha} + J_{\alpha} \text{ (case 3)}.
\end{align*}
Note furthermore that  the opposite inequality always holds. As in the proof of the upper bound (section \ref{sec:upper_bound}) we can for any $\epsilon >0$ take finite rank approximations with range in $C^\infty_c(\R^3)$ of the minimizers $\gamma^\infty_\alpha$ to $I^\infty_\alpha $ and $\gamma_{\lambda - \alpha, n}$ to $J_{\lambda - \alpha}$, respectively, to obtain
\begin{align*}
    J_{\lambda} \leq \E^{X_2} [ \gamma_{\lambda - \alpha, n} + \tau_R \gamma^\infty_\alpha \tau_{-R} ] 
    \leq J_{\lambda - \alpha} + I^\infty_\alpha + \epsilon + O(\tfrac{1}{R}) \gegen{\epsilon, \tfrac{1}{R}}{0}  J_{\lambda - \alpha} + I^\infty_\alpha 
    ,
\end{align*}
so we arrive at 
\begin{align} \label{eq:splittoff_minimizer}
    J_{\lambda} = I^{\infty}_{\alpha} + J_{\lambda - \alpha} \text{ (case 2)}
    ~~\text{ or }~~
    J_{\lambda} = I^{\infty}_{\lambda - \alpha} + J_{\alpha} \text{ (case 3)}.
\end{align}
This furthermore implies that the part splitting off to infinity (i.e.~$\gamma^{(1)}_n$ in case 2 and $\gamma^{(2)}_n$ in case 3) is almost a minimizing sequence for the problem at infinity ($I^\infty_\alpha$ in case 2 and $I^\infty_{\lambda - \alpha}$ in case 3) in the sense that
\begin{align} \label{eq:almost_minimizing}
\limit{n}{\infty} \E^\infty[\gamma^{(1)}_n] = I^\infty_\alpha 
\quad 
\text{and}
\quad
\limit{n}{\infty} \tr[\gamma^{(1)}_n] = \alpha.
\end{align}

Now we are in the same position as in the beginning of the proof: We have a sequence $\tilde{\gamma}_n$ of minimizers to the functional $\E^{X_2}_{R_n}$ but now with the mass constraint 
\begin{equation} \label{eq:finally_positiv}
    \tr [\tilde{\gamma}_n] = \lambda -\alpha < \lambda \leq N.
\end{equation}{}
Going through the entire procedure of the proof again, it either ends after a finite amount of steps $(i)$ or we always end up into the dichotomy case and there case 2 or 3 $(ii)$ compare Figure \ref{fig:structure_proof}.

\underline{Case $(i)$:}
After a finite amount of steps we get
\begin{align} \label{eq:finite_splitting}
    J_\lambda
    \geq \sum_{l=1}^{T} I^{\infty}_{\alpha_{l}} + J_{\lambda - \sum_{l=1}^{T} \alpha_l}
    \geq 
    \sum_{l=1}^{T} I^{\infty}_{\alpha_{l}} + \min\limits_{\beta } \big(  I^{X}_{\beta} + I^{X}_{\lambda - \beta - \sum_{l=1}^{T} \alpha_l} \big). 
\end{align}
Let the minimum at the right hand side be attained at $\tilde{\beta}$, then we can just apply the weak subadditivity inequality from Lemma \ref{lem:subadditivity} to obtain the desired assertion
\begin{align*}
    \eqref{eq:finite_splitting}
     \geq \sum_{l=1}^{T} I^{\infty}_{\alpha_{l}} +  I^{X}_{\tilde{\beta}} + I^{X}_{\lambda - \tilde{\beta} - \sum_{l=1}^{T} \alpha_l}
    \geq I^{X}_{\tilde{\beta}} + I^{X}_{\lambda - \tilde{\beta}}
    \geq \minimum.
\end{align*}

\underline{Case $(ii)$:}
This case is the more intricate one. 


After the first splitting we have for the sequence $\tilde{\gamma}_n$ that $\norm{\rho_{ \tilde{\gamma}_n}}_{L^1} = \lambda - \alpha < \lambda$. 
In order to show that such a splitting can not occur infinitely many times we start with considering the Euler-Lagrange equations of the system.

\begin{lemma}[Euler-Lagrange equations]
Let $\gamma_R$ be a minimizer to the energy functional $\E^{X^2}_{R}$ to the mass constraint $\tr[\gamma_R] = \tilde{\lambda} < \lambda$ then it satisfies the Euler-Lagrange equations
\begin{equation}\label{eq:euler_lagrange}
\gamma_R = \mathds{1}_{(-\infty, \epsilon_F)} \big( h_{\gamma_R}\big) + \delta, \quad
\text{with } 0 \leq \delta \subset \kernel \big( h_{\gamma_R} - \epsilon_F \big) 
\end{equation}{}
for some $\epsilon_F <0$ called the Fermi energy, and with the Hamiltonian 
\[
h_{\gamma_R} 
= \big( - \tfrac{1}{2} \Lap + \rho_{\gamma_R} \ast \tfrac{1}{|x|} + V^{X_2}_{R} + e_{xc}'(\rho_{\gamma_R}) \big).
\]
Furthermore, we have
\begin{equation} \label{eq:euler_lagrange_occupation}
    \gamma_R \in \argmin\{ \tr[h_{\gamma_R} \gamma] : ~ \gamma \in K_\lambda \}.
\end{equation}{}
\end{lemma}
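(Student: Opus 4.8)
\emph{The plan.} This is the classical first-order optimality analysis for Kohn--Sham density-operator functionals (as in \cite{cances}), and I would organise it into three steps. \emph{Step 1 (variational inequality).} Since $\gamma_R$ minimises $\E^{X_2}_R$ over the convex set $K_{\tilde\lambda}$, for every $\gamma\in K_{\tilde\lambda}$ the segment $\gamma_t:=\gamma_R+t(\gamma-\gamma_R)$, $t\in[0,1]$, stays admissible, so I would use $\tfrac{d}{dt}\big|_{t=0^+}\E^{X_2}_R[\gamma_t]\ge 0$. Differentiating term by term — with $e_{xc}\in C^1$ and the growth bound of Assumption \ref{assumption1}(3), the uniform $\H$-bounds of Lemma \ref{lem:bounds}, and $H^1(\R^3)\hookrightarrow L^6$ to pass the difference quotients through the trace and the spatial integrals — the kinetic and external-potential parts give $\tr[(-\tfrac12\Lap+V^{X_2}_R)(\gamma-\gamma_R)]$, while the first-order contributions of $J$ and $E_{xc}$ along the segment are $\int(\rho_{\gamma_R}\ast\tfrac1{|x|})(\rho_\gamma-\rho_{\gamma_R})$ and $\int e_{xc}'(\rho_{\gamma_R})(\rho_\gamma-\rho_{\gamma_R})$. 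Using $\rho_\gamma=2\gamma(x,x)$ (closed shell) this collects to $\tfrac{d}{dt}\big|_{t=0^+}\E^{X_2}_R[\gamma_t]=2\,\tr[h_{\gamma_R}(\gamma-\gamma_R)]$, the overall factor $2$ being irrelevant for the sign; hence $\tr[h_{\gamma_R}\gamma]\ge\tr[h_{\gamma_R}\gamma_R]$ for all $\gamma\in K_{\tilde\lambda}$, which is \eqref{eq:euler_lagrange_occupation}.

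\emph{Step 2 (spectrum of $h_{\gamma_R}$).} Freezing $\gamma_R$, I would verify that $h_{\gamma_R}=-\tfrac12\Lap+W$, with $W=\rho_{\gamma_R}\ast\tfrac1{|x|}+V^{X_2}_R+e_{xc}'(\rho_{\gamma_R})$, is self-adjoint and bounded below with form domain $H^1(\R^3)$: the Coulomb singularities in $V^{X_2}_R$ and $\rho_{\gamma_R}\ast\tfrac1{|x|}$ are infinitesimally form-bounded by Hardy's inequality, and since $\rho_{\gamma_R}\in L^1\cap L^3$ and $\beta_\pm<\tfrac23$, Assumption \ref{assumption1}(3) places $e_{xc}'(\rho_{\gamma_R})$ in $L^p(\R^3)$ for some $p>\tfrac32$, hence also infinitesimally form-bounded. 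Since $W(x)\to 0$ as $|x|\to\infty$ (the Coulomb terms decay, and $e_{xc}'(\rho_{\gamma_R})\to 0$ because $\rho_{\gamma_R}(x)\to 0$ and $e_{xc}'(0)=0$), Weyl's theorem gives $\sigma_{\mathrm{ess}}(h_{\gamma_R})=[0,\infty)$, so below $0$ there are only finite-multiplicity eigenvalues accumulating at most at $0$; moreover, as $\rho_{\gamma_R}$ is strictly sub-neutral ($\tilde\lambda<\lambda$ and $\lambda$ lies in the neutral-or-positive regime of Theorem \ref{thm:main}), $W$ has a negative Coulomb tail $\sim -c/|x|$ with $c>0$ (Newton's theorem), so $h_{\gamma_R}$ has infinitely many negative eigenvalues.

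\emph{Step 3 (Aufbau and conclusion).} With $h_{\gamma_R}$ held fixed, $\min\{\tr[h_{\gamma_R}\gamma]:\gamma\in K_{\tilde\lambda}\}$ is a linear minimisation over $\{0\le\gamma\le1,\ \tr\gamma=\tilde\lambda\}$, so by the operator bathtub/Aufbau principle its minimisers are exactly $\gamma=\mathds{1}_{(-\infty,\epsilon_F)}(h_{\gamma_R})+\delta$ with $0\le\delta\le\mathds{1}_{\{\epsilon_F\}}(h_{\gamma_R})$, where $\epsilon_F$ is the level determined by $\tr\mathds{1}_{(-\infty,\epsilon_F)}(h_{\gamma_R})\le\tilde\lambda\le\tr\mathds{1}_{(-\infty,\epsilon_F]}(h_{\gamma_R})$. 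Such an $\epsilon_F$ exists because the counting function $\mu\mapsto\tr\mathds{1}_{(-\infty,\mu)}(h_{\gamma_R})$ vanishes below $\inf\sigma(h_{\gamma_R})$ and tends to $+\infty$ as $\mu\uparrow 0$ by Step 2, and it is strictly negative — either directly from the negative Coulomb tail of $W$, or because $\epsilon_F$ is the Lagrange multiplier of the mass constraint, hence a one-sided derivative of the strictly decreasing map $\tilde\lambda\mapsto I^{X_2}_{\tilde\lambda,R}$ (Lemma \ref{lem:subadditivity}). Combined with Step 1 this gives \eqref{eq:euler_lagrange} with $0\le\delta\subset\kernel(h_{\gamma_R}-\epsilon_F)$.

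\emph{Main obstacle.} I expect the only genuinely delicate point to be the rigorous differentiation in Step 1 — showing that $t\mapsto\E^{X_2}_R[\gamma_t]$ is right-differentiable at $0$ with the stated derivative, although $e_{xc}'$ is merely continuous and possibly unbounded near $0$ (controlled only by $\rho^{\beta_\pm}$). This should follow from dominated convergence: Assumption \ref{assumption1}(3) together with the $\H$-bounds of Lemma \ref{lem:bounds} and Sobolev embeddings make $e_{xc}'(\rho_{\gamma_R})(\rho_\gamma-\rho_{\gamma_R})$ an $L^1(\R^3)$ function, uniformly along the segment. A minor point, not an obstacle: $h_{\gamma_R}$ depends on $\gamma_R$, so \eqref{eq:euler_lagrange_occupation} is a self-consistency relation; no circularity arises because $\gamma_R$ is frozen before $h_{\gamma_R}$ is formed, and only the linearity of the first-order terms of $J$ and $E_{xc}$ along the admissible segment is used.
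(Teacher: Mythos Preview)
Your proposal is correct and follows essentially the same three-step scheme as the paper: first-order variational inequality on the convex set to obtain \eqref{eq:euler_lagrange_occupation}, spectral analysis of the frozen mean-field Hamiltonian, and then the Aufbau/bathtub principle to extract the representation \eqref{eq:euler_lagrange} together with $\epsilon_F<0$.

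There are two minor differences worth noting. First, for $\sigma_{\mathrm{ess}}(h_{\gamma_R})=[0,\infty)$ the paper argues via $\Delta$-compactness (the effective potential lies in $L^\infty_\epsilon+L^2$) rather than pointwise decay of $W$; your route requires $\rho_{\gamma_R}(x)\to 0$ at infinity, which is not automatic from $\sqrt{\rho_{\gamma_R}}\in H^1(\R^3)$ alone, so the relative-compactness argument is the cleaner justification here. Second, to produce infinitely many negative eigenvalues the paper exploits Assumption~\ref{assumption1}(2) ($e_{xc}'\le 0$) to bound $h_{\gamma_R}\le -\tfrac12\Lap+\rho_{\gamma_R}\ast\tfrac1{|x|}+V^{X_2}_R$ and then invokes Lions' lemma for the reduced operator, whereas you argue directly via Newton's theorem that the full $W$ has a negative Coulomb tail. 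Both arguments rest on the same sub-neutrality condition $2Z>\tilde\lambda$ (which is where the hypothesis $\tilde\lambda<\lambda\le 2Z$ enters), and both yield the same conclusion; the paper's comparison trick is slightly more robust because it sidesteps any pointwise control of $e_{xc}'(\rho_{\gamma_R})$ at infinity. Your explicit discussion of the differentiability of $t\mapsto\E^{X_2}_R[\gamma_t]$ is more careful than the paper's ``direct calculation'', and your dominated-convergence justification is the right one.
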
{}
\begin{proof}
This is a standard result, but let us shortly proof it (for a more detailed version see \cite{Gontier_2014}). If $\gamma_R$ is a minimizer for $\E^{X_2}_R$ with $\tr[\gamma_R]=\lambda$ we have for any $\gamma \in K_\lambda$ the inequality $\E^{X_2}_R[t\gamma + (1-t)\gamma_R]\geq \E^{X_2}_R[\gamma_R]$. In particular
\begin{align}\label{eq:euler-lagrange-gontier}
\diffp{}{t}\E^{X_2}_R[t\gamma + (1-t)\gamma_R] \bigg\lvert_{t=0} \geq 0.
\end{align}
A direct calculation leads to 
\[
\diffp{}{t}\E^{X_2}_R[t\gamma + (1-t)\gamma_R] \bigg\lvert_{t=0}
=
\tr\big[ h_{\gamma_R} (\gamma - \gamma_R) \big]
\]
with $h_{\gamma_R}$ as above.
Due to \eqref{eq:euler-lagrange-gontier} we must have $ \gamma_R \in \argmin\{ \tr[h_{\gamma_R} \gamma] : ~ \gamma \in K_\lambda \}$.

The representation of $\gamma_R$ then follows if we can show $\epsilon_F <0$. 
But as $ \rho_{\gamma_R} \ast \tfrac{1}{|x|} + V^{X_2}_{R} + e_{xc}'(\rho_{\gamma_R}) $ is $\Delta$-compact, since it is in $L^\infty_\epsilon + L^2$, by Weyl theorem \cite{reed4} the essential spectrum is that of the Laplacian $\sigma_{ess} (h_{\gamma_R}) = [0,\infty)$. Furthermore $h_{\gamma_R}$ is bounded from below and due to $e'_{xc}(x) \leq 0$ we have the bound
\begin{align}
    h_{\gamma_R} \leq \big( - \tfrac{1}{2} \Lap + \rho_{\gamma_R} \ast \tfrac{1}{|x|} + V^{X_2}_{R} \big).
\end{align}
For the operator on the right hand side we know by Lemma 19 from \cite{lions_hf} that as long as the nuclear charge $2Z$ is larger than $\tilde{\lambda}$, which is satisfied, since $2z \geq \lambda > \tilde{\lambda}$, it has infinitely many negative eigenvalues of finite multiplicity.
Therefore the same holds true for $h_{\gamma_R}$, which gives us $\epsilon_F <0$.
\end{proof}{}

Note that \eqref{eq:euler_lagrange_occupation} implies that in fact only finitely many orbitals are occupied, i.e.
\[
\tilde{\gamma}_n = \sum_{l=1}^k | \phi^l_n \rangle\langle \phi^l_n | + \sum_{l=k}^m \lambda_l | \phi^l_n \rangle\langle \phi^l_n |,
\]
with $\lambda_l \in [0,1]$. Here the first $n$ orbitals are fully occupied, while the rest might be fractionally occupied.

Furthermore every occupied orbital $\phi^l_n$ is an eigenstate of the corresponding hamiltonian $h_{\tilde{\gamma}_n}$, i.e.~satisfies
\begin{align} \label{eq:el_first_splitting}
    \big( - \tfrac{1}{2} \Lap + \rho_{\tilde{\gamma}_n} \ast \tfrac{1}{|x|} + V^{X_2}_{R_n} + e_{xc}'(\rho_{\tilde{\gamma}_n}) \big) \phi^l_{n}  + \theta^l_n \phi^l_{n} = 0,
\end{align}{}
where $-\theta^1_n < -\theta^2_n \leq \ldots$ denotes the ordered eigenvalues.
Our first step consist in proving that for fixed $l$ the sequence $\big(\theta^l_n\big)_n$ stay bounded away from 0.
\begin{lemma}\label{lem:theta_negativ}
Denote by $\big(\theta^l_n\big)_n$ the sequence of smallest eigenvalues in \eqref{eq:el_first_splitting}, then we have
\begin{equation}
    \liminfn \theta^l_n   >0.
\end{equation}{}
\end{lemma}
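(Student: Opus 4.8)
The goal is to show that the smallest eigenvalues $\theta_n^l$ in the Euler--Lagrange equation \eqref{eq:el_first_splitting} do not collapse to zero along the sequence. I would argue by contradiction: suppose $\liminf_n \theta_n^l = 0$ for some $l$ (and hence, by the ordering $\theta^1_n \le \theta^2_n \le \ldots$ being replaced by $-\theta^1_n < -\theta^2_n \le \ldots$, wait---note the sign convention, so $\theta^1_n$ is the \emph{largest} among the $\theta$'s; in any case it suffices to treat the $l$ for which $\theta^l_n$ is smallest, since all others are at least as large). Passing to a subsequence, $\theta_n^l \to 0$. The strategy is then to extract from $(\phi_n^l)_n$ a weak limit and show that the binding energy provided by the nuclear attraction is lost in the limit, contradicting the fact that $\tilde\gamma_n$ is a genuine minimizer with strictly negative energy per the bounds in Lemma~\ref{lem:bounds} and the strict inequalities in Lemma~\ref{lem:subadditivity}.

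\textbf{Key steps.} First I would record the uniform bounds: since $\tilde\gamma_n$ is a minimizer of $\E^{X_2}_{R_n}$ with mass $\lambda-\alpha \le N \le 2Z$, Lemma~\ref{lem:bounds} gives that $(\tilde\gamma_n)_n$ is bounded in $\H$, so $\|\nabla\sqrt{\rho_{\tilde\gamma_n}}\|_{L^2}$ and $\|\phi_n^l\|_{H^1}$ are bounded uniformly in $n$, and $\rho_{\tilde\gamma_n}$ is bounded in $L^1 \cap L^3$. Multiplying \eqref{eq:el_first_splitting} by $\overline{\phi_n^l}$ and integrating yields
\begin{align*}
\theta_n^l = -\tfrac12\|\nabla\phi_n^l\|_{L^2}^2 - \int \big(\rho_{\tilde\gamma_n}\ast\tfrac{1}{|x|}\big)|\phi_n^l|^2 - \int V^{X_2}_{R_n}|\phi_n^l|^2 - \int e_{xc}'(\rho_{\tilde\gamma_n})|\phi_n^l|^2.
\end{align*}
The Hartree and exchange terms are nonnegative and bounded (the latter by Assumption~\ref{assumption1} part 3 and the uniform $L^p$ bounds on $\rho_{\tilde\gamma_n}$), so $\theta_n^l \to 0$ forces the kinetic energy of $\phi_n^l$ to be controlled entirely by the nuclear attraction term $-\int V^{X_2}_{R_n}|\phi_n^l|^2 = \int (\tfrac{Z}{|x|}+\tfrac{Z}{|x-R_n|})|\phi_n^l|^2$. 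The crucial point is that $\phi_n^l$ must concentrate near one of the two nuclei (otherwise the Coulomb attraction $\int \frac{Z}{|x|}|\phi_n^l|^2$ would itself vanish by a Hardy-inequality argument as used in Lemma~\ref{lem:concentration_yk_bounded}, leaving $\theta_n^l$ bounded away from zero only if $\|\phi_n^l\|_{H^1}\to 0$, which is impossible since $\|\phi_n^l\|_{L^2}=1$). So, translating to that nucleus (say the origin), $\phi_n^l$ stays tight, and one obtains a nonzero weak---in fact strong in $L^2$---limit $\phi^l$ solving a limiting equation
\begin{align*}
\big(-\tfrac12\Lap + \rho^\ast\ast\tfrac{1}{|x|} - \tfrac{Z}{|x|} + e_{xc}'(\rho^\ast)\big)\phi^l = 0,
\end{align*}
where $\rho^\ast$ is the (weak) limit of $\rho_{\tilde\gamma_n}$. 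But the operator on the left, being a Schrödinger operator with an attractive Coulomb singularity $-Z/|x|$ dominating a relatively compact remainder, has $0$ in its \emph{essential} spectrum $[0,\infty)$ only as a bottom endpoint; a nonzero $L^2$ eigenfunction at energy exactly $0$ is ruled out because the negative eigenvalues are isolated of finite multiplicity (Lemma~19 of \cite{lions_hf}, already invoked in the Euler--Lagrange lemma) and $0$ is not an eigenvalue of such an operator --- one can see this directly since $e_{xc}'\le 0$ makes the operator no larger than $-\tfrac12\Lap + \rho^\ast\ast\tfrac1{|x|} - \tfrac{Z}{|x|}$, and any putative zero-energy bound state would have to decay, forcing it to be a true eigenfunction with eigenvalue $0$, contradicting the strict negativity of all bound-state energies for Coulomb-type potentials. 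This yields $\phi^l = 0$, contradicting $\|\phi^l\|_{L^2} = 1$.

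\textbf{Main obstacle.} The delicate part is establishing that $\phi_n^l$ genuinely concentrates and does not spread out or split, and then that the limiting operator has no zero-energy bound state; the latter requires care because $\rho^\ast$ might have mass strictly less than $\lambda - \alpha$, so the limiting equation is ``subcritical'' but still has an attractive Coulomb tail. I expect the cleanest route is to combine the IMS-type localization already available from the dichotomy machinery with a Hardy-inequality lower bound on $\int \frac{Z}{|x|}|\phi_n^l|^2$ in terms of the kinetic energy, exactly in the spirit of Lemma~\ref{lem:concentration_yk_bounded}: if the mass of $\phi_n^l$ escaping every fixed ball were bounded below, the Coulomb attraction it contributes would be $o(1)$, and then $\theta_n^l \ge \tfrac12\|\nabla\phi_n^l\|^2_{L^2,\,\text{escaping part}} - o(1)$ would already be bounded below by a positive constant unless essentially all mass stays in a fixed ball; in the latter case the strong-limit argument above applies. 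The remaining subtlety --- ensuring the weak limit is nonzero and that we may pass to the limit in the nonlinear terms $\rho_{\tilde\gamma_n}\ast\frac1{|x|}$ and $e_{xc}'(\rho_{\tilde\gamma_n})$ --- is handled by the uniform $H^1$ and $L^1\cap L^3$ bounds together with local compactness, which is routine.
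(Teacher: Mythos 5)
Your proposal takes a genuinely different route from the paper's, and unfortunately it has gaps that are not easy to close. The paper's proof is a short, direct variational argument: since $e_{xc}' \le 0$, one bounds $h_{\tilde\gamma_n}$ from above by $\tilde h_n = -\tfrac12\Lap + \rho_{\tilde\gamma_n}*\tfrac1{|x|} + V^{X_2}_{R_n}$, then tests with a fixed radial $\psi\in C_c^\infty$ rescaled as $\psi_\sigma=\sigma^{3/2}\psi(\sigma\cdot)$. Newton's theorem gives $\int(\rho_{\sigma,\tilde\gamma_n}*\tfrac1{|x|})|\psi|^2 \le (\lambda-\alpha)\int\tfrac{|\psi|^2}{|y|}$, so by Rayleigh--Ritz
\[
-\theta_n^1 \le \sigma\,(\lambda-\alpha-2Z)\int\tfrac{|\psi|^2}{|y|} + \tfrac{\sigma^2}{2}\int|\nabla\psi|^2,
\]
and since $\lambda-\alpha < \lambda \le 2Z$ the linear term is strictly negative, giving a uniform (in $n$) strictly negative upper bound on $-\theta_n^1$ once $\sigma$ is small. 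The higher $l$ are handled by min-max with a family of orthogonal scaled test functions. No compactness, no limiting PDE.

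Your contradiction/compactness argument has two substantive gaps. First, the claim that the limiting operator $-\tfrac12\Lap + \rho^*\!*\tfrac1{|x|} - \tfrac{Z}{|x|} + e_{xc}'(\rho^*)$ admits no zero-energy $L^2$ eigenfunction is not proved; the justification offered (``any putative zero-energy bound state would have to decay, forcing it to be a true eigenfunction with eigenvalue $0$, contradicting the strict negativity of all bound-state energies'') simply restates the claim and is circular. Whether $0$ is an eigenvalue here is a genuinely nontrivial question: the effective tail is $(\|\rho^*\|_{L^1}-Z)/|x|$, and you have no control on $\|\rho^*\|_{L^1}$ --- it may well exceed $Z$, in which case the tail is repulsive and zero-energy bound states cannot be excluded by soft arguments. (By contrast, in the paper's scaling argument the relevant net charge is unambiguously $\lambda-\alpha-2Z<0$ because it sees \emph{both} nuclei.) Second, the concentration step --- that a nonvanishing strong $L^2$ limit of $\phi_n^l$ exists after translating to one nucleus --- is asserted, not proved, and you acknowledge this yourself. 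The heuristic you give does not exclude splitting of $\phi_n^l$ into several escaping bubbles, nor does it control the Hartree repulsion in the Rayleigh quotient identity (note also that in that identity the Hartree contribution $-\int(\rho*\tfrac1{|x|})|\phi_n^l|^2$ is \emph{nonpositive}, not nonnegative as you write, while the exchange contribution $-\int e_{xc}'|\phi_n^l|^2\ge 0$; only the latter helps you). Making both steps rigorous would amount to redoing a concentration-compactness analysis inside the lemma, which is precisely what the direct scaling bound avoids.

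The upshot: your approach could conceivably be salvaged, but it is considerably longer, requires potential-theoretic facts you have not established, and contains a circular step at its heart. The intended argument sidesteps all of this by exploiting that the total nuclear charge $2Z$ strictly dominates the remaining electron mass $\lambda-\alpha$, which feeds directly into a one-line Rayleigh--Ritz bound after scaling.
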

\begin{proof}
To see this note 
\begin{align*}
    h_{\rho_{\tilde{\gamma}_n}} \leq - \tfrac{1}{2} \Lap + \rho_{\tilde{\gamma}_n} \ast \tfrac{1}{|x|} + V^{X_2}_{R_n} = \tilde{h}_n,
\end{align*}{}
so it is enough to consider the latter operator $ \tilde{h}_n$.
As in \cite{lions_hf} consider a radially symmetric function $\psi \in C^{\infty}_{c}$ with $\norm{\psi}_{L^2} =1$ and
set $\psi_\sigma = \sigma^{\nicefrac{3}{2}} \psi(\sigma \cdot)$.
Then we get 
\begin{align*}
    \langle \psi_\sigma,  \tilde{h}_n \psi_\sigma \rangle 
    =
     \sigma^2 \frac{1}{2} \intr |\nabla \psi|^2 \di x
    +
    \sigma  \intr V_\sigma(x) |\psi|^2 \di x
    + 
   \sigma  \intr \bigg( \rho_{\sigma, \tilde{\gamma}_n} \ast \tfrac{1}{|x|}\bigg) |\psi|^2 \di x,
\end{align*}{}
where $V_\sigma (x) = -\tfrac{Z}{|x|} - \tfrac{Z}{\big|x-{R_n}{\sigma}\big|}$ and $\rho_{\sigma, \tilde{\gamma}_n} = \sigma^{-3} \rho_{\tilde{\gamma}_n}(\frac{1}{\sigma}\cdot)$.
Note that the $\sigma^2$ in front of the kinetic energy comes from the chain rule while the prefactors of the two remaining terms come from substitution.
Due to radial symmetry of $\psi$ we have
\begin{align*}
     \intr \bigg( \rho_{\sigma, \tilde{\gamma}_n} \ast \tfrac{1}{|x|}\bigg) |\psi|^2 \di x
     &=
     \intr \bigg(|\psi|^2  \ast \tfrac{1}{|x|}\bigg)\rho_{\sigma, \tilde{\gamma}_n}(x)  \di x
     \\
     &=
     \intr \intr \frac{|\psi|^2(y)}{\max\{|x|,|y|\}} \di y \, \rho_{\sigma, \tilde{\gamma}_n}(x) \di x\\
     &\leq 
    \underbrace{ \norm{\rho_{\sigma, \tilde{\gamma}_n}}_{L^1}}_{\lambda - \alpha }  \intr \frac{|\psi|^2(y)}{|y|} \di y.
\end{align*}{}
By Rayleigh-Ritz we thus have for every fixed $n$ 
\begin{align*}
    - \theta^1_n = \inf \langle \psi_\sigma , h_{\rho_{\tilde{\gamma}_n}}\psi_\sigma \rangle
    \leq
    \inf \langle \psi_\sigma , \tilde{h_{n}}\psi_\sigma  \rangle    
    \leq \sigma \underbrace{\big( \lambda - \alpha  -2Z \big)}_{<0} \intr \frac{|\psi|^2(y)}{|y|} \di y + \sigma^2  \frac{1}{2} \intr |\nabla \psi|^2 \di x.
\end{align*}{}
Taking  now $\sigma \to 0$ the linear term will eventually dominate and since the right hand side is independent of $n$, we obtain 
\[
\theta_n^1 > 0  \quad \forall ~n\in \N.
\]
For $l>1$ simply take a family of orthogonal functions $\big( \psi_j \big)_{j=1}^k$ with the same properties as $\psi$ above, the min-max principle \cite{reed4} then gives the result.
\end{proof}{}

Since also for $\tilde{\gamma}_n$ the dichotomy case occurs we get $\tilde{\gamma}^{(1)}_n$ and $\tilde{\gamma}^{(2)}_n$ with the same properties as listed in \eqref{eq:dichotomy_begin} - \eqref{eq:dichotomy_kinetic}.  
Define now 
\[\omega^l_n : = (1- \xi_{K^{(1)}_n} - \zeta_{K^{(2)}_n}) \phi^l_n = \epsilon_n \phi^l_n 
\quad 
\text{and} \quad \phi^l_{1,n} = \xi_{K^{(1)}_n} \phi^l_n, ~~\phi^l_{2,n} =  \zeta_{K^{(2)}_n} \phi^l_n,\]
where $\phi_n^l$ are the orbitals corresponding to $\tilde{\gamma}_n$ and $\xi, \zeta$ are the smooth partition of unity given by the
dichotomy case with $K^{(j)}_n$ as in \eqref{eq:dichotomy_increasing}.

Note that $0 \leq \epsilon_n \leq 1$ and $\norm{\nabla \epsilon_n}_\infty \to 0$.
Furthermore we have 
\[
\rho_{\tilde{\gamma}_{n}^{(i)}} =  \sum_{l} \lambda_l^{(n)} |\phi^l_{i,n}|^2,
\]
where $0 \leq \lambda^{(n)}_l \leq 1$ is the occupation number of the $l^{th}$ orbital.
By multiplying \eqref{eq:el_first_splitting} with $\omega^l_n$, we obtain 
\begin{align*}
    \intr \nabla \omega^l_n \cdot \nabla \phi^l_n \di x \gegen{n}{\infty} 0.
\end{align*}{}
Since $\nabla \omega^l_n = \epsilon_n \nabla \phi^l_n + \phi^l_n \nabla \epsilon_n$ and $\epsilon_n^2 \leq \epsilon_n$ we also get
\begin{align*}
    \intr \epsilon_n^2|\nabla \phi^l_n|^2 \di x \gegen{n}{\infty} 0,
\end{align*}{}
which finally implies $\nabla \omega^l_n \to 0$ in $L^2(\R^3).$
Combining this with the fact that the supports of $\phi^l_{1,n}$ and $\phi^l_{2,n}$ go infinitely far apart for $n \to \infty $ \eqref{eq:el_first_splitting} becomes
\begin{align} \label{eq:splitting1}
    \big( - \tfrac{1}{2} \Lap + \rho_{\tilde{\gamma}^{(1)}_n } \ast \tfrac{1}{|x|} + V^{X_2}_{R_n} + e_{xc}'( \rho_{\tilde{\gamma}^{(1)}_n }) \big) \phi^l_{1,n}+ \theta^l_n  \phi^l_{1,n} \xrightarrow[H^{-1}]{n \to \infty}0\\
    \label{eq:splitting2}
    \big( - \tfrac{1}{2} \Lap + \rho_{\tilde{\gamma}^{(2)}_n} \ast \tfrac{1}{|x|} + V^{X_2}_{R_n} + e_{xc}'( \rho_{\tilde{\gamma}^{(2)}_n}) \big) \phi^l_{2,n} + \theta^l_n  \phi^l_{2,n} \xrightarrow[H^{-1}]{n \to \infty}0
\end{align}{}
Note here that the eigenvalues $\theta^l_n$ are the ones from $h_{{\tilde{\gamma}_n}}$ and that the support of one of the two sequences drifts infinitely far way of both nuclei.
W.l.o.g.~let it be $\tilde{\gamma}_{n}^{(1)}$, then 
\[
\dist\big(\{0,R_n\}, \rho_{\tilde{\gamma}_{n}^{(1)}} \big) \gegen{n}{\infty} \infty
\]
and since  $\tilde{\gamma}_{n}^{(1)}$ is almost a minimizing sequence to $I^{\infty}_\alpha$ in the sense of \eqref{eq:almost_minimizing}, it cannot vanish.
Therefore there exists $\kappa, M >0$ and a sequence $(y_n)_n$ of points in $\R^3$ such that 
\begin{align} \label{eq:nonvanishing}
\int_{B_M(y_n)} \rho_{\tilde{\gamma}_{n}^{(1)}}(x) \di x \geq \kappa >0 .
\end{align}{}
Furthermore we necessarily have 
\[
\dist\big(\{0,R_n\}, (y_n)_n\big) \gegen{n}{\infty} \infty
\]
and thus \eqref{eq:splitting1} becomes for the translated density matrix $\Bar{\gamma}_{n}^{(1)} : = \tau_{y_n} \tilde{\gamma}_{n}^{(1)} \tau_{-y_n}$ with orbitals $\Bar{\phi}^l_{1,n}$
\begin{align*}
    \big( - \tfrac{1}{2} \Lap + \rho_{\Bar{\gamma}_{n}^{(1)} } \ast \tfrac{1}{|x|}+ e_{xc}'( \rho_{\Bar{\gamma_{n}}^{(1)}}) \big) \Bar{\phi}^l_{1,n}+ \theta^l_n  \Bar{\phi}^l_{1,n} \xrightarrow[H^{-1}]{n \to \infty}0.
\end{align*}{}
Finally note that $\sqrt{\rho_{\Bar{\gamma}_{n}^{(1)}}} \rightharpoonup \sqrt{\rho} \neq 0$ in $H^1(\R^3)$ due to \eqref{eq:nonvanishing}.

Now if our procedure never stops we can as in \cite{cances} construct an infinity of sequences of orbitals $\big( \phi^l_{k,n}\big)_{k,n \in \N}$ with $\norm{ \phi^l_{k,n}}_{L^2}=1$ such that for every $k,n \in \N$

\begin{flalign}[left = \empheqlbrace\,] \label{eq:ltwonorminidep}
  &  \psi_{l,k,n} := \big(\sqrt{\lambda_l^{(n)}} \phi^l_{k,n} \big), \sqrt{\rho_{\gamma^{(k)}_n}} \text{ bdd. in } H^{1}(\R^3),  \intr \rho_{\gamma^{(k)}_n} = \alpha_k, ~ \rho_{\gamma^{(k)}_n} = \sum_{l} |\psi_{l,k,n}|^2&  \\
  & \big( - \tfrac{1}{2} \Lap +  \rho_{\gamma^{(k)}_n} \ast \tfrac{1}{|x|}  + e_{xc}'( \rho_{\gamma^{(k)}_n}) \big) \phi^l_{k,n}  + \theta^l_n \phi^l_{k,n} = \eta_{n} \xrightarrow[n \to \infty]{H^{-1}} 0 &  \label{eq:ekeland1}\\
  & \psi_{l,k,n}  \text{ converges to } \psi_{l,k} \text{ weakly in } H^1, \text{ strongly in } L^{p}_{loc} \text{ for } 2 \leq p <6 \text{ and a.e.~on } \R^3,  \\
  &\sqrt{\rho_{\gamma^{(k)}_n}}  \text{ converges to } \sqrt{\rho_{k}} \neq 0 \text{ weakly in } H^1, \text{ str. in } L^{p}_{loc} \text{ for } 2 \leq p <6 \text{ and a.e.~on } \R^3,  &
\end{flalign}{}
where 
\begin{align} \label{eq:limit_weak_mass}
    \sum_{k \in \N} \alpha_k \leq \lambda- \alpha.
\end{align}{}
Note furthermore that
\begin{align*}
    \sum_{l} \norm{ \psi_{l,k,n}}_{H^1}^2
\end{align*}{}
stays bounded independent of $k$ or $n$. For the $L^2$-norm this is clear from \eqref{eq:ltwonorminidep}, since $\alpha_k$ stays bounded \eqref{eq:limit_weak_mass}.
For the $H^1$-norm, this can be seen by applying \eqref{eq:ekeland1} to $\lambda^{(n)}_l \phi^l_{k,n}$ and summing over $l$.
Then the term
\begin{align*}
   \bigg ( \frac{1}{2}\sum_l   \norm{\nabla \psi_{l,k,n}}^2_{L^2}  + J[\rho_{\gamma_n^{(k)}} ]  + \intr e_{xc}'(\rho_{\gamma_n^{(k)}})\rho_{\gamma_n^{(k)}} \di x + \sum_{l} \theta^l_n \norm{\psi_{l,k,n}}_{L^2}^2  \bigg).
\end{align*}
stays bounded independently of $k$ and $n$. But by \eqref{eq:ltwonorminidep} $\sqrt{\rho_{\gamma_n^{(k)}}}$ is bounded in $H^1$, thus in $L^p$ for $2\leq p \leq 6$. So each of the last three terms stays bounded, because $J$ is bounded by Hardy-Littlewood-Sobolev
\[J[\rho] \leq C \norm{\rho}_{L^{\frac{6}{5}}}^2 < \infty, \]
the exchange-correlation term by 
\[
\intr e_{xc}'(\rho)\rho \di x \leq 
C \intr \sqrt{\rho}^{2+2\beta_-} + \sqrt{\rho}^{2+2\beta_+}
\]
with $2+2\beta_{\pm} \in [2, 6]$ and the last term by the boundedness of $\theta^l_n$ and the already discussed $L^2$-norm bound.
Thus taking the limit $n \to \infty$ we get  
\begin{align}\label{eq:final_pde}
     \big( - \tfrac{1}{2} \Lap + \rho_{k} \ast \tfrac{1}{|x|}  + e_{xc}'(\rho_{k}) \big) \psi_{l,k}  + \theta^l \psi_{l,k} = 0,
\end{align}{}
where $\theta^l = \liminf \limits_{n \to \infty} \theta^l_n >0$.
Furthermore we have
\begin{equation} \label{eq:finite_sum_orbitals}
\rho_{k} = \sum_{l} |\psi_{l,k}|^2  . 
\end{equation}
Since the mass of the $\rho_{\gamma^{(k)}_n}$ does not depend on $n$ we obtain from \eqref{eq:limit_weak_mass}
\begin{align}\label{eq:rhok_to_zero}
    \limit{k}{\infty} \norm{\rho_{k}}_{L^1} = 0.
\end{align}{}
By multiplying \eqref{eq:final_pde} with $\psi_{l,k}$, integrating and summing over $l$ we obtain
\begin{align*}
    0 &\geq 
    - \sum_l \theta^l \norm{\psi_{l,k}}^2_{L^2} -  \intr \bigg(\rho_k \ast \frac{1}{|x|}) \bigg) |\psi_{l,k}|^2 \di x  \\
    &=
    \frac{1}{2} \sum_{l} \norm{\nabla \psi_{l,k}}^2_{L^2} 
    + \sum_{l} \intr |\psi_{l,k} |^2 e_{xc}'(\rho_k) \di x.
\end{align*}
Now we can again use Assumption\ref{assumption1} stating $e_{xc}'(\rho) \leq C (\rho^{\beta_-} + \rho^{\beta_+})$ on the second term and then apply Hölders inequality with $\beta_+$ and $\beta_-$, respectively, to obtain the following bound 
\begin{align*}
0
     &\geq 
      \frac{1}{2} \sum_{l} \norm{\nabla \psi_{l,k}}^2_{L^2} 
      - C \sum_{l} \big( \norm{\psi_{l,k}^2}_{L^\frac{1}{1- \beta_{-}}} + \norm{\psi_{l,k}^2}_{L^\frac{1}{1-\beta_{+}}} \big) \norm{\rho_k}_{1}\\
      &\geq 
      \frac{1}{2} \sum_{l} \norm{\nabla \psi_{l,k}}_{L^2}^2 
      - C  \norm{\rho_k}_{L^1} \underbrace{ \sum_{l} \norm{ \psi_{l,k,n}}_{H^1}^2}_{<\infty},
\end{align*}{}
where we applied the Sobolev embedding in the last inequality on the two terms $\norm{\psi^2}_{L^\frac{1}{1-\beta}} = \norm{\psi}_{L^\frac{2}{1-\beta}}^2$. Indeed, by Assumption\ref{assumption1} we have $0 <\beta< \frac{2}{3}$, implying $\frac{2}{1-\beta} \in (2,6)$.
Thus by taking the limit $ k \to \infty$ and using \eqref{eq:rhok_to_zero} we obtain
\begin{align*}
    \sum_{l} \norm{\nabla \psi_{l,k}}_{L^2}^2 \gegen{k}{\infty} 0.
\end{align*}{}
Applying standard elliptic regularity results (see e.g.~\cite{gilbarg_trudinger}) to \eqref{eq:final_pde} now give us the inequality
\begin{align*}
    \norm{\psi_{l,k}}_{L^\infty}
    \leq 
    C
    \norm{\psi_{l,k}}_{H^1},
\end{align*}{}
where the constant $C>0$ does not depend on $k$ and thus
\begin{align*}
    \limit{k}{\infty}\sum_{l} \norm{ \psi_{l,k}}^2_{L^\infty} = 0.
\end{align*}{}
Thus by \eqref{eq:finite_sum_orbitals} we also obtain
\begin{align*}
    \limit{k}{\infty}\norm{ \rho_{k}}_{L^\infty} = 0.
\end{align*}{}
Again from \eqref{eq:final_pde} and from Assumptions \ref{assumption1} we deduce
\begin{align}\label{eq:final_eq_zero}
    \theta^l \norm{\psi_{l,k}}_{L^2}^{2} \leq C \big( \norm{\rho_k}_{L^\infty}^{2 \beta_{-}} + \norm{\rho_k}_{L^\infty}^{2 \beta_{+}} \big) \norm{\psi_{l,k}}_{L^2}^2.
\end{align}{}
Now note that due to \eqref{eq:euler_lagrange_occupation} at most $N$ different energy levels are occupied. Thus
\begin{align*}
    \norm{\psi_{l,k,n}}_{L^2}^2 = \lambda_l^n = 0,
\end{align*}{}
for all $l$ corresponding to the $(N+1)^{\text{th}}$ or higher eigenvalues without counting multiplicity. Note that due to degeneracies this might not be the same as $l>N$.
Therefore we directly get for those $l$
\[
\norm{\psi_{l,k}}_{L^2} = 0.
\]
Thus the mass of $\rho_k$ is distributed among only finitely many energy levels $l$.
Therefore for at least one fixed level $l$ we can find up to a subsequence in $k$ $\psi_{l,k}$ such that 
\[
\norm{\psi_{l,k}}_{L^2} \neq 0, ~~ \forall k,
\]
because otherwise we would have $\norm{\rho_k}_{L^1} =0$.
Hence \eqref{eq:final_eq_zero} becomes
\begin{align*}
    \theta^l \leq  C \big( \norm{\rho_k}_{L^\infty}^{2 \beta_{-}} + \norm{\rho_k}_{L^\infty}^{2 \beta_{+}} \big) 
    \gegen{k}{\infty}0,
\end{align*}{}
which is a contradiction to Lemma \ref{lem:theta_negativ}.

Thus case (ii) can not happen and the proof is hence complete.

\subsection{Proof of Proposition \ref{prop:linear_case} \label{sec:proof_linear_case}}
First we will show that we can get ride of the antisymmetry condition, i.e.
\begin{align*}
    I_R^{H_2} =  \inf \limits_{\substack{\psi \in H^1(\R_\Sigma^2),\\ \norm{\psi}_{L^2}=1,\\ \psi \text{ antisymm.}}} \langle \psi , H(x,y) \psi \rangle
     =
         \inf \limits_{\substack{\psi \in H^1(\R^2),\\ \norm{\psi}_{L^2}=1}} \langle \psi , H(x,y) \psi \rangle = : \tilde{I}_R^{H_2}
\end{align*}

Take any $\psi \in H^1(\R_\Sigma^2), \norm{\psi}_{L^2}=1$, then
\begin{align*}
    \langle \psi, H(x,y) \psi \rangle 
    &=
    \sum_{s,t\in \Sigma} \langle \psi(\cdot,s,\cdot,t), H(x,y) \psi(\cdot,s,\cdot,t)\rangle \\
    & \geq 
    \sum_{s,t\in \Sigma}  \tilde{I}_R^{H_2}  \norm{\psi(\cdot,s,\cdot,t)}_{L^2(\R^2)}^2 \\
    &=  \tilde{I}_R^{H_2}  \norm{\psi}_{L^2((\R_\Sigma )^2)}^2 =  \tilde{I}_R^{H_2}.
\end{align*}
So we have $ I_R^{H_2} \geq \tilde{I}_R^{H_2}$. For the other direction
define for any given $\psi \in H^1(\R^2)$ with $\norm{\psi}_{L^2}^2$
\begin{align*}
    \tilde{\psi}(x,s,y,t) := \frac{1}{\sqrt{2}}
    \bigg( \psi(x,y) \delta_{|\uparrow\rangle}(s) \delta_{|\downarrow\rangle}(t) - \psi(y,x) \delta_{|\uparrow\rangle}(t) \delta_{|\downarrow\rangle}(s)\bigg),
\end{align*}
where $\delta_{|\uparrow\rangle}(s) $ denotes the Kronecker delta for the spin component.
By definition this $\tilde{\psi}$ satisfies the antisymmetry condition and is normalized.
Furthermore we have
\[
\langle \tilde{\psi}, H(x,y) \tilde{\psi } \rangle_{L^2(\R^2_\Sigma)}
=
\langle \psi, H(x,y) \psi \rangle_{L^2(\R^2)},
\]
taking the infimum over normalized $\psi$ gives 
\[
I_R^{H_2} \leq  \inf_{\tilde{\psi}} \langle \tilde{\psi}, H(x,y) \tilde{\psi } \rangle 
=
\inf_\psi \langle \psi, H(x,y) \psi \rangle = \tilde{I}_{R}^{H_2}.
\]
So from here on we will consider the system without the antisymmetry condition.

\subsubsection{Upper bound}
For the upper bound we can take a $\phi \in C^{\infty}_{c}(\R)$ with $\norm{\phi}_{L^2} =1 $ and consider as a testfunction for the $H_2$ Hamiltonian just the tensor product $\psi = \phi \otimes \phi(\cdot - R)$, i.e.~$\psi(x,y) = \phi(x)\phi(y-R)$.
Then we directly get
\begin{align*}
\limit{R}{\infty}
E^{H^2}_R 
&\leq 
\limit{R}{\infty}
\langle \psi, H_R(x,y) \psi \rangle \\
&= 
\limit{R}{\infty}
2 \langle \phi, h(x) \phi \rangle 
+ |\phi|^2(R) + |\phi|^2(-R) + 2\int  \phi(\pm y) \phi(y-R) \di y \\
&=
2 \langle \phi, h(x) \phi \rangle,
\end{align*}{}
where we used that the last three terms vanish as soon as $R>\mathrm{diam}(\supp\phi)$.
Taking now the infimum w.r.t.~$\phi$ and noting that the Hamiltonian $h$ is continuous on $H^1(\R)$ we get the result.

\subsubsection{Lower bound}
Since the electron-electron interaction is positive we directly get
\begin{align} \label{eq:tensor_structure}
    H(x,y) \geq \tilde{h}(x) + \tilde{h}(y),
\end{align}
where $\tilde{h}(x) = -\frac{1}{2} \diff[2]{}{x} - \delta_0(x) - \delta_R(x)$. 
To determine the infimum over the right hand side, we can just consider tensor-products of functions due to the additive structure.
Hence we only need to consider 
\begin{align*}
    \langle \phi, \tilde{h}(x) \phi \rangle, \quad \phi \in L^2(\R).
\end{align*}
Therefore consider any arbitrary $\phi \in H^1(\R)$, and two cut-off functions $\xi_1$ and $\xi_2$ with 
\begin{align*}
    \xi_1^2 + \xi_2^2 =1, \quad \xi_1(x) = 1 \text{ for } x \leq \tfrac{1}{3}, \quad  \xi_1(x) = 0  \text{ for } x \geq \tfrac{2}{3}.
\end{align*}{}
Defining then 
\begin{align*}
    \phi_i = \xi_i\big(\tfrac{\cdot}{R} \big) \phi,
\end{align*}{}
gives with a straightforward calculation
\begin{align}
    \langle \phi, \tilde{h}(x) \phi \rangle
    &\geq 
    \langle \phi, h_0(x) \phi \rangle
    +\langle \phi, h_R(x) \phi \rangle
    + o(1). \label{eq:proof_propone} 
\end{align}
Here $h_0(x) , h_R(x)$  denote the hamiltonian $h(x)$ with the nucleus sitting at the origin $x=0$ and $x=R$, respectively. This now directly gives
\begin{align}
    \eqref{eq:proof_propone}&\geq 
    \epsilon \bigg(\norm{\phi_1}_{L^2}^2 + \norm{\phi_2}_{L^2}^2 \bigg) + o(1)
    =  \epsilon \norm{\phi}_{L^2}^2 + o(1),
\end{align}{}
where we used that the lowest eigenvalue $\epsilon = I^{H}$ of $h(x)$ does not depend on the position of the single nucleus in the system.

Combining this lower bound with \eqref{eq:tensor_structure} directly gives the desired assertion.

\clearpage

\printunsrtglossary[type=symbols,style=long]

\appendix

\section{Calculating the exact one-dimensional DFT energy}
For the readers convenience we sketch the calculations for the exact ground state energy of the one-dimensional DFT model (see Section \ref{sec:one_dim}) with the ground-state first reported in \cite{witthaut_1d_exactsolution}.
We consider the corresponding energy functional from \eqref{eq:one-dim_dft}
\begin{align*}
    \E^H[\rho] =     \frac{1}{2} \int \big( \sqrt{\rho}'\big)^2 \di x
    - \rho(0)
    + \big(\tfrac{1}{2} - c_{xc}\big) \int \rho^2 \di x
\end{align*}
with ground-state given by \eqref{eq:1d_wavefunction}
\begin{align} 
   \rho = \alpha |\psi|^2, \quad \text{with} \quad \psi(x) = a \cdot \mathrm{sech}\big( b |x| + x_0\big),
\end{align}{}
where the parameters $a,b,x_0$ only depend on $\alpha$ and $c_{xc}$ and are given by
\[
x_0 =  \mathrm{arctanh}\big(  \frac{1}{b}\big),\quad
a = \sqrt{\frac{b^2}{2(b-1)}}, \quad
b = 1 - \alpha \frac{1-2c_{xc}}{2}.
\]
Note that we are interested in the case $c_{xc} > \frac{1}{2}$, thus $b>0.$
We start off by recalling some basic properties of the $\mathrm{sech}$-Funktion
\[
\frac{\di}{\di x} \mathrm{sech}(x) = - \mathrm{sech}(x) \cdot \mathrm{tanh}(x)
\]
and 
\[
\int \mathrm{sech}(x)^4 \di x = \frac{2}{3} \tanh(x)+ \frac{1}{3} \tanh(x) \sech(x)^2,
\qquad
\int \mathrm{sech}(x)^2  \tanh(x)^2 \di x = \frac{1}{3} \tanh(x)^3.
\]
This implies for the last term in the energy functional
\begin{align*}
     \int \rho^2 \di x 
     &=
     \alpha^2
     a^4 \int_{-\infty}^\infty \sech(b|x| +x_0)^4 \di x\\
     &= 
     \alpha^22a^4\frac{1}{b} \int_{x_0}^{ \infty }\sech(y)^4 \di y
     \\
     &=
     \alpha^2\frac{2a^4}{b} \bigg( \frac{2}{3}   - \frac{2}{3} \tanh ( x_0)  - \frac{1}{3} \tanh(x_0) \sech(x_0)^2  \bigg)\\
     &=
     \alpha^2 \frac{2a^4}{b} \bigg( \frac{2}{3}   - \frac{2}{3b}   - \frac{1}{3b}  (1-\frac{1}{b^2})  \bigg) \\
      &= \frac{\alpha^2}{6} (2b+1).
\end{align*}

For the kinetic energy we obtain
\begin{align*}
    \int \big( \sqrt{\rho}'\big)^2 \di x
    &=
    \alpha a^2b^2 \int_{-\infty}^\infty \sech(b|x|+x_0)^2 \tanh(b|x|+x_0)^2 \di x\\
    &=
    2 \alpha a^2b \int_{x_0}^{\infty } \sech(y)^2\tanh(y)^2 \di y \\
    &= 
    2 \alpha a^2b \frac{1}{3} \bigg( 1 - \tanh( x_0)^3 \bigg)
    =
    2 \alpha a^2b \frac{1}{3} \bigg( 1 - \frac{1}{b^3} \bigg)\\
    &= \frac{\alpha }{3} \big( b^2+b+1\big).
\end{align*}

For the term $\rho(0)$ we have
\[
\rho(0) = \alpha a^2 \sech(x_0)^2 = \alpha  a^2 \big(1 - \frac{1}{b^2} \big) = \alpha  \frac{b+1}{2}.
\]
Therefore we obtain for the total energy
\[
I^H_\alpha =  \E^H[\rho_\alpha] = 
 \frac{\alpha}{6} \big( b^2+b+1\big) - \alpha  \frac{b+1}{2} - (\tfrac{1}{2} - c_{xc}) \frac{\alpha^2}{6} (2b+1)
\]
Plugging now $b = 1- \alpha (\tfrac{1}{2} - c_{xc})$ into this equation gives now the desired result
\[
I^H_\alpha + I^H_{2- \alpha} = \frac{1}{12} (\alpha^2 (3 - 12 c_{xc}^2) + 6 \alpha ( 4 c_{xc}^2-1) -
   4 (1 + 2 c_{xc} + 4 c_{xc}^2)).
\]



\section*{Acknowledgements}

The authors thank Gero Friesecke for helpful discussions and the anonymous reviewers for their helpful comments which helped improve and clarify this manuscript.\\
Both B.R.G. and S.B. gratefully acknowledge support from the International Research Training Group
IGDK Munich - Graz funded by the Deutsche Forschungsgemeinschaft (DFG, German Research Foundation) - Projektummer 188264188/GRK1754.

\addcontentsline{toc}{section}{References}

\bibliographystyle{plain}
\bibliography{density.bib}

\begin{thebibliography}{10}

\bibitem{cances}
A.~Anantharaman and E.~Canc{\`e}s.
\newblock {Existence of minimizers for Kohn--Sham models in quantum chemistry}.
\newblock In {\em Annales de l'Institut Henri Poincare (C) Non Linear
  Analysis}, volume~26, pages 2425--2455. Elsevier, 2009.

\bibitem{octopus}
X.~Andrade, D.~Strubbe, U.~De~Giovannini, A.~H. Larsen, M.~J.~T. Oliveira,
  J.~Alberdi-Rodriguez, A.~Varas, I.~Theophilou, N.~Helbig, M.~J. Verstraete,
  L.~Stella, F.~Nogueira, A.~Aspuru-Guzik, A.~Castro, M.~A.~L. Marques, and
  A.~Rubio.
\newblock Real-space grids and the octopus code as tools for the development of
  new simulation approaches for electronic systems.
\newblock {\em Phys. Chem. Chem. Phys.}, 17:31371--31396, 2015.

\bibitem{becke88}
A.D. Becke.
\newblock Density-functional exchange-energy approximation with correct
  asymptotic behavior.
\newblock {\em Phys. Rev. A}, 38:3098--3100, Sep 1988.

\bibitem{becke_roussel}
A.D. Becke and M.R. Roussel.
\newblock Exchange holes in inhomogeneous systems: A coordinate-space model.
\newblock {\em Phys. Rev. A}, 39:3761--3767, Apr 1989.

\bibitem{bornoppenheimer}
M.~Born and R.~Oppenheimer.
\newblock {Zur Quantentheorie der Molekeln}.
\newblock {\em Annalen der Physik}, 389(20):457--484, 1927.

\bibitem{chen_friesecke}
H.~Chen and G.~Friesecke.
\newblock Pair densities in density functional theory.
\newblock {\em Multiscale Modeling \& Simulation}, 13(4):1259--1289, 2015.

\bibitem{soperator}
H.L. Cycon, R.G. Froese, W.~Kirsch, and B.~Simon.
\newblock {\em {Schr{\"o}dinger Operators: With Applications to Quantum
  Mechanics and Global Geometry}}.
\newblock Springer Study Edition. Springer, 1987.

\bibitem{reduced}
E.~Davidson.
\newblock {\em Reduced density matrices in quantum chemistry}.
\newblock Academic Press, New York, 1976.

\bibitem{dirac1930note}
P.A.M. Dirac.
\newblock {Note on exchange phenomena in the Thomas atom}.
\newblock In {\em Mathematical proceedings of the Cambridge philosophical
  society}, volume~26, pages 376--385. Cambridge University Press, 1930.

\bibitem{engel2013density}
E.~Engel and R.~M. Dreizler.
\newblock {\em Density functional theory}.
\newblock Springer, 2013.

\bibitem{partition}
R.L. Frank, E.H. Lieb, R.~Seiringer, and H.~Siedentop.
\newblock {M\"uller's exchange-correlation energy in density-matrix-functional
  theory}.
\newblock {\em {Phys. Rev. A}}, 76:052517, November 2007.

\bibitem{friesecke1997pair}
G.~Friesecke.
\newblock Pair correlations and exchange phenomena in the free electron gas.
\newblock {\em Communications in mathematical physics}, 184(1):143--171, 1997.

\bibitem{friesecke2003multiconfiguration}
G.~Friesecke.
\newblock The multiconfiguration equations for atoms and molecules: charge
  quantization and existence of solutions.
\newblock {\em Archive for rational mechanics and analysis}, 169(1):35--71,
  2003.

\bibitem{fuchs2005}
M.~Fuchs, Y.-M. Niquet, X.~Gonze, and K.~Burke.
\newblock {Describing static correlation in bond dissociation by Kohn--Sham
  density functional theory}.
\newblock {\em The Journal of Chemical Physics}, 122(9):094116, 2005.

\bibitem{gilbarg_trudinger}
D.~Gilbarg and N.~S. Trudinger.
\newblock {\em Elliptic partial differential equations of second order}.
\newblock springer, 2015.

\bibitem{Gontier_2014}
D.~Gontier.
\newblock {Existence of minimizers for Kohn{\textendash}Sham within the local
  spin density approximation}.
\newblock {\em Nonlinearity}, 28(1):57--76, 2014.

\bibitem{gontier2020}
D.~Gontier, M.~Lewin, and F.Q. Nazar.
\newblock {The nonlinear Schr\"odinger equation for orthonormal functions: I.
  Existence of ground states}.
\newblock {\em arXiv preprint arXiv:2002.04963}, 2020.

\bibitem{gustafson2003mathematical}
S.J. Gustafson and I.~M. Sigal.
\newblock {\em Mathematical concepts of quantum mechanics}, volume~33.
\newblock Springer, 2003.

\bibitem{hohenbergkohn64}
P.~Hohenberg and W.~Kohn.
\newblock Inhomogeneous electron gas.
\newblock {\em Phys. Rev. (2)}, 136:B864--B871, 1964.

\bibitem{lu_symmetrybreaking}
M.~Holst, H.~Hu, J.~Lu, J.L. Marzuola, D.~Song, and J.~Weare.
\newblock Symmetry breaking in density functional theory due to dirac exchange
  for a hydrogen molecule.
\newblock {\em arXiv preprint arXiv:1902.03497}, 2019.

\bibitem{jackson1987dynamics}
B.~Jackson and H.~Metiu.
\newblock {The dynamics of H2 dissociation on Ni (100): A quantum mechanical
  study of a restricted two-dimensional model}.
\newblock {\em The Journal of chemical physics}, 86(2):1026--1035, 1987.

\bibitem{kohnsham65}
W.~Kohn and L.~J. Sham.
\newblock Self-consistent equations including exchange and correlation effects.
\newblock {\em Phys. Rev. (2)}, 140:A1133--A1138, 1965.

\bibitem{laestadius2019onedimensional}
A.~Laestadius.
\newblock {One-Dimensional Lieb-Oxford Bounds} (arxiv preprint 1910.01925),
  2019.

\bibitem{le1993some}
C.~Le~Bris.
\newblock {Some results on the Thomas-Fermi-Dirac-von Weizs{\"a}cker model}.
\newblock {\em Differential and Integral Equations}, 6(2):337--353, 1993.

\bibitem{levy1979universal}
M.~Levy.
\newblock Universal variational functionals of electron densities, first-order
  density matrices, and natural spin-orbitals and solution of the
  v-representability problem.
\newblock {\em Proceedings of the National Academy of Sciences},
  76(12):6062--6065, 1979.

\bibitem{lieb2002density}
E.~H. Lieb.
\newblock Density functionals for coulomb systems.
\newblock In {\em Inequalities}, pages 269--303. Springer, 2002.

\bibitem{lieb_hls}
E.H. Lieb.
\newblock {Sharp Constants in the Hardy-Littlewood-Sobolev and Related
  Inequalities}.
\newblock {\em Annals of Mathematics}, 118(2):349--374, 1983.

\bibitem{lieb1997thomas}
E.H. Lieb.
\newblock {Thomas-Fermi and related theories of atoms and molecules}.
\newblock In {\em The Stability of Matter: From Atoms to Stars}, pages
  259--297. Springer, 1997.

\bibitem{lions_cc}
P.-L. Lions.
\newblock The concentration-compactness principle in the calculus of
  variations. the locally compact case, part 1.
\newblock {\em Annales de l'I.H.P. Analyse non lin\'eaire}, 1(2):109--145,
  1984.

\bibitem{lions_cc2}
P.-L. Lions.
\newblock The concentration-compactness principle in the calculus of
  variations. the locally compact case, part 2.
\newblock {\em Annales de l'I.H.P. Analyse non lin\'eaire}, 1(4):223--283,
  1984.

\bibitem{lions_hf}
P.-L. Lions.
\newblock {Solutions of Hartree-Fock equations for Coulomb systems}.
\newblock {\em Comm. Math. Phys.}, 109(1):33--97, 1987.

\bibitem{lu_otto}
J.~Lu and F.~Otto.
\newblock {Nonexistence of a Minimizer for Thomas–Fermi–Dirac–von
  Weizsäcker Model}.
\newblock {\em Communications on Pure and Applied Mathematics}, 67, 10 2014.

\bibitem{dft_1d_contact}
R.J. Magyar and K.~Burke.
\newblock Density-functional theory in one dimension for contact-interacting
  fermions.
\newblock {\em Physical Review A}, 70(3):032508, 2004.

\bibitem{Medvedev49}
M.G. Medvedev, I.S. Bushmarinov, J.~Sun, J.P. Perdew, and K.A. Lyssenko.
\newblock Density functional theory is straying from the path toward the exact
  functional.
\newblock {\em Science}, 355(6320):49--52, 2017.

\bibitem{parr_young}
R.G. Parr and W.~Young.
\newblock {\em Density functional theory of atoms and molecules}.
\newblock Oxford University Press, 1989.

\bibitem{pbe}
J.P. Perdew, K.~Burke, and M.~Ernzerhof.
\newblock Generalized gradient approximation made simple.
\newblock {\em Phys. Rev. Lett.}, 77:3865--3868, Oct 1996.

\bibitem{perdew1995}
J.P. Perdew, A.~Savin, and K.~Burke.
\newblock Escaping the symmetry dilemma through a pair-density interpretation
  of spin-density functional theory.
\newblock {\em Physical Review A}, 51(6):4531, 1995.

\bibitem{perdew_wang}
J.P. Perdew and Y.~Wang.
\newblock Accurate and simple analytic representation of the electron-gas
  correlation energy.
\newblock {\em Phys. Rev. B}, 45:13244--13249, Jun 1992.

\bibitem{perdew_zunger}
J.P. Perdew and A.~Zunger.
\newblock Self-interaction correction to density-functional approximations for
  many-electron systems.
\newblock {\em Phys. Rev. B}, 23:5048--5079, May 1981.

\bibitem{reed4}
M.~Reed and B.~Simon.
\newblock {\em {Methods of Modern Mathematical Physics: Analysis of
  Operators}}.
\newblock Number Bd. 4 in {Methods of Modern Mathematical Physics}. {Academic
  Press}, 1978.

\bibitem{slater_x_alpha}
J.~C. Slater.
\newblock A simplification of the hartree-fock method.
\newblock {\em Phys. Rev.}, 81:385--390, Feb 1951.

\bibitem{b3lyp}
P.J. Stephens, F.J. Devlin, C.F. Chabalowski, and M.J. Frisch.
\newblock Ab initio calculation of vibrational absorption and circular
  dichroism spectra using density functional force fields.
\newblock {\em The Journal of Physical Chemistry}, 98(45):11623--11627, 1994.

\bibitem{witthaut_1d_exactsolution}
D.~Witthaut, S.~Mossmann, and H.J. Korsch.
\newblock Bound and resonance states of the nonlinear schr{\"o}dinger equation
  in simple model systems.
\newblock {\em Journal of Physics A: Mathematical and General}, 38(8):1777,
  2005.

\bibitem{Zhang2004DissociationEO}
Y.~Zhang, C.~H. Cheng, J.~Kim, J.~Stanojevic, and E.~Eyler.
\newblock Dissociation energies of molecular hydrogen and the hydrogen
  molecular ion.
\newblock {\em Physical review letters}, 92 20:203003, 2004.

\end{thebibliography}
\nocite{*}

\end{document}